\newtheorem{thm}{Theorem}  
\newtheorem{cor}[thm]{Corollary}  
\newtheorem{lemma}[thm]{Lemma}  
\newtheorem{remark}[thm]{Remark}  
\newtheorem{defn}[thm]{Definition}  
\newtheorem{heuristic}[thm]{Heuristic}
\newtheorem{example}[thm]{Example}  
\numberwithin{thm}{section}  
\def\pf{\noindent\emph{Proof: }}  
\def\stop{\hfill$\square$}  
\providecommand{\lrb}[1]{\ensuremath{\left(#1\right)}}
\providecommand{\abs}[1]{\left\lvert #1\right\rvert}  
\providecommand{\norm}[1]{\left\lVert #1\right\rVert}
\providecommand{\bra}[1]{<#1\vert}
\providecommand{\ket}[1]{ \vert#1>}
\DeclareMathOperator{\prob}{Prob}
\title{Natural probability}  
\author{Brett Parker}
\affil{Mathematical Sciences Institute, The Australian National University}
\begin{document}

\maketitle

\begin{abstract}
How should we model an observer within quantum mechanics or quantum field theory? How can classical physics emerge from a quantum model, and why should classical probability be useful? How can we model a selective measurement entirely within a closed quantum system? This paper sketches a new physical theory of probability based on an attempt to model classical information  within a purely quantum system. 

We model classical information using a version of Zurek's theory of Quantum Darwinism,  with  emphasis on quantum information  encoded using projection operators localised in spacetime. This version of Quantum Darwinism is compatible with quantum field theory, and does not require any artificial division of a quantum system into subsystems. 
 
The main innovation is our attempt to provide a physical explanation of probability. Decoherence is the physical mechanism behind Quantum Darwinism or the `branching of quantum worlds'. Assuming a type of perfect decoherence we construct a conventional probabilistic model for classical information. This, however, is not our theory of natural probability, and does not quite demonstrate the validity of Bayesian reasoning.  Instead, our theory of natural probability arises from careful consideration of errors in decoherence: roughly speaking, we don't observe low probability events because they are swamped by quantum noise.

\end{abstract}
\tableofcontents
\section{Introduction}
How can we square our everyday experience with a quantum description of Nature? In particular, how should we model ourselves and our perceptions within quantum mechanics? 
The semi-classical limit links quantum mechanics with classical mechanics, but many classical aspects of how we perceive reality remain at odds with quantum mechanics. We intuitively assume that there is a single, completely describable, objective state of Nature, and that we could, in theory, discover anything about that state. This assumption is not compatible with quantum mechanics, and can not be derived using a semi-classical limit.
%We have some evidence for this intuitive assumption:  we find that we can  discover more details without contradicting our previous knowledge, and we also find that independent observers tend to agree with each other.
 The semiclassical limit also does not explain another important aspect of our observed classical reality; namely probability theory.  Bayesian reasoning is useful, and we ultimately rest empirical science on the assumption that usual probability theory is valid.  The goal of this paper is to provide a physical explanation for probability. 

The mathematical formulation of probability features an outcome space $X$, a $\sigma$-algebra of events, $E\subset X$, and a probability measure $\mu$ assigning a probability $\mu(E)\in[0,1]$ to each event. This serves as a classical model for information, with logical operations reflected in the algebra structure of events; for example,  the logical combination `$E_1$ and $E_2$' is represented by $E_1\cap E_2$. Starting with a quantum model of Nature, we approximate such a classical model and show that classical logic, probability and Bayesian reasoning emerge from our quantum model. We do more than this.  Our emergent model for classical information has more structure than usual probability theory. As well as a probability, each event has a quantitative measure of how visible it is in each region of spacetime, and we also have an extra notion of an event being `classical' and visible in many spatially separated regions of spacetime. We thus arrive at a new theory of natural probability, where a version of the statement `you will never perceive a low probability event' is actually valid, instead of being a setup for a common fallacy starring an archer who paints a target after shooting his arrow.

We assume a quantum model of Nature, with a notion of observables being localised in a spacetime region, and we also assume standard causality axioms; quantum field theories --- our best verified models of Nature --- have this feature. We do not assume any version of the controversial\footnote{Apart from violating causality, a problem with wavefunction collapse is that it is a separate process, without a well-defined specification of exactly when or where it happens. This makes a quantum theory with wavefunction collapse an inexact theory, even though, when used with judgement and good taste,  it is Fine For All Practical Purposes; \cite{Bell_1990}.  A similar objection applies to DeWitt's overstated `multiple world' popularisation of Everett's relative state intepretation of quantum mechanics; \cite{DeWitt}.  See  \cite{evidence} for a survey of the experimental non-evidence of wavefunction collapse as a physical process.} wavefunction--collapse postulate, so must work hard to reproduce the undeniable experimental consequences of selective measurements.  Within such a quantum model of Nature, we model classical events  using certain projection operators, called classical projection-statements; see Definition \ref{classical def}. So, we model all classical information and all classical objects --- including classical observers, experimental setups, and any unfortunate cats ---  as classical projection-statements. These classical projection-statements have the property that they are visible in many separate regions of spacetime, just as classical information can be discovered in many separate regions of spacetime. The thesis we propose is as follows: \emph{our perceptions of reality are described by classical projection-statements}.    From this extra assumption, we explain how Bayesian reasoning emerges as approximately valid, sketching our new theory of natural probability.

 This theory should be regarded as an extension of Everett's relative state formulation of quantum mechanics, \cite{Everett}, and Zurek's Quantum Darwinism, \cite{QD}, with the main new content our treatment of probability. The role of probability in no-collapse quantum mechanics has been the subject of much research. There are many beautiful approaches to deriving the Born--Vaidman rule using symmetry \cite{Deutsch, WallaceBorn, Zurek_2005,Sebens_2018}, and like Gleason's original derivation of the Born rule, \cite{Gle}, these generally find that the quantum situation is nicer than classical probability, because reasonable constraints lead to a unique probability measure. Our focus and approach is different. We are instead asking \emph{how can classical probability emerge from a quantum model, and is the usual mathematical formuation of probability the best formulation of what emerges?} Our approach is  based on observing the essentially \emph{approximate} nature of the decoherence behind Quantum Darwinism and the `branching of worlds' in Everettian quantum mechanics.  As such, natural probability could be regarded as a version of Hanson's suggestion, \cite{Hanson_2003}, that observed frequencies  in a multi-world quantum model can be explained by positing that there are no observers in worlds with small measure, because these small worlds are `mangled' by inexact decoherence. 

\

In Section \ref{projection-statements}, we define a projection-statement to be an orthogonal projection localised in a specified region of spacetime. The visibility of a projection-statement depends on the quantum state $\psi$ of Nature. In particular, a projection-statement $P$ is visible in a spacetime region $U$, if there is a projection-statement $P'$ such that $P'\psi$ approximates $P\psi$. So, visibility measures some aspect of entanglement. We  define a classical projection-statement as a projection-statement, localised in U,  that is also visible in many spatially separated regions in the causal future of $U$.  This unfortunately imprecise definition is tolerable for describing an emergent property, because we are not suggesting any modification of physics that depends on this definition.

\
%
%A rough sketch of natural probability is as follows.
%\begin{itemize}\item Nature is encoded by a vector $\psi$ in Hilbert space $\mathcal H$ with $\norm{\psi}=1$. \footnote{We choose a Hilbert space formulation because it is the easiest to understand. Quantum field theory can also be formulated this way using the Gelfand-Naimark-Segal construction. 
% }
% \item We distinguish a collection of orthogonal projections $P$ on $\mathcal H$, and  call them projection-statements. Moreover,  we regard each projection-statement $P$ as localized in some space-time region
%  $U_P$, so two projection-statements can be spatially-separated, time-separated, or neither.  We assume that spatially-separated projection-statements commute. 
%
%\item Given a projection-statement $P$, and a space-time region $U$, say that $P$ is visible in $U$ if there is a projection-statement $P'$ localized within $U$ so that 
%\[\frac{\norm{(P' -P)\psi}}{\norm{P\psi}}\ll 1\]
%
%
%\item We propose that some projection-statements have the property of being classical. Roughly, a classical
%%\footnote{{This notion of being classical  is  influenced by Zurek's existential  interpretation of quantum mechanics \cite{Zurekexistential},  and the related ideas of Quantum Darwinism, a subject that studies ways in which the environment can encode quantum information through entanglement \cite{QD}.}} 
%projection-statement $P$ is one that is visible in many separate regions of space-time in the causal future of the region $U_P$ where $P$ is localised. 
%
 We regard a projection-statement  as `something one could say about Nature', without insisting it be `really' true or false, and regard a classical projection-statement as `something one could say about Nature that has many consequences, and that many separate observers could discover'.  
% We describe how to assign conditional probabilities to logical combinations of projection-statements in 
%
%Section \ref{logical combination}, and in Section \ref{approximate},  demonstrate that, for classical projection-statements, these conditional probabilities approximate usual probability theory. 
% 
% With that said,  classical projection-statements are not `really' true or false in the usual sense, and certainly don't `happen' or describe a separate dynamical process.  The thesis of this paper is that \emph{the events of our everyday classical experience are described by classical projection-statements}, and that the validity of probabilistic  reasoning about these events --- including our intuition that low probability events should be hard to detect --- follows from this assumption. 
%
%
%
%
%We shall  argue that our perception of a unique objective classical reality is a convenient fiction, accounted for by requiring that  our state of mind be described using classical projection-statements. The unfamiliar ontology of our setup should be emphasized: A nontrivial  classical projection-statement $P$ is neither absolutely true nor false, however another classical projection-statement $O$ can imply $P$, and if $O$ describes the state of mind of an observer, we can think that $O$ `knows' $P$.
%
In Section \ref{logical combination} we explain how to associate operators and `relative probabilities' to logical combinations of projection-statements, so long as these projection-statements  can be time-ordered. In this formalism, if the projection-statement $O$ is in the future of $P$, we can say that $O$ `knows' $P$ if $OP\psi\approx O\psi$. Similarly,  if $O$ is in the past of $P$, we can say that $O$ can predict $P$ if $PO\psi\approx O\psi$.  For arbitrary projection-statements, this is a bad model of logic, however, we will show, in sections \ref{precise} and \ref{approximate}, that we get an approximate model for logic and probability when we restrict to using classical projection-statements.

Sections \ref{precise} and \ref{approximate} introduce our model for classical probablistic reasoning with classical projection-statements. If some collection of classical projection-statements have a collection of spatially separated records $P'_i$, there is a Boolean algebra $\mathfrak B$ of projection-statements constructed from logical operations in $P_i'$. This Boolean algebra also has the structure of a $\mathcal H$--valued measure algebra, where $\mathcal H$ indicates the Hilbert space\footnote{In the case of a quantum field theory, we use the GNS representation \cite[Section 3.2]{AQFT} to construct an appropriate Hilbert space $\mathcal H$  and state $\psi\in\mathcal H$ from a state of the quantum field theory.} of our quantum theory. This $\mathcal H$--valued measure algebra is defined by a map $\phi:\mathfrak B\longrightarrow \mathcal H$ sending $P_i'$ to $P'_i\psi\in\mathcal H$. Such $\mathcal H$--valued measure algebras (defined in Section \ref{ma section}) are an intermediate notion between projection-valued measures and classical probability spaces. The important point is that the $\mathcal H$--valued measure algebra $(\mathfrak B,\phi)$ is approximately independent of the choice of records $P_i'$; as shown in Corollary  \ref{totally classical BA} and Heuristics \ref{separated record replacement} and \ref{algebra record replacement}.  Moreover, Lemma \ref{space from algebra} provides a canonical construction of a probability space from a $\mathcal H$--valued measure algebra. So, we obtain an approximately canonical probabilistic model for logical operations in projection-statements.

In Section \ref{precise}, we show how classical information can emerge if we assume the existence of totally classical projection-statements.   In particular, we construct a canonical probability space $(X,\mu)$ together with a collection of  $\sigma$--algebras $\mathfrak F_O$ representing classical information obtainable in open subsets $O$ of spacetime.  We don't actually expect totally classical projection-statements to exist, so in Section \ref{approximate}, we show how to obtain approximations to this situation, in the form of classical $\mathcal H$--valued measure algebras.

In Section \ref{Decoherence}, we sketch how decoherence can produce classical projection-statements, and arrive at some heuristics about what we can realistically expect from classical projection-statements. It is not true that every conceivable quantum mechanical system will have classical projection-statements, so we only argue that some systems have classical projection-statements, and give some simplified examples of how they can occur. An important point of Section \ref{Decoherence} is that decoherence in realistic systems is not perfect. On the one hand, this is inconvenient as it makes precise mathematical definitions and arguments more difficult, but  this deficit in decoherence forms the basis for our theory of natural probability.

 In Section \ref{observer}, we 
explain how to model  observers and how observers can detect classical projection-statements. Crucially, we can only expect a projection-statement $P$ to be \emph{approximately} visible to an observer. An observer may have direct access only to $P'$ where $\norm{(P-P')\psi}$ is small. For the projection-statement $P'$ to meaningfully encode information about $P$, we require that this error $\norm{(P-P')\psi}$ is small compared to $\norm{P\psi}$. From heuristic arguments in Section \ref{Decoherence}, we expect a lower bound $\norm{(P-P')\psi}\geq p_{0}$ where $p_{0}$ depends on physical properties of the observer and $P$ independent of $\norm{P\psi}$. So, for $P$ to be visible to our observer we require $ \norm{P\psi}\gg p_{0}$. Accordingly, for $P$ to be a classical projection-statement  visible in many separate regions of spacetime, we need $\norm{P\psi}$ not too small. 

In Section \ref{NP section}, we outline our theory of natural probability, and explain how low-probability classical projection-statements are harder to detect, and require more information to reliably describe.  In Section \ref{measurement}, we describe how the process of quantum measurement looks from our perspective, carefully examining the question of when a density matrix can be regarded as a statistical ensemble of states. We also  compare our approach to several popular interpretations of quantum mechanics, finding that most have partial validity. We pay special attention to the consistent or decohered histories approach advocated by Griffith, Omn\`es, and Gell-Mann and Hartle, and to Zurek's theory of Quantum Darwinism.

\

\

\section{ Projection-statements}

\label{projection-statements}

\

In this paper, we assume\footnote{A complete description of Nature should also involve gravity, so this assumption is false. Nevertheless, we could hope that it is approximately true, and that there is a notion of localised projection-statements and information about their relative positions within a complete quantum theory of Nature. See Section \ref{gravity} for further discussion.} a fixed spacetime, such as Minkowski space. We also assume a quantum model of Nature, featuring a Hilbert space $\mathcal H$ and a unit vector $\psi\in\mathcal H$ describing the state of nature. As standard in quantum field theory, we assume a connection between spacetime and $\mathcal H$ is provided by a notion of local observables, so we have a notion of an observable being localised in a particular spacetime region. 
 
Our projection-statements are meant to play the role of `local, and physically detectible, things one could say about Nature'.  Quantum field theory features idealized  local observables  $O(x,t)$  that commute when spatially-separated. To obtain an observable $O$, we can smear out $O(x,t)$ over some spatial region $U$.  An example of a projection-statement localized in $U$ is the projection operator corresponding to a range of values of $O$. Such localized projection-statements commute when they are spatially-separated.  For simplicity, we do not work directly with quantum field theory in this paper, but instead we assume that our quantum system has some comparable notion of a projection-statement $P$ localised in a region $U_P$. One tractable  example of such a quantum system is a lattice approximation of quantum field theory.  In general, any algebraic quantum field theory can also be formulated in a Hilbert space using the GNS representation \cite[section 3.2]{AQFT}.

\begin{defn}A projection-statement $P$ is a projection operator $P$ on $\mathcal H$ together with a spacetime region $U_P$ where $P$ is localised.
\end{defn}

We will use standard definitions of the causal relations between subsets of spacetime, and apply these to projection-statements.
\begin{defn} Say that a projection-statement $P$ is spatially separated from $Q$ if every point in the region $U_P$ where $P$ is localised is spatially separated from every point in the region $U_Q$ where $Q$ is localised. Similarly, say that $P$ is in the (causal) future of $Q$ if every point in $U_P$ is in the (causal) future of all points of $U_Q$. \end{defn}

We assume the following causality axioms for projection-statements, analogous to standard axioms in algebraic quantum field theory.
\begin{itemize}
\item[\bf{Isotony}] If $U'\subset U$ and $P$ is localised in $U'$, then it can  also be localised in $U$.
\item[\bf{Einstein causality}] Whenever $P$ and $Q$ are spatially separated, the projection operators $P$ and $Q$ commute.
\item[\bf{Time slice}] If $U'$ contains a Cauchy surface\footnote{A Cauchy surface $S$ for $U_P$ is a space-like surface with the property that every inextensible causal curve in $U_P$ intersects $S$ exactly once.} for $U_P$,   then $P$ can also be localised in $U'$.
\item[\bf{Algebraic operations}] If $P$ is localised in $U$, then the complementary projection $1-P$ is also localised in $U$. Moreover, if $P$ and $Q$ commute and are both localised in $U$, then the projection $PQ$ is also localised in $U$.

\end{itemize}

\begin{example}\label{QFT} Algebraic field theory, \cite{AQFT}, provides physically realistic examples of quantum systems with projection-statements satisfying the above axioms. An algebraic field theory features a unital $*$--algebra $\mathcal A$, analogous to the algebra of bounded operators on $\mathcal H$. As part of the structure, we also have  sub-algebras $\mathcal A(U)\subset \mathcal A$ associated to open subsets $U$ of spacetime, satisfying appropriate  causality axioms. Within algebraic field theory, a state is a linear functional \[\sigma:\mathcal A\longrightarrow \mathbb C\] that is semi-positive in the sense that $\sigma(A^*A)\geq 0$, and normalised so that $\sigma(1)=1$. So,  $\sigma(A)$ is analogous to $\bra\psi A\ket \psi$, and $\sqrt{\sigma(A^*A)}$ plays the role of $\norm{A\psi}$. The notion of an  orthogonal projection in an algebraic field theory is an element  $P\in\mathcal A$ that satisfies $P=P^*$ and $P^2=P$, and we can say that this projection is localised in $U$ if $P\in \mathcal A(U)$.

Given a state $\sigma$, the Gelfand-Naimark-Segal construction  gives a state $\psi$ in a Hilbert space $\mathcal H$, and a representation of $\mathcal A$ as an algebra of linear operators on $\mathcal H$ such that $\sigma(A)=\bra\psi A\ket \psi$;  \cite[section 3.2]{AQFT}. So, we don't lose generality by working in a Hilbert space model. 
\end{example}

\begin{defn} \label{visible} Say that a projection-statement $P$ is visible in a region $U$ if there is a projection-statement $P'$ localized in $U$ so that 
\[\frac{\norm{ (P-P')\psi}}{\norm{P\psi}}\ll 1\ \ .\]
In the case that $U$ is in the causal future of $U_P$, say that $P'$ is a record of $P$.

Define the visibility of $P$ in $U$ to be
\[\text{Vis}_U(P):=\begin{cases}\sup_{P'}\ln\norm{P\psi}-\ln\norm{(P-P')\psi} &\text{ if }P\psi\neq 0
\\ 0 &\text{ if }P\psi=0\end{cases}\]
where the supremum is taken over projection-statements $P'$ supported in $U$.

\end{defn}

So, $P$ is visible in $U$ if there is a projection-statement $P'$ localised in $U$ so that $P\psi\approx P'\psi$. The visibility of $P$ in $U$ takes values in $[0,\infty]$  with higher numbers indicating more visibility. If we image observers using particular sensors, we could also restrict Definition \ref{visible} to only use projection statements $P'$ of a particular type, determined by the sensors used. As it is, we stick to Definition \ref{visible} because this weaker version is all that is needed for the approximate validity of usual probability theory. 

If a projection-statement is visible in a space-time region $U$, we can think that it is discoverable by observation within $U$. We expect observers themselves to be local, so it reasonable to focus attention on such locally-discoverable information.  As emphasized by Zurek in \cite{QD, Zurekexistential}, one property of classical events is the following: they can be independently discovered by many separate observers. We shall loosely
   define a classical projection-statement $P$ as follows:

\begin{defn}\label{classical def} A classical projection-statement $P$ is a projection-statement that
     is visible in many spatially separated regions within the causal future of $P$.  \end{defn}
So, a classical projection-statement is one with many spatially separated records. The records themselves will also tend to be classical projection-statements. 

\begin{remark}The reader will note that a projection-statement being visible in $U$ is an approximate property, which can hold to a greater or lesser degree.  Our definition of being classical is also approximate, but significantly vaguer, because we need to choose `many' different regions to measure the visibility of a projection-statement.   It is not necessary for an emergent property to have a precise, easy-to-state   definition. We are not proposing any separate fundamental physical process that depends on projection-statements being classical, just as we are not suggesting that there is a sharply defined set of classical projection-statements that are `really true'. This is in contrast to an intuitive conception of a unique objective classical reality. 

In section \ref{precise}, we see the convenient consequences of more precise (but unrealistic) versions of Definitions \ref{visible} and \ref{classical def}.
\end{remark}

\subsection{Logical combinations of projection-statements}
\label{logical combination}

If we consider  projection-statements as `things we can say about Nature', we need a way of interpreting logical combinations of projection-statements, ideally in the form of projection-statements.

\begin{defn} For a projection-statement $P$, define `not $P$' to be the projection-statement 
\[\neg P:=1-P\]
where $1$ indicates the identity operator on $\mathcal H$. This projection-statements is localised in the same region that $P$ is localised, so $U_P=U_{\neg P}$.
\end{defn}

\begin{defn} If $P$ and $Q$ are spatially separated or $P$ is in the future of $Q$, define `$P$ and $Q$' as the time-ordered composition of $P$ and $Q$. 
\[P\wedge Q=Q\wedge P:=PQ\]
In the case that $P$ and $Q$ are spatially separated, define the projection-statement $P\wedge Q$ as the above operator localised in $U_P\cup U_Q$.
\end{defn}

If $P$ and $Q$ are not spatially separated, they need not commute, so $P Q$ need not be an orthogonal projection, and is hence not a projection-statement.\footnote{In the case that $P$ and $Q$ don't commute, the reader may be tempted to suggest defining $P\wedge Q$ as projection to the linear subspace given by the intersection of the ranges of $P$ and $Q$. This definition is not suitable for our purposes as it is not stable under perturbations of $P$ and $Q$.} However, if $Q$ is recorded in a projection-statement $Q'$ spatially separated from $P$,   Lemma \ref{replacement} implies that $P\wedge Q\psi\approx P\wedge Q'\psi$, so we can use the projection-statement $P\wedge Q'$ as an approximation of  the logical statement `$P$ and $Q$'. Similarly, if $P$ and $Q$ are classical projection-statements, with spatially separated records $P'$ and $Q'$, the projection-statement $P'\wedge Q'$ is an approximate notion of `$P$ and $Q$', even if $P$ and $Q$ are localised in the same region. For classical projection-statements $P$ and $Q$, we will see that this replacement is largely independent of the choice of records, $P'$ and $Q'$; see Lemmas \ref{exact replacement} and \ref{separate representatives}, Corollary \ref{totally classical BA}, and Heuristics \ref{separated record replacement} and \ref{algebra record replacement}.

Every logical operation can be written as a combination of $\neg$ and $\wedge$. For example `$P$ or $Q$' is 
\[ P\vee Q=\neg(\neg P \wedge \neg Q)\ \]
and, when $P$ is in the future of $Q$, or spatially separated from $Q$, we identify this with the operator
\[Q\vee P=P\vee Q=1-(1-P)(1-Q)=P+Q-PQ\]

\begin{defn} Suppose that we have a logical combination of projection-statements $P_i$, and these projection-statements are mutually time separated or spatially separated. Associate an operator $E$  by writing this logical combination using $\neg $ and $\wedge$, then time-ordering the resulting expression in the $P_i$ so that $P_i$ occurs to the left of $P_j$ whenever $P_i$ is in the future of $P_j$.
\end{defn}

So, we can associate an operator $E$ to a logical combination of projection-statements $P_i$ if they are mutually time separated or spatially separated. If the original projection-statements are spatially separated, then $E$ is an orthogonal projection, and hence is also a projection-statement, localised in the region $\bigcup_i U_{P_i}$. If we have a logical combination of classical projection-statements with spatially separated records $P_i'$, we can also associate the projection-statement constructed from the logical combination of the records $P_i'$. We shall see that this projection-statement is approximately independent of the choice of these records, with the caveat that the approximation is worse for more complicated logical combinations. 

We will assign `probabilities' to logical combinations of projection-statements, with scare quotes indicating that, for now, these are just numbers without any particular interpretation.
\begin{defn} Define
\[\prob(P):=\norm{P\psi}^2\]
and
\[\prob (P\vert Q):=\frac{\norm{P\wedge Q \psi}^2}{\norm{Q\psi}^2}\ .\]
Similarly, if $E$ is a logical combination of projection-statements, define
\[\prob(E):=\norm{E\psi}^2\]
\end{defn}

The `probabilities' assigned to logical combinations of time-separated projection-statements have no reason to obey the usual laws of probabilities. Nevertheless,  we will show that this is often approximately true if $P_i$ are classical projection-statements;  see Lemma \ref{exact replacement} and Heuristics \ref{separated record replacement} and \ref{algebra record replacement}. We will also show that probabilities obtained using time-separated records of classical projection-statements are approximately independent of the records chosen.

Even if a quantum system has a notion of localized projection-statements, there is no reason to believe that classical projection-statements will always exist.   Section \ref{Decoherence} gives a heuristic argument that decoherence processes  can produce classical projection-statements. If such decoherence happens because of interactions with particles traveling at the speed of light, we can expect such a classical projection-statement $P$ to be visible in many different regions within the  future of $P$, including some connected to $U_P$ by light rays. For such  classical projection-statements, the geometry of 4-dimensional Minkowski spacetime ensures that we can expect the hypotheses of the following lemma to often apply.  With this lemma, we shall  see that Bayesian reasoning about such classical  projection-statements is approximately valid, because logical expressions in these projection-statements can be approximated by logical expressions in commuting projection-statements $P_i'$.

\begin{lemma}\label{replacement} Suppose that the projection-statements  $P_{1},\dotsc, P_{n}$ and $P_{1}',\dotsc,P_{n}'$  have the following properties:
\begin{enumerate}
\item \[\norm{ P_{i}\psi- P_{i}'\psi}\leq\epsilon_{i}\]
\item  for all $j>i$, the projection-statement $ P_{j}$ is in the past of $P_{i}$, or $P_i$ is spatially separated from $ P_{j}$;
\item for all $j>i$  the projection--statement  $ P'_{j}$ is spatially separated from both $ P_{i}$ and $ P'_{i}$.
\end{enumerate}
For any logical expression $E$ in the projection-statements $ P_{i}$, let $E'$ be the corresponding logical expression in the replacement projection-statements $ P_{i}'$. Then
\[\norm{(E-E')\psi}\leq 2^{n}\sum_{i}\epsilon_{i}\ .\]
Moreover, a sharper bound can be obtained as follows: Write $E=\sum_j E_j$, where each $E_j$ is a time-ordered product of terms in the form $P_i$ or $(1-P_i)$. Then, let $m_i$ be the number of times $P_i$ appears in this expansion of $E$. Our sharper bound is 
\[\norm{(E-E')\psi}\leq \sum_{i} m_{i}\epsilon_{i} \ .\]
\end{lemma}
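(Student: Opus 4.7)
The plan is to prove both bounds by a single telescoping argument that replaces the $P_i$ by their records $P_i'$ one index at a time, in decreasing order of $i$; the sharper bound $\sum_i m_i \epsilon_i$ emerges directly, and the weaker bound $2^n \sum_i \epsilon_i$ follows because the expansion $E = \sum_j E_j$ can always be taken with at most $2^n$ minterms, forcing $m_i \leq 2^n$.

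First I would introduce hybrid operators $E^{(k)}$ for $k = 0, 1, \dotsc, n$, obtained from $E$ by replacing every $P_i$ with $P_i'$ for $i > n - k$, so that $E^{(0)} = E$ and $E^{(n)} = E'$. By the triangle inequality,
\[
\norm{(E - E')\psi} \leq \sum_{k=1}^n \norm{(E^{(k-1)} - E^{(k)})\psi},
\]
and the task reduces to showing $\norm{(E^{(k-1)} - E^{(k)})\psi} \leq m_\ell \epsilon_\ell$ with $\ell = n - k + 1$.

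For this single-step bound I would expand $E^{(k-1)} - E^{(k)} = \sum_j (E^{(k-1)}_j - E^{(k)}_j)$ and observe that any term $E_j$ in which no factor involving $P_\ell$ appears contributes zero, so the sum is effectively over the $m_\ell$ terms where such a factor is present. Using hypothesis (2) to time-order each contributing $E^{(k-1)}_j$ with smaller indices on the left, and collapsing repeated index-$\ell$ factors via $P_\ell^2 = P_\ell$ (while $P_\ell(1-P_\ell)=0$ kills any term of mixed polarity), each such term has the form $L_j\, Q_\ell\, R_j$ where $L_j$ is a product of unprimed projections of index $< \ell$, $Q_\ell \in \{P_\ell, 1 - P_\ell\}$, and $R_j$ is a product of primed projections $P_i'$ of index $> \ell$; the corresponding $E^{(k)}_j$ differs only in $Q_\ell$ being replaced by the matching $Q_\ell' \in \{P_\ell', 1 - P_\ell'\}$, so
\[
(E^{(k-1)}_j - E^{(k)}_j)\psi = L_j (Q_\ell - Q_\ell') R_j\, \psi.
\]
The crucial step is to commute $R_j$ past $Q_\ell - Q_\ell'$: by hypothesis (3) every $P_i'$ in $R_j$ (with $i > \ell$) is spatially separated from both $P_\ell$ and $P_\ell'$, hence commutes with $Q_\ell$, $Q_\ell'$, and their difference. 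So $L_j (Q_\ell - Q_\ell') R_j\psi = L_j R_j (Q_\ell - Q_\ell')\psi$, and since $L_j R_j$ is a product of orthogonal projections (hence of operator norm at most $1$) while $\norm{(Q_\ell - Q_\ell')\psi} = \norm{(P_\ell - P_\ell')\psi} \leq \epsilon_\ell$, we obtain $\norm{(E^{(k-1)}_j - E^{(k)}_j)\psi} \leq \epsilon_\ell$. Summing over the $m_\ell$ contributing $j$ gives the single-step bound, and summing over $k$ from $1$ to $n$ delivers $\norm{(E - E')\psi} \leq \sum_i m_i \epsilon_i$.

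The main obstacle I expect is the bookkeeping around repeated index-$\ell$ factors: one must justify that all index-$\ell$ factors in a time-ordered product sit adjacent, so that $P_\ell^2 = P_\ell$ really does reduce each $E^{(k-1)}_j$ to a single $Q_\ell$. If this collapse is unavailable in a given expansion, the same estimate can still be salvaged by running a second inner telescoping over the individual $P_\ell$-factors within $E^{(k-1)}_j$, reusing the hypothesis (3) commutativity with $R_j$ at each inner step; each per-factor contribution remains $\epsilon_\ell$ and the total is still $\sum_i m_i \epsilon_i$, with the weaker $2^n \sum_i \epsilon_i$ bound an immediate consequence.
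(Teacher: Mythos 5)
Your proposal is correct and follows essentially the same route as the paper: both arguments telescope the replacement of $P_i$ by $P_i'$ into a sum of terms of the form (unprimed prefix)$\,(P_\ell-P_\ell')\,$(primed suffix), use hypothesis (3) to commute the primed suffix past $(P_\ell-P_\ell')$ so that the error vector $(P_\ell-P_\ell')\psi$ is hit only by products of projections of operator norm at most $1$, and then count occurrences to get $\sum_i m_i\epsilon_i$, with $2^n\sum_i\epsilon_i$ from the at-most-$2^n$ minterms. The only cosmetic difference is that you hybridize index-by-index across the whole expression while the paper telescopes within each monomial, and your worry about repeated index-$\ell$ factors does not arise in the paper's canonical minterm expansion (each index appears at most once per term), though your fallback handles it correctly anyway.
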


 \begin{figure}[htb]
\includegraphics[width=\textwidth]{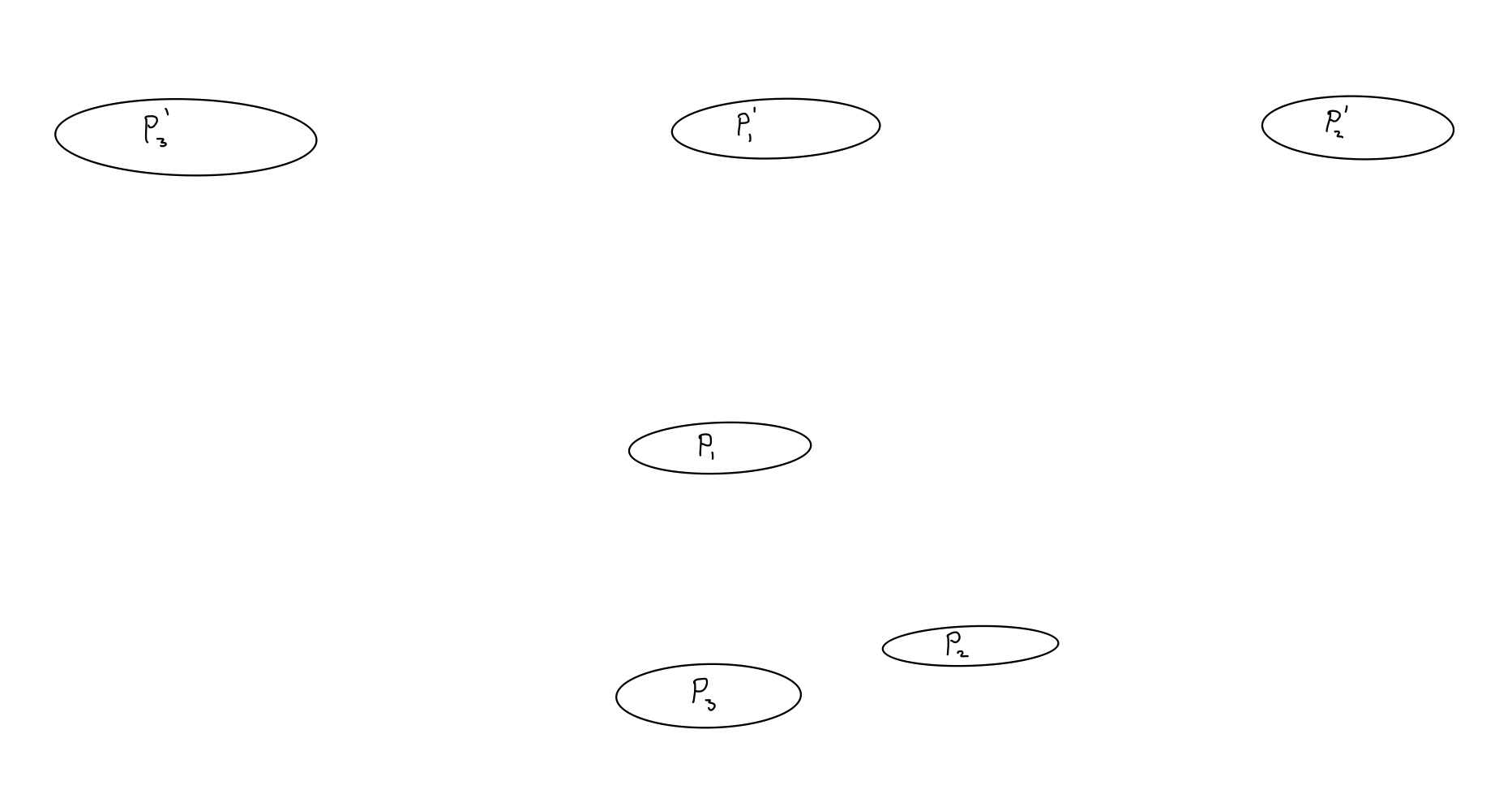} 
\caption{Spacetime location of projection-statements satisfying the requirements of Lemma \ref{replacement}. The vertical axis is time, and we use coordinates in which the speed of light is 1. Note that there is more space to satisfy the requirements of Lemma \ref{replacement} in 4-dimensional Minkowski space, when compared to this 2-dimensional picture. } 
\label{fig1}
\end{figure}

\pf
Suppose first that $E=\prod_{i\in I} P_{i}$.
Then we can expand $\norm{(E-E')\psi}$ as follows.
\begin{equation}\label{E estimate}\begin{split}\norm{(E-E')\psi}&=\norm{\sum_{j\in I}\lrb{\prod_{i<j,\  i\in I} P_{i}}( P_{j}- P_{j}')\lrb{\prod_{k>j,\ k\in I} P_{k}'} \psi} 
\\ &= \norm{\sum_{j\in I}\lrb{\prod_{i<j,\  i\in I} P_{i}}\lrb{\prod_{k>j,\ k\in I} P_{k}'}( P_{j}- P_{j}') \psi} 
\\ & \leq \sum_{j\in I}\epsilon_{j}\end{split}\end{equation}
The second line is allowed because $P_{k}'$ is spatially separated from both $P_j$ and $P_j'$. The same inequality holds if we replace some of the $ P_{i}$ by $(1- P_{i})$. More complicated logical expressions $E$ may be written (non-uniquely) as a sum of terms in this form, so for these $E$, 
\[\norm{(E-E')\psi}\leq \sum m_{i}\epsilon_{i} \]
where $ P_{i}$ or $(1- P_{i})$ appears in $m_{i}$ terms in our sum expressing $E$. Any logical expression $E$ in the $ P_{i}$ can be written uniquely as a sum of terms, each involving either $ P_{i}$ or $(1- P_{i})$ for all $i$ between $1$ and $n$, and there are $2^{n}$ possible such terms, so a worst-case bound for $\norm{(E-E')\psi}$ is $2^{n}\sum_{i}\epsilon_{i}$.

\stop

\begin{remark}In deriving inequality (\ref{E estimate}), we used that $ P'_{k}$ and $ P_{i}$ are  projections, and hence have norm bounded by $1$. In many situations, applying these projections will significantly reduce the size of the error term,  $( P_{i}- P'_{i})\psi$.  Accordingly, the actual expected error could be significantly smaller than the upper bound from (\ref{E estimate}).
\end{remark}
\begin{remark} The assumption that $P_k'$ is spatially separated from $P_j$ for $k>j$ is a strong assumption, motivated by imagining that $P_k$ is visible in some regions connected to $U_{P_k}$ by light rays; see Figure \ref{fig1}. The conclusions of Lemma \ref{replacement} still hold if we replace this assumption by the assumption that \[\norm{(P_j-P_j')\left(\prod_{k>j}P_k'\right)\psi}\leq \epsilon_j \norm{\left(\prod_{k>j} P_k'\right)\psi} \ .\] This alternate assumption could be justified by imagining that $P_j'$ encodes an observation that provides robust evidence for $P_j$.
\end{remark}

The importance of Lemma \ref{replacement} is the following. If we had that $\norm{ (E-E')\psi}=0$ for all logical statements $E$ in $\{P_{i}\}$, and corresponding logical statements $E'$ in $\{P'_{i}\}$, then we would also have that $\prob (E)=\prob (E')$. The projection-statements $P_{i}'$ are spatially-separated, and therefore commute, so the numbers $\prob(E')$ obey the usual rules of probability theory, and the same must therefore hold for $\prob (E)$. So,  there would be a space $X$ with a probability measure $\mu$ and  with subsets $X_{E}$ so that $X_{E_{1}\wedge E_{2}}=X_{E_{1}}\cap X_{E_{2}}$, $X_{\neg E}=X\setminus X_{E}$, and $\prob(E)=\mu(X_{E})$. So,  Lemma \ref{replacement} is telling us that the probabilities we assign  approximately obey normal probability theory. 

 The deviation from classical probabilistic reasoning is more serious when dealing with projection-statements  that are less probable: For example suppose that $\prob(B\vert A)=1$ and $\prob(C\vert B)=1$, then classically, one would expect that $\prob(C\vert A)=1$. However,  if all we know is that logical statements in $\{A,B,C\}$ approximately obey normal probability with error $\epsilon$, all one can infer is that $\prob(C\vert A)$ is somewhere between $1$ and approximately $1-2\epsilon/\prob(A)$. So, predictions using $A$ become worse when $\prob A\ll 1$.
This issue persists even if we assume a closer approximation of classical probability.  
 Even in the case that $\abs{\prob E-\prob E'}<\epsilon \prob E$,  the possible deviation from the conclusion  $\prob(C\vert A)=1$ is approximately  $(1+\prob(B)/\prob(A))\epsilon$, and this deviation is worse  when $\prob(A)$ is small. To reiterate:  even if $A$ implies $B$ and $B$ implies $C$, we can not reliably conclude that $A$ implies $C$ when $\prob(A) \ll \prob (B)$.

 \
 
If $P$ is a record of a classical projection-statement $P_0$, we can expect that $P$ is also visible in many regions spatially separated from $P$. Suppose we pick a Cauchy surface to have a notion of simultaneous projection statements, and choose some collection of classical projection-statements with records on this Cauchy surface. The next lemma can be used to show that logical combinations of these records are approximately independent of the records chosen.   This will be important for concluding that we can assign probabilities using records of classical projection-statements, with these probabilities approximately independent of the records chosen.

 \begin{lemma} \label{record replacement}Suppose that $\norm{(P_i-P_i')\psi}<\epsilon_i$, and that $P_i'$ is spatially separated from $P_j$ and $P_j'$ for all $j\neq i$. Then
 \[\norm{(P_1\dotsc P_n-P_1'\dotsc P_n')\psi}\leq\epsilon_1+\dotsc +\epsilon_n\]
 and, for any permutation $\sigma$ of $\{1,\dotsc, n\}$
 \[\norm{(\prod_i P_i-\prod_i P_{\sigma(i)})\psi}\leq 2\sum_i\epsilon_i\]
\end{lemma}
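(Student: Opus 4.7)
The plan is to establish the first inequality by a carefully arranged telescoping sum. The subtlety is that we have no commutation assumption among the unprimed $P_i$, so I would decompose
\[P_1\dotsc P_n - P_1'\dotsc P_n' = \sum_{j=1}^n P_1\dotsc P_{j-1}\,(P_j - P_j')\,P_{j+1}'\dotsc P_n',\]
keeping the unprimed factors to the left of each difference and the primed factors to the right. The factors $P_1,\dotsc,P_{j-1}$ are projections, so they can be discarded without increasing norms, bounding the $j$-th term by $\norm{(P_j - P_j')P_{j+1}'\dotsc P_n'\psi}$. The hypothesis then tells us that $P_k'$ is spatially separated from both $P_j$ and $P_j'$ whenever $k\neq j$, so by Einstein causality $P_k'$ commutes with $P_j - P_j'$, and I may slide the difference to the right past the primed factors, obtaining the bound $\norm{P_{j+1}'\dotsc P_n'(P_j - P_j')\psi}\leq \epsilon_j$. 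Summing over $j$ and applying the triangle inequality gives the first inequality, and the argument is symmetric in the indices, so it works for any ordering of the factors.

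For the second inequality, I would exploit that $P_i'$ and $P_j'$ are mutually spatially separated for all $i\neq j$, and hence pairwise commuting. Thus $\prod_i P_i'$ does not depend on the order of multiplication, and in particular $\prod_i P_i' = \prod_i P_{\sigma(i)}'$. Writing
\[\prod_i P_i - \prod_i P_{\sigma(i)} = \Bigl(\prod_i P_i - \prod_i P_i'\Bigr) - \Bigl(\prod_i P_{\sigma(i)} - \prod_i P_{\sigma(i)}'\Bigr)\]
and applying the first inequality to both bracketed terms (once in the identity order and once in the order given by $\sigma$, both of which are covered since the first inequality holds for any ordering) yields the bound $2\sum_i\epsilon_i$ by the triangle inequality.

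The main thing to be careful about is the asymmetric role of primed and unprimed projection-statements in the telescoping: only primed factors can be moved past a difference $(P_j - P_j')$, so the decomposition must always place unprimed factors on the left and primed factors on the right. Apart from this, the argument is a short telescoping estimate plus one use of the pairwise commutativity of the primed projections.
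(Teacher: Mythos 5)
Your proof is correct and follows essentially the same route as the paper: the same telescoping decomposition with unprimed factors on the left and primed on the right, sliding each difference $(P_j-P_j')$ past the primed factors via Einstein causality, and then using commutativity of the $P_i'$ plus the triangle inequality (with the first bound applied in both orderings) to get the factor of $2$ in the second inequality. No gaps.
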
  
\begin{proof} In  the proof of Lemma \ref{replacement}, the condition that $P_j$ is in the past of $P_i$ when $i<j$ is used only to order logical expressions in the $P_i$. The other conditions of Lemma \ref{replacement} hold, so applying the same argument, we conclude
\[\norm{(\prod_iP_i-\prod_i P_i')\psi}\leq \sum_i\epsilon_i\]
and the same holds when we permute the order of our projections.

We have that the projections $P_i'$commute because they are spatially separated, so we get our required result: for any permutation $\sigma$
\[\norm{(\prod_iP_i -\prod_i P_{\sigma(i)})\psi}\leq 2\sum_i \epsilon_i\]

\end{proof}

\begin{remark}In Zurek's Quantum Darwinisim, there is a division of $\mathcal H$ into a tensor product $\bigotimes_{i\in I}\mathcal H_i$ of separate subsystems, and Zurek demands that classical information is accessible by sampling a small proportion $J\subset I$ of these subsystems,  and is accessible in many such fragments of $\mathcal H$; \cite{QD}.  The analogue of a projection-statement in this setting is a projection operator $P$  from a subsystem $J_P\subset I$, so induced from a projection on $\bigotimes_{i\in J_P}\mathcal H_i$.  The analogue of projection-statements $P$ and $Q$ being  spatially separated is then $J_P\cap J_Q=\emptyset$, so $P$ and $Q$ act on separate subsystems, and commute just like spatially separated projection-statements.   A version of Lemma \ref{record replacement} also holds in this setting. As a consequence, many of our later arguments can also be applied in the setting of Quantum Darwinism.
\end{remark}
\section{The emergent structure of classical information}
\label{precise}

What is an idealised model for classical information on a fixed spacetime? Such a model features a set $X$ of possible complete histories of Nature, along with a probability measure $\mu$ on $X$. Implicit in $\mu$ is a $\sigma$-algebra $\mathfrak F$ of measurable subsets of $X$, that is, a collection $\mathfrak F$ of subsets of $X$ closed under the operations of taking complements and countable intersections or unions. Within probability theory, these measurable subsets are sometimes called events. To represent information available in an open region $U$ of spacetime,  we have a sub $\sigma$-algebra $\mathfrak F_U\subset \mathfrak F$, so that the observable quantities within $U$ consist of functions on $X$ that are measurable with respect to $\mathfrak F_U$. So, within $U$, we can theoretically determine whether the state of Nature is within a set $E$ in the $\sigma$-algebra $\mathfrak F_U$.

 So, a model for classical information consists of $(X,\{\mathfrak F_U\},\mu)$, where $X$ is a set of classical possibilities, $\mathfrak F_U$ is a $\sigma$--algebra representing classical information obtainable in the space-time region $U$, and $\mu$ is a probability measure. 

Reasonable properties we could classical expect include the following. 
\begin{itemize}
\item[I]  If $U\subset V$, then $\mathfrak F_U\subset \mathfrak F_V$.
\item[II] If the past Cauchy development of  $V$ contains  $U$, then  $\mathfrak F_U\subset \mathfrak F_V$. 
\item [III] \label{III} If the causal past of $V$ contains $U$, then $\mathfrak F_U\subset \mathfrak F_V$. 
\end{itemize}
The first property encodes our expectation that if $U\subset V$, the information obtainable in $V$ contains the information obtainable in $U$. The second property encodes the expectation that locally obtainable information in $U$ should be obtainable in the future, \emph{somewhere} causally connected to $U$. The third property encodes the optimistically strong expectation that locally obtainable information within $U$ should be obtainable in the future, \emph{everywhere} causally connected to $U$. Note that the third property implies the second, which implies the first.

 In this section, we construct $(X,\{\mathfrak F_U\}, \mu)$ satisfying the above conditions. We use an extreme version of classical projection-statements, called totally classical projection-statements. We first show that equivalence classes of totally classical projection-statements compactly supported within $U$ form a Boolean algebra $\mathfrak B_U$ with a natural finitely additive probability measure $\mu:\mathfrak B_U\longrightarrow [0,1] $, and also a natural map $\phi:\mathfrak B\longrightarrow\mathcal H$ with $\mu(b)=\norm{\phi(b)}^2$. We call  $(\mathfrak B,\phi)$ a $\mathcal H$--valued measure algebra. We show that each $\mathcal H$--valued measure algebra has a natural completion $(\bar{\mathcal B},\phi)$ corresponding to a real  measure algebra $(\bar{\mathfrak B}_U,\mu)$. We then construct $(X,\{\mathfrak F_U\},\mu)$ so that $(\bar{\mathfrak B}_U,\mu)$ is the quotient of $(\mathfrak F_U,\mu)$ by the ideal of measure $0$ subsets. 
 
 In this way, we obtain an idealised model of classical information. The problem is, we don't expect totally classical projection statements to exist in realistic quantum models, so Section \ref{approximate} is devoted to deriving an approximation of this idealised situation.

 To obtain the results in this section, we assume  that our spacetime is complete, and we assume that the causal future of closed sets is closed, the causal future of open sets is open,  and that the chronological future of a set is the interior of the causal future.  These assumptions hold in Minkowski space.

\begin{defn} A projection-statement $P$ is totally visible in an open region $U$ if there is a projection-statement $P'$ localised in $U$ such that $P\psi=P'\psi$.
\end{defn}
 \begin{defn} A projection-statement $P$ is compactly supported within an open set $U$ if it is localised in a region $U_P$ compactly contained in $U$.
 \end{defn}

 \begin{defn} A projection-statement $P$ is totally classical if it is localised in a bounded open region $U_P$, and is totally visible in every open region $U$ such that  $P$ is compactly supported within the causal past of $U$.

 \end{defn}
 
One could justify the definition of a totally classical projection-statement by imagining that classical information about an event can be observed anywhere in the future causally connected to the event. This is unrealistic, but also close to a classical conception of information.

 \begin{defn}Two totally classical projection-statements $P$ and $P'$ are equivalent if $P\psi=P'\psi$\end{defn}

Note that each totally classical projection-statement $P$ is equivalent to a totally classical projection statement localised in $U$, where $U$ is any bounded open region whose causal past compactly contains $U_P$.

 For the following lemma, we say that a projection-statement is localised on an open subset $U$ of a space-like hypersurface if it is localised within the Cauchy development of $U$. 
 
 \begin{lemma} \label{exact replacement}Suppose that $P_1,\dotsc,P_n$ are totally classical projection statements, and that, for all $i<j$, we have $U_{P_i}$ compactly supported in the future of each point in $U_{P_j}$. Suppose further that there is a spacelike hypersurface $S$ in the future of all $U_{P_i}$ so that the causal future of $U_{P_i}$ intersects $S$ in a nonempty open subset that does not contain any connected component of $S$.   Then,  there exist spatially separated projection statements $P_i'$ localised on $S$  such that $P_i$ is equivalent to $P_i'$, and such that
 \[P_1\dotsb P_n\psi=P_1'\dotsb P_n'\psi\]
 Moreover, given any logical expression  $E$ in the projection-statements $P_i$,  we have that
 \[E\psi=E'\psi\]
 where $E'$ is given by the corresponding logical expression in the projection-statements $P_i'$. 
 \end{lemma}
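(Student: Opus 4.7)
The plan is to construct pairwise disjoint open subsets $W_1, \ldots, W_n \subset S$, use the totally classical property of each $P_i$ to produce an equivalent $P_i'$ localised in the Cauchy development $D(W_i)$, and then apply Lemma~\ref{replacement} with every $\epsilon_i = 0$. Set $V_i := J^+(U_{P_i}) \cap S$. Since $U_{P_i} \subset I^+(U_{P_j})$ whenever $i < j$, these form a nested chain $V_1 \subset V_2 \subset \cdots \subset V_n$, and by hypothesis each $V_i$ is a nonempty open subset of $S$ that omits at least one point from every connected component of $S$.

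Setting $V_0 := \emptyset$, I would inductively pick a nonempty open $W_j \subset V_j \setminus \overline{V_{j-1}}$ large enough to witness the totally classical property of $P_j$ in $D(W_j)$, meaning $\overline{U_{P_j}} \subset J^-(W_j)$. Disjointness of the $W_j$ is then automatic: for $i<j$ we have $W_i \subset V_i \subset \overline{V_{j-1}}$ while $W_j$ lies in the complement of $\overline{V_{j-1}}$. By total classicality there exists $P_j'$ localised in $D(W_j)$ with $P_j'\psi = P_j\psi$. It remains to verify the hypotheses of Lemma~\ref{replacement}: condition (1) holds with $\epsilon_j=0$, and condition (2) is the given time-ordering. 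For condition (3), first note that $D(W_j)$ and $D(W_i)$ are mutually spacelike whenever $i \neq j$, because any inextensible causal curve through a point of either development meets $S$ at exactly one point, which cannot lie in both of the disjoint sets $W_i$ and $W_j$. Second, for $j>i$, the region $D(W_j)$ is spatially separated from $U_{P_i}$: a causal curve from $r \in U_{P_i}$ to $p \in D(W_j)$ extends to an inextensible curve, which by definition of $D(W_j)$ must meet $W_j \subset S$ at a single point; this point lies in $J^+(r) \cap S \subset V_i$, contradicting $W_j \cap V_i = \emptyset$.

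Having verified these hypotheses, Lemma~\ref{replacement} with all $\epsilon_i = 0$ gives $\norm{(E - E')\psi} \leq 2^n \sum_i \epsilon_i = 0$ for every logical expression $E$ in the $P_i$, so $E\psi = E'\psi$; in particular $P_1 \dotsb P_n \psi = P_1' \dotsb P_n' \psi$, which is the second half of the conclusion, while the first half is already built into the construction.

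The main obstacle is the geometric step of constructing the $W_j$: I need $V_j \setminus \overline{V_{j-1}}$ to contain an open subset $W_j$ large enough that $\overline{U_{P_j}} \subset J^-(W_j)$. In Minkowski-type geometries this is straightforward because the future light cone of $U_{P_j}$ strictly extends that of the chronologically-later $U_{P_{j-1}}$, producing an annular region at the boundary of $V_j$ from which every point of $\overline{U_{P_j}}$ remains visible, and the ``no connected component is contained'' hypothesis supplies the extra room needed in each component of $S$. More generally, this step will require careful use of the causal structure of the spacetime, and it is where the subtlest part of the proof sits; the rest of the argument is just assembling the causal-structure bookkeeping to feed into Lemma~\ref{replacement}.
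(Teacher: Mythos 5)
Your overall strategy is the paper's: peel off pairwise disjoint open pieces of $S$ by deleting the causal futures of the later statements, relocate each $P_j$ there via total classicality, and feed the result into Lemma~\ref{replacement} with every $\epsilon_i=0$. Your causal bookkeeping for conditions (2) and (3) of Lemma~\ref{replacement} is sound, and your $W_j\subset V_j\setminus\overline{V_{j-1}}$ is, up to closures, exactly the paper's choice
\[U_{P'_j}:=S\cap I^+(\bar U_{P_j})\setminus\Bigl(\bigcup_{i<j}J^+(\bar U_{P_i})\Bigr)\]
(the union collapses essentially to the $i=j-1$ term because the $V_i$ are nested).

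The genuine gap is the step you explicitly defer: showing that this region is large enough that $\overline{U_{P_j}}\subset J^-(W_j)$. This is the only place the hypothesis ``$J^+(U_{P_i})\cap S$ contains no connected component of $S$'' is used, and ``straightforward in Minkowski-type geometries'' plus an appeal to annular regions and strictly extending light cones is not an argument. The paper closes it with a soft topological argument: take $W_j$ to be the \emph{whole} set displayed above and verify the containment $\bar U_{P_j}\subset I^-(W_j)$ one point at a time. For $x\in\bar U_{P_j}$, the set $I^+(x)\cap S$ is open and nonempty, and it contains the closed set $\bigcup_{i<j}J^+(\bar U_{P_i})\cap S$, since $\bar U_{P_i}\subset I^+(x)$ for $i<j$ and hence $J^+(\bar U_{P_i})\subset J^+(I^+(x))\subset I^+(x)$. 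If $I^+(x)\cap S$ were contained in that closed set the two would coincide, making $I^+(x)\cap S$ a nonempty open-and-closed subset of $S$ and hence a union of connected components, contradicting the hypothesis. So $I^+(x)\cap W_j\neq\emptyset$ for every $x\in\bar U_{P_j}$, and compactness of $\bar U_{P_j}$ together with openness of $I^-(W_j)$ gives the compact containment needed to invoke total classicality. Note that this reframing also dissolves your worry about $W_j$ being ``large enough'': you never need a single point of $W_j$ visible from all of $\bar U_{P_j}$, only that each point of $\bar U_{P_j}$ sees \emph{some} point of $W_j$, which is a pointwise nonemptiness statement rather than a quantitative one.
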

 
 \begin{proof}  Use $I^\pm(A)$ for the chronological future or past of a set $A$, and $J^{\pm}(A)$ for the causal future or past. For each point $x\in \bar U_{P_j}$ , we have that $I^+(x)$ contains $\bar U_{P_i}$ for all $i<j$, so  $I^+(x)$ contains  $J^+(\bar U_{P_i})$. 
 (Our assumptions on the causal structure of spacetime imply that $J^+(I^+(x))$ is an open subset of $J^+(x)$, and hence contained in $I^+(x)$.)
  In particular, $I^+(x)\cap S$ is nonempty and open, and  contains the closed set $J^+(\bar U_{P_i})$. By assumption,  $I^+(x)$ also does not contain any connected component of $S$, so it is not closed. We can therefore conclude that  
 \[S\cap I^+(x)\setminus\lrb{\bigcup_{i<j} J^+(\bar U_{P_i})}\] is a nonempty open subset of $S$.  This holds for all $x\in \bar U_{P_j}$. 
 Define the open subset 
 \[U_{P'_j}:=S\cap I^+(\bar U_{P_j})\setminus \lrb{\bigcup_{i<j} J^+(\bar U_{P_i})}\ .\]
 As argued above,  the chronological past of $U_{P'_j}$ contains each point $x\in \bar U_{P_j}$, so the totally classical projection-statement $P_j$ is equivalent to some totally classical projection-statement $P'_j$ localised within $\bar U_{P_j}$. Moreover, we have that  $P'_j$ is spatially separated from $P_i$ for all $i<j$, and also spatially separated from $P'$. The assumptions of Lemma \ref{replacement} therefore hold (with $\epsilon_i$ arbitrarily small), so we get that 
 \[P_1\dotsb P_n\psi=P_1'\dotsb P_n'\psi\]
 and similarly, given any logical expression $E$ in the $P_i$, we have that
 \[E\psi=E'\psi \ .\]

 \end{proof}
 
 Lemma \ref{exact replacement} tells us that, so long as our totally classical projection-statements can be time-ordered, the probabilities we assign to logical expressions in these totally classical projection-statements follow the laws of classical probability. This provides some justification for the probabilities we assigned in Section \ref{logical combination}. In what follows, we will also show that we can coherently assign probabilities  to arbitrary logical expressions in totally classical projection-statements. We will do this by showing that we can always choose spatially separated records, and showing that the probabilities are independent of the records chosen. For this, we will require the following two lemmas about the causal structure of spacetime. The proof of these two lemmas was pointed out to me by James Tener. 
 
 \begin{lemma}\label{point choice} If a compact  set $\bar U$ is in the causal past of an open set $V$, then there exists a finite collection of spatially separated points $\{u_i\}$ in $V$ such that $\bar U$ is in the chronological past of $\{u_i\}$.
 \end{lemma}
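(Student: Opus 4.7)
The plan is to first upgrade the causal relation to a chronological one at each point, then use compactness of $\bar U$ to get finitely many chronological successors in $V$, and finally prune the resulting finite set so that any two remaining points are spatially separated.

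For the first step, fix $x\in\bar U$. Since $x\in J^-(V)$, there is some $y_x\in V$ with $x\in J^-(y_x)$. Because $V$ is open and, by our assumptions on spacetime, $I^+(y_x)=\mathrm{int}(J^+(y_x))$ is a nonempty open set accumulating on $y_x$, we can pick $u_x\in V\cap I^+(y_x)$. Then $I^+(y_x)\subset J^+(y_x)\subset J^+(x)$ and $I^+(y_x)$ is open, so $I^+(y_x)\subset \mathrm{int}(J^+(x))=I^+(x)$; hence $x\in I^-(u_x)$. The sets $I^-(u_x)$ are open (chronological pasts of points are open), so as $x$ varies over $\bar U$ they form an open cover of the compact set $\bar U$. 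Extract a finite subcover corresponding to points $u_1,\dotsc,u_N\in V$, so that $\bar U\subset\bigcup_{i=1}^N I^-(u_i)$.

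Now I prune this list to obtain spatial separation. If $u_i\in J^-(u_j)$ for some $j\neq i$, then for any $x\in I^-(u_i)$ we have $x\ll u_i\leq u_j$, and the argument of the previous paragraph (applied with $y_x=u_i$, $u_x=u_j$) gives $x\in I^-(u_j)$; hence $I^-(u_i)\subset I^-(u_j)$ and we may discard $u_i$ without losing the covering property. Iterating this removal (the process terminates because the set is finite and strictly shrinks), the surviving points $\{u_i\}$ are pairwise causally unrelated, i.e., spatially separated, while still satisfying $\bar U\subset\bigcup_i I^-(u_i)$, which is the claim.

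The only step that requires any subtlety is the passage from the causal relation $x\leq y_x$ to a chronological successor $u_x\in V$, and that is handled cleanly by the hypothesis that $I^+=\mathrm{int}(J^+)$ together with the openness of $V$; the pruning step is then essentially combinatorial. I expect no substantive obstacle beyond being careful about which causal-structure assumptions are invoked.
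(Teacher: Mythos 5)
Your proof is correct, but it takes a genuinely different route from the paper's. The paper first reduces to the case of bounded $V$, then works with the \emph{future boundary} of $V$ (the points of $\bar V$ with no causal successor in $\bar V$): such points are automatically pairwise causally unrelated, the chronological pasts of finitely many of them cover $\bar U$ by compactness, and a final perturbation pushes these boundary points into $V$ while preserving both the separation and the covering. You instead produce, for each $x\in\bar U$, a chronological successor $u_x\in V$ directly (via $I^+=\mathrm{int}\,J^+$ and openness of $V$), extract a finite subcover, and then achieve spatial separation by a purely combinatorial pruning that uses the push-up property $\ll\circ\le\ \subseteq\ \ll$ to see that discarding a causally dominated point never shrinks the union of chronological pasts. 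Your route avoids both the boundedness reduction and the perturbation step (which requires choosing spatially separated neighbourhoods $O_i$ small enough that the covering survives), at the cost of the pruning argument; the paper's route gets separation structurally from the geometry of the future boundary. One small caveat on your side: the step ``pick $u_x\in V\cap I^+(y_x)$'' uses that $y_x$ lies in the closure of $I^+(y_x)$, which is a standard Lorentzian fact but not literally among the causal-structure axioms listed in the paper; since the paper's own proof leans on comparable unstated facts when it asserts that the chronological pasts of future-boundary points cover $\bar U$, this is not a substantive objection, but it is worth flagging as an additional hypothesis you are invoking.
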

 
 \begin{proof} 
First, reduce to the case that $V$ is bounded. If $V$ is not bounded, we can choose an exhaustion of $V$ by compact subsets, and as $\bar U$ is compact, it is contained in the causal past of the interior of one of these compact subsets. So, without loss of generality, we can assume that $V$ is bounded.

 The future boundary of $V$ consists of the points  $p\in \bar V$ such that no point in $\bar V$ is in the causal future of $p$.  If $V$ is not bounded, the future boundary of $V$ may be empty, but as we are assuming that $V$ is bounded, the causal past of the future boundary of $V$ contains $V$, and therefore contains the causal past of $V$. The chronological past of each point in the future boundary of $V$ intersects $\bar U$ in an open subset, and together, these subsets form an open cover of $\bar U$. As $\bar U$ is compact, we can choose a finite sub cover, and obtain a finite set of points $\{p_i\}$ whose chronological past contains $\bar U$. As $p_i$ and $p_j$ are both in the future boundary of $V$, $p_i$ can not be in the causal future of $p_j$, so these points are spatially separated. Because we only have a finite set of points, there exist open neighborhoods $O_i$ of $p_i$ such that $O_i$ is spatially separated from $O_j$, and because $\bar U$ is compact, it is still covered by the chronological past of any set of points sufficiently close to our points $p_i$. Therefore, there exist  a choice of point $u_i\in O_i\cap V$ for all $i$ such that the chronological past of $\{u_i\}$ contains $\bar U$.
 
 \end{proof}
 
 \begin{lemma}\label{set separation} If $A$ and $B$ are  open sets, and $U$ is an open set whose causal past compactly contains  $A\cup B$, then there exist spatially separated bounded open subsets $U_A$ and $U_B$ of $U$ whose causal past compactly contains  $A$ and $B$ respectively. 
 \end{lemma}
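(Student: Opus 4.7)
The approach is to derive this from Lemma \ref{point choice}, applied to the compact set $\overline{A\cup B}=\bar A\cup \bar B$, and then to redistribute the witnessing points between $A$ and $B$, splitting any point that happens to witness both.

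First I would reduce to $U$ bounded by the same exhaustion argument used in the proof of Lemma \ref{point choice}: write $U$ as a union of bounded open subsets, note that $\overline{A\cup B}$ is a compact set in $J^-(U)=\bigcup J^-(V)$, extract a finite subcover, and replace $U$ by the union of these subsets.  Then, applying Lemma \ref{point choice} to $\bar A\cup \bar B\subset J^-(U)$ produces spatially separated points $p_1,\dots,p_k\in U$ whose chronological pasts cover $\bar A\cup \bar B$.  Set $I_A=\{i:I^-(p_i)\cap \bar A\neq\emptyset\}$ and $I_B=\{i:I^-(p_i)\cap \bar B\neq\emptyset\}$.  If $I_A\cap I_B$ were empty, then pairwise spatially separated bounded open neighborhoods of the $p_i$ in $U$, grouped according to this partition, would immediately yield the required $U_A$ and $U_B$.

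The main obstacle is the case $i\in I_A\cap I_B$, where a single $p_i$ chronologically dominates portions of both $\bar A$ and $\bar B$ and cannot be placed into both of the (spatially separated) sets $U_A$ and $U_B$.  To handle this, I would invoke the lower semicontinuity of $I^-$ on compact sets, the same property used implicitly in the penultimate sentence of the proof of Lemma \ref{point choice}: whenever $K$ is compact and $K\subset I^-(q)$, there is an open neighborhood $N$ of $q$ with $K\subset I^-(q')$ for every $q'\in N$.  Applied with $K_i:=(\bar A\cup \bar B)\cap I^-(p_i)$ and $q=p_i$, this yields a neighborhood $N_i\subset U$ of $p_i$; since $N_i$ is a nonempty open subset of spacetime it contains two spatially separated points $p_i^A, p_i^B$, and each of these chronologically dominates $K_i$, and in particular its $\bar A$-part and its $\bar B$-part respectively.

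Because the original $p_i$'s are pairwise spatially separated, by shrinking the $N_i$ if necessary I can arrange that the enlarged collection $\{p_i:i\in I_A\triangle I_B\}\cup\{p_i^A, p_i^B:i\in I_A\cap I_B\}$ admits pairwise spatially separated bounded open neighborhoods inside $U$.  Declaring $U_A$ to be the union of the neighborhoods of the $A$-witnesses (the $p_i$ with $i\in I_A\setminus I_B$ together with the $p_i^A$ for $i\in I_A\cap I_B$), and $U_B$ analogously for the $B$-witnesses, then produces spatially separated bounded open subsets of $U$ whose chronological — hence causal — pasts contain $\bar A$ and $\bar B$ respectively, as required.
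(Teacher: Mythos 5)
There is a genuine gap at the point where you handle $i\in I_A\cap I_B$. You invoke the stability property ``$K$ compact and $K\subset I^-(q)$ implies $K\subset I^-(q')$ for all $q'$ near $q$'' with $K_i:=(\bar A\cup\bar B)\cap I^-(p_i)$, but this set is the intersection of a compact set with an \emph{open} set and need not be compact: $\bar A\cup\bar B$ will in general contain points arbitrarily close to the boundary of the past cone $I^-(p_i)$ (those points are covered robustly only by some \emph{other} $p_j$). For such a $K_i$ the conclusion genuinely fails --- perturbing $p_i$ spatially moves part of $\partial I^-(p_i)$ inward and drops points of $K_i$ near that boundary --- so the $A$-witnesses you construct need not chronologically dominate all of $\bar A$. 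The step is fixable: either first shrink the finite open cover $\{I^-(p_i)\}$ of the compact set $\bar A\cup\bar B$ to a cover by compact subsets $K_i'\subset I^-(p_i)$ and apply the semicontinuity to $K_i'$, or replace the individual stability claim by the collective one actually proved at the end of Lemma \ref{point choice} (any points sufficiently close to all the $p_i$ \emph{together} still dominate the whole compact set). As written, though, the argument does not go through.

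It is also worth noting that the bookkeeping that forces you into this corner --- partitioning the witnesses into $A$-witnesses and $B$-witnesses and splitting the ones that serve both --- is unnecessary. The paper's proof takes the spatially separated neighborhoods $O_i$ of the points from Lemma \ref{point choice}, chosen so that \emph{any} selection of one point from each $O_i$ dominates all of $\bar A\cup\bar B$, and simply places two spatially separated open pieces $U_{A,i}$ and $U_{B,i}$ inside \emph{every} $O_i$; the unions $U_A=\bigcup_i U_{A,i}$ and $U_B=\bigcup_i U_{B,i}$ are then spatially separated and each has causal past compactly containing all of $A\cup B$, which is more than the lemma asks for. That route needs only the collective stability statement and never has to decide which point covers which part of which set.
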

 \begin{proof} First, use Lemma \ref{point choice} to choose a finite set of spatially separated points $\{u_1,\dotsc, u_n\}\subset U$  whose chronological past contains the closure of $A\cup B$. There exist spatially separated open neighborhoods $O_i$ of $u_i$ contained in $U$ so that any selection of points from each of these open neighborhoods has chronological past containing the closure of $A\cup B$. In each $O_i$, choose two non-empty spatially separated open subsets $U_{A,i}$ and $U_{B, i}$. Then $U_A:=\bigcup_i U_{A,i}$ is spatially separated from $U_B:=\bigcup_i U_{B,i}$, and the causal past of each compactly contains $A\cup B$, and therefore compactly contains both $A$ and $B$. 
 \end{proof}
 
 \begin{lemma}\label{separate representatives} Given any finite collection of totally classical projection-statements $P_i$, compactly supported in the causal past of an open set $U$, there exist spatially separated totally classical projection-statements $P_i'$  compactly supported within $U$ so that $P_i$ is compactly supported within the causal past of of $U_{P_i'}$  and so that  $P'_i$ is equivalent to $P_i$. 
 
 Moreover, given any such collection of projection-statements $P_i'$,  every logical expression $E'$ in the projection-statements $P_i'$ is also a totally classical projection-statement.
 
Furthermore, given any other spatially separated projection-statements $P_i''$ within $U$ satisfying the same conditions as $P_i'$, we have that $E'$ is equivalent to the corresponding logical expression $E''$ in the projection-statements $P_i''$. 
  \end{lemma}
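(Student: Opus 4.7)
My plan is to prove the three assertions of the lemma in the order (existence) $\to$ (independence) $\to$ (totally classical character of logical expressions), since the last assertion will follow from the first two together with a further application of Lemma \ref{replacement}.

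\emph{Existence.} Since $\bar U_{P_1} \cup \cdots \cup \bar U_{P_n}$ is compactly contained in the causal past of $U$, iterating Lemma \ref{set separation} splits $U$ into mutually spatially separated bounded open subsets $V_1, \ldots, V_n \subset U$ with $U_{P_i}$ compactly in the causal past of $V_i$. Total classicality of $P_i$ then produces a record $P_i'$ localised in $V_i$ with $P_i'\psi = P_i\psi$. Each such $P_i'$ is itself totally classical directly from the definitions: any open $W$ whose causal past compactly contains $U_{P_i'}$ also compactly contains $U_{P_i}$, so total classicality of $P_i$ gives a witness to total visibility of $P_i$ in $W$, which, since $P_i'\psi = P_i\psi$, is equally a witness for $P_i'$.

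\emph{Independence.} Given a second collection $\{P_i''\}$ satisfying the same hypotheses, I would construct an auxiliary collection $\{\tilde P_i\}$ of records with $U_{\tilde P_i}$ spatially separated from $U_{\tilde P_j}$, $U_{P_j'}$, and $U_{P_j''}$ for every $j \neq i$. Choose a bounded open slab $W$ just above all records whose causal past compactly contains each $\bar U_{P_i}$, and for each $i$ apply Lemma \ref{point choice} to $\bar U_{P_i}$ inside the open set
\[ W \setminus \overline{\bigcup_{j \neq i}\bigl(J^+(U_{P_j'}) \cup J^+(U_{P_j''})\bigr)}, \]
which is open because forward causal cones of open sets are open. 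Taking $W$ sufficiently close to $U$ ensures this set still has $\bar U_{P_i}$ in its causal past: spatial separation of $U_{P_i'}$ from each $U_{P_j'}$ (and $U_{P_i''}$ from $U_{P_j''}$) leaves an open corridor just above $U_{P_i}$ that misses the forward light cones of the other records. Thickening the resulting spatially separated points to small open neighborhoods $W_i$, and shrinking these so that neighborhoods for different $i$ remain mutually spatially separated, yields (via total classicality of $P_i$) projection-statements $\tilde P_i$ localised in $W_i$ with $\tilde P_i\psi = P_i\psi$. Regarding $\{P_i'\}$ as the original collection (any ordering satisfies Lemma \ref{replacement}'s time-ordering hypothesis, since they commute) and $\{\tilde P_i\}$ as the replacements, the hypotheses of Lemma \ref{replacement} hold with $\epsilon_i = 0$, so $E'\psi = \tilde E\psi$ for every logical expression. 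The symmetric argument with $P_i''$ in place of $P_i'$ gives $E''\psi = \tilde E\psi$, hence $E'\psi = E''\psi$.

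\emph{Totally classical expressions.} The $P_i'$ commute, so $E'$ is a well-defined projection, localised in the bounded open set $U_{E'} := \bigcup_i U_{P_i'}$. For any open $W'$ with $U_{E'}$ compactly in the causal past of $W'$, each $U_{P_i}$ is also so contained, so the existence step applied inside $W'$ furnishes spatially separated records $\hat P_i \in W'$ of $P_i$; by the independence claim, the corresponding $\hat E$ satisfies $E'\psi = \hat E\psi$, witnessing total visibility of $E'$ in $W'$. As $W'$ was arbitrary, $E'$ is totally classical.

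\emph{Main obstacle.} The hardest step is the geometric construction of the $W_i$: one must simultaneously keep $U_{P_i}$ compactly in the causal past of $W_i$ and arrange that $W_i$ lies outside the forward light cones of all $U_{P_j'}$ and $U_{P_j''}$ for $j \neq i$. The hypothesis that $U_{P_i'}$ is spatially separated from each $U_{P_j'}$ (and similarly for $P''$) is used precisely to guarantee that sufficiently close above $U$ there is open room in the future of $U_{P_i}$ outside the forward light cones of the remaining records, making Lemma \ref{point choice} applicable.
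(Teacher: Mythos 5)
Your existence step and your reduction of the ``totally classical expressions'' claim to the independence claim both match the paper's argument. The genuine divergence, and the one place where your proposal has a gap, is the independence step. You build a \emph{single} auxiliary collection $\{\tilde P_i\}$ that must be spatially separated from both $U_{P_j'}$ and $U_{P_j''}$ for all $j\neq i$, and you justify its existence by saying that spatial separation of $U_{P_i'}$ from the other $U_{P_j'}$ leaves a corridor above $U_{P_i'}$ missing the forward cones of the other records. That justification only controls the primed cones. Nothing in the hypotheses separates $U_{P_i'}$ from $U_{P_j''}$ for $j\neq i$: the two collections of records are each internally spatially separated, but a double-primed record $U_{P_j''}$ may perfectly well lie in the causal past of $U_{P_i'}$ (e.g.\ in $1+1$ dimensions take $U_{P_1'}$ near $(0,10)$, $U_{P_1''}$ near $(-6,10)$ and $U_{P_2''}$ near $(0,5)$; all the stated constraints can be met, yet every point just above $U_{P_1'}$ lies in $J^+(U_{P_2''})$). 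So the claim that $\bar U_{P_i}$ remains in the causal past of $W\setminus\overline{\bigcup_{j\neq i}\bigl(J^+(U_{P_j'})\cup J^+(U_{P_j''})\bigr)}$ is not established by the argument you give, and this is exactly the hypothesis you need in order to invoke Lemma \ref{point choice}. (A smaller issue: the points produced by separate applications of Lemma \ref{point choice} for different $i$ need not be mutually spatially separated, and shrinking neighbourhoods cannot fix a genuine causal relation between them.)

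The paper sidesteps this by using \emph{two} auxiliary collections rather than one: it shrinks each $U_{P_i'}$ to a sub-record $\tilde P_i'$ and each $U_{P_i''}$ to a sub-record $\tilde P_i''$, arranging only the cross-separation of $U_{\tilde P_i'}$ from $U_{\tilde P_j''}$ for $i\neq j$ (separation of $\tilde P_i'$ from $P_j'$ and $\tilde P_j'$ comes for free because these sit inside the mutually separated $U_{P_j'}$), and then chains three applications of Lemma \ref{record replacement}: $E'\psi=\tilde E'\psi=\tilde E''\psi=E''\psi$. This never requires a single record to be separated simultaneously from both full original collections, which is the stronger geometric fact your route needs. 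If you want to keep your one-collection strategy, you must either prove that such a simultaneously separated collection exists (which requires a genuine argument about the complement of the union of all $2(n-1)$ forward cones, not just the primed ones), or switch to the paper's two-collection chaining, which reduces the geometric burden to pairwise cross-separation of shrunken records.
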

 \begin{proof}
Using Lemma \ref{set separation}, there exist spatially separated bounded open subsets $U_{P_i'}\subset U$ whose causal past  compactly contains $U_{P_i}$. Then, as $P_i$ is totally classical, it is equivalent to a projection-statement $P_i'$ in $U_{P_i'}$, and this projection-statement is automatically totally classical, because any open set whose causal past compactly contains  $U_{P_i'}$   also compactly contains   $U_{P_i}$.

Now suppose that $P_i'$ and $P_i''$ are two collections of projection-statements satisfying the conditions required on $P_i'$. We need to  show that given any logical expression $E'$ in the $P_i'$, we have $E'\psi=E''\psi$, where $E''$ is the corresponding logical expression in the $P_i'$. If  the $P_i'$ were spatially separated from the $P_j''$, this would follow from Lemma \ref{record replacement}, so our proof will use the existence of more records satisfying appropriate spatial separation properties. Using Lemma \ref{set separation}, and the fact that $P_i$ is totally classical,  we can construct projection-statements $\tilde P_i'$ and $\tilde P_i''$ equivalent to $P_i$ such that  
\begin{itemize}
\item $U_{\tilde P_i'}$ is spatially separated from $U_{\tilde P_j''}$ for all $i\neq j$, and
\item $U_{\tilde P_i'}\subset U_{ P_i}$ and $U_{\tilde P_i''}\subset U_{P_i''}$.
  \end{itemize}
  
This implies that  $\tilde P_i'$ is spatially separated from both $P_j'$ and $\tilde P_j'$ for all $i\neq j$, so the hypothesis of Lemma \ref{record replacement} hold, with $\epsilon_i$ arbitrarily small. Accordingly, we get that for any logical expression $E'$ in the $P_i'$, we have that 
\[E'\psi=\tilde E'\psi\]
where $\tilde E'$ is the corresponding logical expression in the $\tilde P_j'$. Similarly, we get that 
\[\tilde E'\psi=\tilde E''\psi=E''\psi\]
so we get that $E'\psi=E''\psi$, as required.

It remains to show that $E'$ is a totally classical projection statement. If the causal past of an open set $O$ compactly contains  $ U_E'$, it also compactly contains $ U_{P_i}$ for all $P'_i$ appearing in the logical statement $E'$. Accordingly, Lemma \ref{set separation} implies that there exist spatially separated projection-statements $P_i''$ localised within $O$ and satisfying our requirements for $P'_i$. The above argument then implies that $E'\psi=E''\psi$ so $E'\psi$ is totally visible with $O$. So, $E'$ is a totally classical projection-statement.

 \end{proof}
 
 The following is an immediate corollary of Lemma \ref{separate representatives}.
 
 \begin{cor}\label{totally classical BA} There is a boolean algebra $\mathfrak B_U$ comprised of the set of equivalence classes $[P]$ of totally classical projection-statements P localised in regions compactly contained in $U$. The Boolean algebra structure is given by 
 \begin{itemize}
 \item $\neg [P]=[1-P]$, and
 \item  $[P]\wedge [Q]=[P'Q']$ where $P'$ and $Q'$ are spatially separated representatives of $[P]$ and $[Q]$. 
 \item  $[P]\vee [Q]=\neg(\neg [P])(\neg [Q])=[P'+Q'-P'Q']$,
 \item  and the addition in the corresponding Boolean ring is given by $[P]\oplus [Q]=[P'+Q'-2P'Q']$. 
 \end{itemize}
 
 Moreover, $[P]\mapsto P\psi$ defines an injective map $\phi:\mathfrak B_U\longrightarrow \mathcal H$ satisfying the property that 
 \[\phi([Q])=\phi( [Q]\wedge [P])+\phi( [Q]\wedge \neg [P])\]
 is an orthogonal decomposition of $\phi([Q])=Q\psi$.
   \end{cor}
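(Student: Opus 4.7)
The plan is to reduce every assertion to a single application of Lemma \ref{separate representatives}: given any finite collection of totally classical projection-statements, that lemma supplies spatially separated totally classical representatives inside $U$ whose logical combinations are again totally classical, with equivalence classes independent of the choice of representatives. Because spatially separated projection-statements commute, each Boolean algebra identity then becomes an elementary operator identity for commuting orthogonal projections.

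First I would verify that the four operations descend to equivalence classes. For $\neg$, the equality $P\psi = Q\psi$ immediately gives $(1-P)\psi = (1-Q)\psi$. For $\wedge$, given classes $[P]$ and $[Q]$, applying Lemma \ref{separate representatives} to the pair $\{P,Q\}$ yields spatially separated totally classical representatives $P',Q'$ compactly supported in $U$; since $P'$ and $Q'$ commute, $P'Q'$ is an orthogonal projection localised in $U_{P'}\cup U_{Q'}$, the lemma certifies it is totally classical, and its equivalence class is independent of the choice of spatially separated representatives. The operations $\vee$ and $\oplus$ are defined from $\neg$ and $\wedge$ via the stated formulas, so they inherit well-definedness.

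To verify the Boolean algebra axioms (idempotence, commutativity, associativity, distributivity, absorption, and the complementation laws $[P]\wedge\neg[P]=[0]$ and $[P]\vee\neg[P]=[1]$), I would apply Lemma \ref{separate representatives} once to the entire finite collection of projection-statements appearing in the identity being tested, obtaining a single compatible system of spatially separated representatives. Each axiom is then a standard identity among commuting orthogonal projections and holds as an operator equality, and in particular after applying $\psi$.

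Finally, the map $\phi([P]):=P\psi$ is well defined and injective by the very definition of equivalence. For the orthogonal decomposition of $\phi([Q])$, pick spatially separated representatives $P'$ of $[P]$ and $Q'$ of $[Q]$. Then
\[\phi([Q]) = Q'\psi = Q'P'\psi + Q'(1-P')\psi = \phi([Q]\wedge[P]) + \phi([Q]\wedge\neg[P]),\]
and orthogonality follows from $\langle Q'P'\psi,\, Q'(1-P')\psi\rangle = \langle\psi,\, P'Q'(1-P')\psi\rangle = \langle\psi,\, Q'P'(1-P')\psi\rangle = 0$, using self-adjointness and idempotence of $Q'$, the commutativity of $P'$ and $Q'$, and $P'(1-P')=0$. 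The only (modest) obstacle is bookkeeping: one must invoke Lemma \ref{separate representatives} on a collection large enough to include every projection-statement appearing in a given identity, so that a single system of spatially separated records realises all of them simultaneously; with that done, each verification is automatic.
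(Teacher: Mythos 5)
Your proposal is correct and follows the same route as the paper, which simply declares the corollary to be an immediate consequence of Lemma \ref{separate representatives}; you have filled in exactly the routine details (well-definedness via choice-independence of spatially separated representatives, Boolean identities as operator identities for commuting projections, and the orthogonality computation) that the paper leaves implicit.
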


\subsection{ $\mathcal H$--valued measure algebras }   
\label{ma section}

   The map
   \[\phi:\mathfrak B_U\longrightarrow \mathcal H\]
   defined by 
   $\phi([P]):=P\psi$
   gives $\mathfrak B_U$ the structure of a $(\mathcal H,\psi)$--valued measure algebra. This can be regarded as an intermediate notion between a projection-valued measure, and a real measure algebra associated to a probability space. We also discuss the relationship between  $\mathcal H$--valued measure algebras, statistical ensembles of states, and density matrices in Section \ref{density matrices}.

\begin{defn}\label{measure algebra} A $\mathcal H$--valued measure algebra is
\begin{itemize}\item a  Boolean algebra $\mathfrak B$, and 
\item an injective map $\phi:\mathfrak B\longrightarrow \mathcal H$.
\end{itemize}
such that, for all $a,b\in\mathfrak B$, 
\begin{equation}\label{additive}\langle \phi(a\wedge b)\vert \phi(a\wedge \neg b)\rangle=0 \ \ \ \text{ and } \ \ \phi(a)=\phi (a\wedge b)+\phi(a\wedge\neg b)\ .\end{equation}

If, in addition, $\phi(1)=\psi$, we say that $(\mathfrak B,\phi)$ is a $(\mathcal H,\psi)$--valued measure algebra.

\end{defn}

As $\mathfrak B$ is meant to represent a collection of `things one could say about Nature', we will refer to an element $b\in \mathfrak B$ as a statement. If we insist that statements about quantum systems take the form of a linear subspace, we can think of  $b\in\mathfrak B$ as asserting that the  quantum system is in the orthogonal complement of  the subspace spanned by $\phi(b')$ for all $b'\in \mathfrak B$ with $b'\wedge b=0$.

Note that equation (\ref{additive}) implies that $\mu(b) :=\norm{\phi(b)}^2$ is a  finitely additive measure. The definition of a real valued measure algebra $(\mathfrak B,\mu)$ is analogous, but more complicated as it usually requires $\mathfrak B$ to be $\sigma$--complete, and the measure $\mu$ to be countably additive instead of just finitely additive. We will show in Lemma \ref{completion} that any $\mathcal H$--valued measure algebra $(\mathfrak B,\phi)$ has a canonical completion to a $\sigma$--complete $\mathcal H$--valued measure algebra, $(\bar{\mathfrak B},\phi)$ and that $\phi$ is automatically countably additive. 

 We can construct a real valued measure algebra from a $\sigma$--complete $\mathcal H$--valued measure algebra by taking the measure $\mu(a)=\norm{\phi(a)}^2$. Given a projection valued measure, we can construct a $(\mathcal H,\psi)$--valued measure algebra by applying the projections to $\psi$, and taking the quotient by the ideal sent to $0$. Accordingly, we can construct a $(\mathcal H,\psi)$--valued measure algebra from a collection of commuting projections on $\mathcal H$.

 \begin{example}
Given a $\mathcal H$--valued measure algebra $(\mathfrak B,\phi)$, and some statement $O\in\mathfrak B$, there is an ideal $\mathfrak I_O\subset\mathfrak B$ comprised of statements $o\in\mathfrak B$ such that $o\wedge O=o$, and a filter\footnote{A filter $\mathfrak T\subset \mathfrak B$ is a proper subset of a boolean algebra $\mathfrak B$  closed under $\wedge$, and satisfying the property that $a\vee t\in\mathfrak T$ if $t\in\mathfrak T$. Filters often represent `things that are true'. An ideal $\mathfrak I\subset \mathfrak B$ is a nonempty subset closed under $\vee$ and satisfying the property that $a\wedge i\in\mathfrak I$ if $i\in \mathfrak I$.  } $\mathfrak T_O\subset\mathfrak B$ comprised of statements $b\in\mathfrak B$ such that $O\wedge b=O$. So, 
\[\mathfrak I_O=O\wedge \mathfrak B\ \ \ \ \text{ and }\ \ \ \ \mathfrak T_O=O\vee\mathfrak B\ .\]
When $O$ is a statement representing an observer's knowledge, $\mathfrak T_O$ represents statements that the observer can logically deduce, and $\mathfrak I_O$ represents statements that the observer should consider possible. 

  In fact, $(\mathfrak I_O,\phi)$ is a $(\mathcal H,\phi(O))$--valued measure algebra, and $(\mathfrak T_O,\phi-\phi(O))$ is a $(\mathcal H,\phi(1)-\phi(O))$--valued measure algebra.  The boolean algebra operations $\wedge$ and $\vee$ restrict from $\mathfrak B$ to $\mathfrak I_O$ and $\mathfrak T_O$, but the negation of a statement $o\in \mathfrak I_O$ is written as $(\neg o)\wedge O $ in terms of negation in $\mathfrak B$, and negation of $b\in\mathfrak T_O $  is $O\vee\neg b$. So, $O$ plays the role of the maximal element  $1$ in $\mathfrak I_O$ and the minimal element  $0$ in $\mathfrak T_O$. 
\end{example}

To see that any $\mathcal H$--valued measure algebra $(\mathfrak B,\phi)$ has a canonical completion, we need that the boolean  operations on $\mathfrak B$ are continuous. The following lemma shows that Boolean algebra operations are continuous, by showing that they are compatible with the metric on $\mathcal H$.

\begin{lemma} \label{bounded operations}Using the metric induced on $\mathfrak B$ by $\phi:\mathfrak B\longrightarrow \mathcal H$, the operation $\neg:\mathfrak B\longrightarrow \mathfrak B$ is an isometry, and the operations $\vee$  and $\wedge$ do not increase distances, so 
\[\norm{\phi(b\wedge c)-\phi(b'\wedge c)}\leq\norm{\phi(b)-\phi(b')}\geq\norm{\phi(b\vee c)-\phi(b'\vee c)}\]
\end{lemma}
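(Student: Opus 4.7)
The plan is to work directly from the additivity axiom \eqref{additive} and extract enough orthogonality to invoke the Pythagorean theorem on a suitable two-piece decomposition.

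First I would handle $\neg$. Setting $a=1$ in \eqref{additive} gives $\phi(1)=\phi(b)+\phi(\neg b)$, so $\phi(\neg b)-\phi(\neg b') = -(\phi(b)-\phi(b'))$ and $\neg$ is an isometry. Next I would prove the auxiliary fact that $x\wedge y = 0$ implies $\langle\phi(x),\phi(y)\rangle=0$: apply \eqref{additive} with $a=x\vee y$, $b=x$. Using $x\le x\vee y$ and $y\le\neg x$, we get $a\wedge b=x$ and $a\wedge\neg b = y$, so \eqref{additive} gives the desired orthogonality (and also $\phi(x\vee y)=\phi(x)+\phi(y)$ for disjoint $x,y$).

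For $\wedge$, write the decomposition
\[\phi(b)-\phi(b') = \bigl[\phi(b\wedge c)-\phi(b'\wedge c)\bigr] + \bigl[\phi(b\wedge\neg c)-\phi(b'\wedge\neg c)\bigr].\]
To see the two bracketed vectors are orthogonal, I expand into four inner products. Two of them, $\langle\phi(b\wedge c),\phi(b\wedge\neg c)\rangle$ and $\langle\phi(b'\wedge c),\phi(b'\wedge\neg c)\rangle$, vanish directly by \eqref{additive}. The other two, $\langle\phi(b\wedge c),\phi(b'\wedge\neg c)\rangle$ and $\langle\phi(b'\wedge c),\phi(b\wedge\neg c)\rangle$, vanish by the auxiliary fact above, since e.g.\ $(b\wedge c)\wedge(b'\wedge\neg c)=(b\wedge b')\wedge(c\wedge\neg c)=0$. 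Pythagoras then gives
\[\|\phi(b)-\phi(b')\|^2 = \|\phi(b\wedge c)-\phi(b'\wedge c)\|^2 + \|\phi(b\wedge\neg c)-\phi(b'\wedge\neg c)\|^2,\]
from which the inequality for $\wedge$ is immediate.

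Finally, for $\vee$, I would apply \eqref{additive} with $a=b\vee c$ and the element $c$ to obtain $\phi(b\vee c) = \phi(c)+\phi(b\wedge\neg c)$, and similarly for $b'$; subtracting gives $\phi(b\vee c)-\phi(b'\vee c)=\phi(b\wedge\neg c)-\phi(b'\wedge\neg c)$, which is bounded in norm by $\|\phi(b)-\phi(b')\|$ by the $\wedge$ case just proved (or, alternatively, by De Morgan together with the isometry of $\neg$). The only real subtlety is the four-term orthogonality verification in the $\wedge$ step; once the disjointness lemma is in place this is mechanical, so I do not expect serious obstacles.
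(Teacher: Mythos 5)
Your proof is correct, but it takes a genuinely different route through the key step for $\wedge$. The paper decomposes the difference along the symmetric difference of $b$ and $b'$: it sets $x=b'\wedge\neg b$, $y=b\wedge\neg b'$, derives $\phi(b)-\phi(b')=\phi(y)-\phi(x)$ and $\phi(b\wedge c)-\phi(b'\wedge c)=\phi(y\wedge c)-\phi(x\wedge c)$, and then uses orthogonality of the disjoint pieces together with $\norm{\phi(y\wedge c)}\leq\norm{\phi(y)}$. You instead decompose along $c$ versus $\neg c$, writing $\phi(b)-\phi(b')$ as the sum of the two bracketed differences and checking orthogonality by a four-term expansion (two terms from the axiom directly, two from your disjointness lemma). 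Both arguments rest on the same fact --- disjoint statements have orthogonal images --- which the paper asserts without the explicit derivation you supply from equation (\ref{additive}) with $a=x\vee y$; making that explicit is a genuine improvement in completeness. Your decomposition also buys something the paper's does not state: the exact Pythagorean identity
\[\norm{\phi(b)-\phi(b')}^2 = \norm{\phi(b\wedge c)-\phi(b'\wedge c)}^2 + \norm{\phi(b\wedge\neg c)-\phi(b'\wedge\neg c)}^2,\]
which quantifies the contraction rather than merely bounding it. For $\vee$ your direct computation $\phi(b\vee c)=\phi(c)+\phi(b\wedge\neg c)$ and the paper's De Morgan argument are interchangeable, and you note both. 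No gaps.
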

\begin{proof} We have that $\phi(\neg b)+\phi(b)=\phi(1)$, so 
\[\phi(\neg b)-\phi(\neg b')=\phi(1)-\phi(b)+\phi(1)+\phi(b') = -(\phi(b)-\phi(b'))\]
and $\neg:\mathfrak B\longrightarrow \mathfrak B$ is an isometry.

We will now show that the map $b\mapsto b\wedge c$ does not increase distances.  Let $x=b'\wedge\neg b$ and $y=b\wedge\neg b'$ so we have that $x\vee b=y\vee b'$ and $x\wedge y=0=x\wedge b=y\wedge b'$. In particular, this implies that $\phi(x)+\phi(b)=\phi(y)+\phi(b')$ so, 
\[\phi(b)-\phi(b')=\phi(y)-\phi(x)\ .\]
 Then
\[\phi((b\vee x)\wedge c)=\phi((y\vee b')\wedge c)\]
and, as $b\wedge c$ is disjoint from $x\wedge c$, we get
\[\phi(b\wedge c)+\phi(x\wedge c)=\phi(y\wedge c)+\phi(b'\wedge c)\]
so
\[\phi(b\wedge c)-\phi(b'\wedge c)=\phi(y\wedge c)-\phi(x\wedge c)\]
As $x$ and $y$ are disjoint, $\phi(x)$ and $\phi(y)$ are orthogonal, and similarly, $\phi(y\wedge c)$ and $\phi(x\wedge c)$ are orthogonal. The fact that $\norm{\phi(y\wedge c)}\leq \norm{\phi(y)}$ then implies the required result:
\[\norm{\phi(b\wedge c)-\phi(b'\wedge c)}\leq \norm{\phi(y)-\phi(x)}=\norm{\phi(b')-\phi(c')}\]
The analogous result for the operation $\vee$ follows from the fact that $\neg$ is an isometry, and $a\vee b=\neg(\neg a\wedge \neg b)$.

\end{proof}

   \begin{lemma}\label{completion}Each $\mathcal H$--valued measure algebra $(\mathfrak B,\phi)$ has a unique completion $(\bar{\mathfrak B},\phi)$ to a $\mathcal H$--valued measure algebra such that the following diagram commutes
   \[\begin{tikzcd}\mathfrak B\dar\rar{\phi} & \mathcal H
   \\ \bar{\mathfrak B}\ar{ur}{\phi}\end{tikzcd}\]
   and such that $\phi(\bar{\mathfrak B})\subset\mathcal H$ is closed. 
   
    Moreover, $\bar{\mathfrak B}$ is $\sigma$--complete, with countable meets defined as
   \[\bigwedge_i b_i:=\lim_{n\to\infty} b_1\wedge\dotsb \wedge b_n\ .\]
and the countable join of $\{b_i\}$  defined as  
\[\bigvee_i b_i:=\lim_{n\to\infty} b_1\vee\dotsb \vee b_n\ .\]
With this structure, $\phi$ is countably additive, in the sense that if $b_i\wedge b_j=0$ for all $i,j\in\mathbb N$, 
\[\phi\left(\bigvee_i b_i\right)=\sum_i\phi(b_i):=\lim_{n\to\infty} \sum_{i=1}^n\phi(b_i)\ .\]

    \end{lemma}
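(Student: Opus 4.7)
The plan is to realise $\bar{\mathfrak B}$ as the metric completion of $\mathfrak B$ with respect to the pullback metric $d(b,b'):=\norm{\phi(b)-\phi(b')}$, which is a genuine metric because $\phi$ is injective. By Lemma \ref{bounded operations}, the operations $\neg$, $\wedge$, $\vee$ are uniformly continuous (in fact $\neg$ is an isometry and $\wedge,\vee$ are $1$--Lipschitz in each variable), so they extend uniquely to $\bar{\mathfrak B}$. Equivalently, since $\phi$ is an isometric embedding by construction, I can identify $\bar{\mathfrak B}$ with the closure of $\phi(\mathfrak B)$ in $\mathcal H$ and extend $\phi$ as inclusion. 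The Boolean algebra axioms, being universally quantified equations in the operations, survive passage to a dense subspace under continuous extension, so $(\bar{\mathfrak B},\neg,\wedge,\vee)$ is again a Boolean algebra. The defining relations $\langle\phi(a\wedge b)\vert\phi(a\wedge\neg b)\rangle=0$ and $\phi(a)=\phi(a\wedge b)+\phi(a\wedge\neg b)$ also extend by continuity of the inner product and addition.

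For $\sigma$--completeness, given $\{b_i\}\subset\bar{\mathfrak B}$, set $c_n:=b_1\wedge\dotsb\wedge b_n$. Since $c_{n+1}\leq c_n$ in the Boolean order, we have $c_n=c_{n+1}\vee(c_n\wedge\neg c_{n+1})$ with disjoint summands, so the additivity relation gives the orthogonal decomposition $\phi(c_n)=\phi(c_{n+1})+\phi(c_n\wedge\neg c_{n+1})$. Iterating, for $m>n$, $\norm{\phi(c_n)-\phi(c_m)}^2=\norm{\phi(c_n)}^2-\norm{\phi(c_m)}^2$. Since $\norm{\phi(c_n)}^2$ is a decreasing sequence in $[0,\norm{\psi}^2]$, it is Cauchy, hence $\{c_n\}$ is Cauchy in $\bar{\mathfrak B}$ and converges to some $c\in\bar{\mathfrak B}$. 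The analogous argument for increasing partial joins handles $\bigvee_i b_i$.

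The main obstacle, and where I would spend most of the write-up, is showing that this metric limit $c$ is genuinely the greatest lower bound of $\{b_i\}$ in the Boolean order on $\bar{\mathfrak B}$, so that it deserves the name $\bigwedge_i b_i$. For this I would argue: for each fixed $i$, $c_n\wedge b_i=c_n$ for all $n\geq i$, and taking the limit using continuity of $\wedge$ gives $c\wedge b_i=c$, so $c\leq b_i$. Conversely, if $d\leq b_i$ for all $i$, then $d\wedge c_n=d$ for every $n$, and continuity again yields $d\wedge c=d$, so $d\leq c$. This gives the universal property. Countable additivity of $\phi$ under disjointness follows immediately: finite additivity gives $\phi(b_1\vee\dotsb\vee b_n)=\sum_{i=1}^n\phi(b_i)$ with pairwise orthogonal summands, and continuity of $\phi$ pushes this to the limit.

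Finally, for uniqueness of the completion, any $(\mathcal H$--valued$)$ measure algebra extending $(\mathfrak B,\phi)$ with closed image in $\mathcal H$ must contain every Cauchy limit of images of elements of $\mathfrak B$, so its image is precisely the closure of $\phi(\mathfrak B)$; injectivity of $\phi$ on the extension together with continuity of the Boolean operations forces the extended structure to agree with the one constructed above.
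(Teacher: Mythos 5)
Your proposal is correct and follows essentially the same route as the paper: metric completion via the pullback metric, extension of the Boolean operations using Lemma \ref{bounded operations}, the orthogonal-decomposition Cauchy argument for the partial meets, verification of the two universal-property identities by continuity, and countable additivity from finite additivity plus continuity of $\phi$. The only cosmetic difference is that you identify $\bar{\mathfrak B}$ outright with the closure of $\phi(\mathfrak B)$ in $\mathcal H$, which the paper records as a consequence rather than a definition.
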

\begin{proof} Let $\bar{\mathfrak B}$ be the completion of $\mathfrak B$ as a metric space, using the induced metric from $\mathcal H$.
Lemma \ref{bounded operations} implies that the Boolean algebra operations $\neg, \wedge, \vee$ are continuous with respect to our metric, so extend to the completion $\bar {\mathfrak B}$. As the axioms of a Boolean algebra or ring are written as equations in these continuous operations, these axioms are also satisfied on the completion. So, $\bar {\mathfrak B}$ is a Boolean algebra. The map $\phi:\mathfrak B\longrightarrow\mathcal H$ is continuous, so also extends uniquely to a continuous map $\phi:\bar{\mathfrak B}\longrightarrow \mathcal H$. As equation (\ref{additive}) is comprised of continuous operations, it is also satisfied on $\bar{\mathfrak B}$, so $(\bar{\mathfrak B},\phi)$ is a $\mathcal H$--valued measure algebra, whose image in $\mathcal H$ is the closure of $\phi(\mathfrak B)$. It is also the unique such algebra, because  the Boolean algebra operations and $\phi$ are automatically continuous, so must coincide with the extensions of the analogous operations on $\mathfrak B\subset\bar{\mathfrak B}$.

It remains to show that $\bar{\mathfrak B}$ is $\sigma$--complete and that $\phi$ is countably additive. Given a countable collection of elements $b_i$ of $\bar {\mathfrak B}$, the sequence of finite joins
 \[ c_n:=b_1\wedge\dotsb \wedge b_n\]
 is a Cauchy sequence. To see this, note that 
 \[\phi(c_{n})=\phi(c_{n+k})+\phi(c_{n}\wedge\neg c_{n+k})\]
 is an orthogonal decomposition, so  
 \[\norm{\phi(c_{n})-\phi(c_{n+k})}=\sqrt{\norm{\phi(c_n)}^2-\norm{\phi(c_{n+k})}^2}\ .\]
 We know that $\norm{\phi(c_n)}^2$ is a Cauchy sequence because it is bounded below and monotonically decreasing, so we can use the inequality above to conclude that  $c_n$ is also a Cauchy sequence.  Accordingly, $c_n$ has a unique limit in $\bar {\mathfrak B}$, so we can define the countably infinite meet of $\{b_i\}$ as 
\[\bigwedge_i b_i:=\lim_{n\to\infty} b_1\wedge\dotsb \wedge b_n \ .\]
This is the infimum of $\{b_i\}$. In particular we have  that $b_j \bigwedge_ib_i=\bigwedge_i b_i $ because this equation holds with $\bigwedge_i b_i$ approximated by $c_n$ with $n>j$. Moreover, if $b_i\wedge a=a$ for all $i$, we have that $a\bigwedge_i b_i=a$, because $a\wedge c_n=a$ for all $n$. These two properties identify $\bigwedge_ib_i$ as the infimum of $\{b_i\}$. 
Similarly, we can define the supremum of $\{b_i\}$ as
\[\bigvee_i b_i:=\lim_{n\to\infty} b_1\vee\dotsb \vee b_n\ .\]

Then, as $\phi$ is continuous we have
\[\phi\left(\bigvee_i b_i\right)=\lim_{n\to\infty} \phi(b_1\vee\dotsb \vee b_n)\ .\]
It follows that if $b_i\wedge b_j=0$, we have
\[\phi\left(\bigvee_i b_i\right)=\lim_{n\to\infty} \sum_{i=1}^n\phi(b_i)\ .\]
so $\phi$ is countably additive.
\end{proof}

   \begin{remark} If the causal past of $V$ contains $U$, then \[(\mathfrak B_U,\phi)\subset (\mathfrak B_V,\phi)\ .\] 
   The completion of $(\mathfrak B_U,\phi)$ to a $\sigma$--closed $\mathcal H$--valued measure algebra $(\bar{\mathfrak B}_U,\phi)$ is functorial, so whenever the causal past of $V$ contains $U$, we still have
\[(\bar{\mathfrak B}_U,\phi)\subset (\bar{\mathfrak B}_V,\phi)\ .\]
    This is analogous to the strongest condition we could expect on $\sigma$-algebras $\mathfrak F_U$ encoding classical information. 
   \end{remark}

Let us replace our $\mathcal H$--valued measure algebras $\bar{\mathfrak B}_U$ with $\sigma$-algebras $\mathfrak F_U$ on some space $X$. Let $\mathfrak B$ be the Boolean algebra of totally classical projection statements, that is, $\mathfrak B=\mathfrak B_U$ for  $U$  the entire space-time. Then define
\[X:=\hom(\bar {\mathfrak B},\{0,1\})\]
as the set of Boolean algebra homomorphisms from $\bar{\mathfrak B}$ to the $2$-element Boolean algebra. Each element of $X$ corresponds to a logically consistent assignment of `true' or `false' to each statement in $\bar{\mathfrak B}$ --- the set of `true' statements sent to 1 is called an ultrafilter, and the set of `false' statements sent to $0$ is called a maximal ideal. For any statement $b\in\bar{\mathfrak B}$, define
\[S_{b}\subset X\]
to be the set of homomorphisms sending $b$ to $1$. Let $\mathfrak F$ be the $\sigma$--algebra generated by $S_{b}$ for all $b\in \bar{\mathfrak B}$, and let $\mathfrak F_U\subset \mathfrak F$ be the $\sigma$--algebra generated by $S_{b}$ for all $b\in\bar{\mathfrak B}_U$.

 \begin{lemma}\label{space from algebra}
There exists a unique measure $\mu$ on $(X,\mathfrak F)$ such that $\mu(S_{b})=\norm{\phi(b)}^2$. Moreover, with this measure on $X$, $\bar{\mathfrak B}_U$ is the quotient of $\mathfrak F_U$ by the ideal of measure zero subsets.  We also have that, $(X,\{\mathfrak F_U\},\mu)$ satisfies our expectations for classical information: if the causal past of $V$ contains $U$, then
\[\mathfrak F_U\subset \mathfrak F_V\]
 \end{lemma}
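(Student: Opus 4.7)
The plan is to follow the Stone/Loomis--Sikorski template: realise $\bar{\mathfrak{B}}$ as the clopens of a compact Stone space, extend the pre-measure $\mu_0(S_b):=\|\phi(b)\|^2$ to $\mathfrak{F}$ via Carath\'eodory, and then identify $\bar{\mathfrak{B}}_U$ with the measure-theoretic quotient of $\mathfrak{F}_U$ by a standard $\sigma$-algebra closure argument.

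First I would check that $b\mapsto S_b$ is an injective Boolean algebra homomorphism from $\bar{\mathfrak{B}}$ into $\mathcal{P}(X)$, so that $\mathcal{A}:=\{S_b:b\in\bar{\mathfrak{B}}\}$ is a Boolean subalgebra of $\mathcal{P}(X)$. Injectivity uses the ultrafilter lemma applied to any nonzero $b$. Define $\mu_0$ on $\mathcal{A}$ by $\mu_0(S_b):=\|\phi(b)\|^2$; this is well-defined by injectivity of $\phi$, and finitely additive because the orthogonal decomposition in (\ref{additive}) gives $\|\phi(a\vee b)\|^2=\|\phi(a)\|^2+\|\phi(b)\|^2$ whenever $a\wedge b=0$. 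The crucial point is that countable additivity of $\mu_0$ on $\mathcal{A}$ is automatic: endow $X$ with the Stone topology generated by the clopens $S_b$, making $X$ compact Hausdorff, so that any disjoint countable cover $\bigsqcup_n S_{b_n}=S_b$ living \emph{inside} $\mathcal{A}$ must actually be finite. Carath\'eodory's theorem then produces the required measure $\mu$ on $\mathfrak{F}=\sigma(\mathcal{A})$, and uniqueness follows from the $\pi$--$\lambda$ theorem since $\mu$ is finite and $\mathcal{A}$ is a $\pi$-system generating $\mathfrak{F}$.

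For the quotient identification, I would show that $b\mapsto[S_b]$ is a bijection $\bar{\mathfrak{B}}_U\to\mathfrak{F}_U/\{\text{null sets}\}$. Injectivity is immediate: $S_a\triangle S_b=S_{a\triangle b}$ has measure $\|\phi(a\triangle b)\|^2$, which vanishes only when $a=b$. For surjectivity, let $\mathcal{D}\subset\mathfrak{F}_U$ consist of those $E$ differing from some $S_b$, $b\in\bar{\mathfrak{B}}_U$, by a null set; I would verify $\mathcal{D}$ is a $\sigma$-algebra. Closure under complementation is immediate, and the only nontrivial case is countable unions: given $E_n\triangle S_{b_n}$ null, set $c:=\bigvee_n b_n\in\bar{\mathfrak{B}}_U$ using the $\sigma$-completeness of Lemma \ref{completion}. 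Then $\bigcup_n S_{b_n}\subseteq S_c$ since each $b_n\leq c$, and the countable additivity of $\phi$ from Lemma \ref{completion} yields $\mu(S_c)=\|\phi(c)\|^2=\lim_n\|\phi(b_1\vee\dots\vee b_n)\|^2=\mu\bigl(\bigcup_n S_{b_n}\bigr)$, so $S_c\triangle\bigcup_n S_{b_n}$ is null; a triangle inequality for symmetric differences finishes the step. Since $\mathcal{D}$ contains all generators of $\mathfrak{F}_U$, we get $\mathcal{D}=\mathfrak{F}_U$.

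The final assertion $\mathfrak{F}_U\subset\mathfrak{F}_V$ when $U$ lies in the causal past of $V$ is then immediate from the remark preceding the lemma, since $\bar{\mathfrak{B}}_U\subset\bar{\mathfrak{B}}_V$ and the generators of $\mathfrak{F}_U$ are thus among those of $\mathfrak{F}_V$. The subtlest point in the proof is the passage from countable joins in the abstract $\sigma$-complete Boolean algebra $\bar{\mathfrak{B}}$ to actual countable unions of clopens in $X$: generically $\bigcup_n S_{b_n}$ is a \emph{proper} subset of $S_{\bigvee_n b_n}$, and the whole argument hinges on the countable additivity of $\phi$ established in Lemma \ref{completion}, which is what forces this discrepancy to be $\mu$-null.
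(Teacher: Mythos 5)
Your proposal is correct, but it reaches the result by a more self-contained route than the paper. The paper's proof is short because it invokes the Loomis--Sikorski representation theorem as a black box: $\bar{\mathfrak B}$ is exhibited as a quotient of $\mathfrak F$ by a $\sigma$-ideal $\mathfrak N$, the measure $\mu(b)=\norm{\phi(b)}^2$ is pulled back along the quotient map, and injectivity of $\phi$ identifies $\mathfrak N$ with the null sets. You instead build the measure from scratch: Carath\'eodory extension of the premeasure on the clopen algebra, with countable additivity of the premeasure obtained for free from compactness of the Stone space, and uniqueness from the $\pi$--$\lambda$ theorem; you then prove the quotient identification directly by a good-sets argument. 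Your observation that $\bigcup_n S_{b_n}$ is generically a \emph{proper} subset of $S_{\bigvee_n b_n}$, and that the discrepancy is forced to be null precisely by the countable additivity of $\phi$ established in Lemma \ref{completion}, is exactly the content that the Loomis--Sikorski citation is doing for the paper --- indeed your argument amounts to a proof of the needed instance of that theorem, which is easier than the general case because a countably additive measure is already present. What the paper's approach buys is brevity; what yours buys is that the reader sees explicitly where $\sigma$-completeness and countable additivity of $\phi$ enter, and the argument does not depend on an external representation theorem. Both proofs handle the final inclusion $\mathfrak F_U\subset\mathfrak F_V$ identically, as an immediate consequence of $\bar{\mathfrak B}_U\subset\bar{\mathfrak B}_V$.
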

 \begin{proof}
 The Stone isomorphism represents an arbitrary Boolean algebra, such as $\bar{\mathfrak B}$, as the Boolean algebra of closed and open subsets of the set of Boolean algebra isomormorphisms $\bar{\mathfrak B}\longrightarrow \{0,1\}$. In particular, sending $b$ to $S_{b}$ is an isomorphism between  $\bar{\mathfrak B}$ and the Boolean algebra of closed and open subsets of $X$.
    The Loomis--Sikoski representation theorem tells us that $\bar {\mathfrak B}$ is the quotient of  $\mathfrak F$ by some sub $\sigma$-algebra $\mathfrak N\subset \mathfrak F$; see \cite{Sikorski} or the discussion of this theorem on Terry Tao's blog, \cite{TaoLS}. We can therefore define $\mu$ on $\mathfrak F$ by pulling back the measure $\mu(b):=\norm{\phi(b)}^2$ from $\bar{\mathfrak B}=\mathfrak F/\mathfrak N$. This measure satisfies $\mu(S_{b})=\norm{\phi(b)}^2$, and the fact that $\phi:\bar{\mathfrak B}\longrightarrow \mathcal H$ is injective implies that $\mathfrak N$ is the ideal of measure $0$ subsets. Similarly, $\bar {\mathcal B}_U$ is the quotient of $\mathfrak F_U$ by the ideal of measure $0$ subsets in $\mathfrak F_U$.
    
Moreover,   if the causal past of $V$ contains $U$ we have that $\mathfrak B_U\subset \mathfrak B_V$, so $\bar{\mathfrak B}_U\subset\bar{\mathfrak B}_V$, and therefore $\mathfrak F_U\subset\mathfrak F_V$.

\end{proof}

\subsection{Relative verses absolute classical reality.}

We have now constructed a probabilistic model $(X,\{\mathfrak F_U\},\mu)$ for classical information. In Lemma \ref{separate representatives} and Corolllary \ref{totally classical BA} we showed that totally classical projection-statements form $(\mathcal H,\psi)$--valued measure algebras $(\mathfrak B_U,\phi)$. In Lemma \ref{completion}, we completed these measure algebras to obtain $(\bar{\mathfrak B}_U,\phi)$ and then in Lemma \ref{space from algebra} constructed $(X,\{\mathfrak F_U\},\mu)$ so that $(\bar{\mathfrak B}_U,\mu)$ is the quotient of $(\mathfrak F_U,\mu)$ by the ideal of measure zero subsets.

In section \ref{approximate}, we will approximate this classical model, but before going on, we emphasise what we did not construct, and will not approximate. Many intuitive conceptions of classical information include a point $\mathfrak r\in X$ to represent `what is really true'. In terms of the boolean algebra $\bar{\mathfrak B}$ such a point $\mathfrak r\in X$ corresponds to an ultrafilter $\mathfrak T\subset \bar{\mathfrak B}$ comprised of the set of all statements that are `really true'. We did not construct this, and will not attempt to approximate such a thing, because this becomes deeply problematic\footnote{Constructing $\mathfrak r\in X$ would require some kind of dynamic wavefunction-collapse theory, like \cite{Dynamic_collapse}. It is hard to imagine a version of this compatible with relativity.  } in the quantum setting.   

Without a choice of absolute truth $\mathfrak T\subset \bar{\mathfrak B}$, we are left with only a relative notion of truth, such as advocated by Rovelli in \cite{Rovelli}, or Everett in \cite{Everett}. Such a relative notion of truth is perfectly sufficient to account for what we observe, and how we make probabilistic predictions.

Recall how we represent the information known by an observer within $U$. This information is represented by a statement $O\in\mathfrak B_U$. This observer can deduce the statements in the filter $\mathfrak T_O =\bar{\mathfrak B}\vee O$, and  considers statements in the ideal $\mathfrak I_O=\bar{\mathfrak B}\wedge O$ to be possibilities. This observer assigns probabilities determined by the restriction of the probability measure $\mu$ to $\mathfrak I_O$, rescaled by $\frac 1{\mu(O)}$  --- this is the usual Bayesian update rule. 

Intuitively and linguistically, we are accustomed to assign an observer a unique identity throughout time. Such a unique identity corresponds to a statement $O_t\in \mathfrak B$ for each time. We expect that, at a later time $t'>t$,  the observer knows their previous state, so $O_t\in \mathfrak T_{O_{t'}}$. Equivalently, $O_{t'}\in\mathfrak I_{O_t}$ for $t'>t$, so, at time $t$, our observer considers  the future state $O_{t'}$ as a possibility.   Here, we get to a nonintuitive point: There will generally be other, equally valid, statements $O'_{t'}\in\mathfrak I_{O_t}$ that could describe the future state of our observer $O_t$. Moreover, these statements can be incompatible, so $O'_{t'}\wedge O_{t'}=0$. For example, these two statements might correspond to observing different outcomes of a quantum measurement. The same is not true for past states of our observer.  If $O_s$ and $O'_s$ both describe a past state of our observer, these must be compatible, because $O_s\wedge O_s'\wedge O_t=O_t$. 

Because our observer does not know of any incompatible past states, it is a reasonable leap for them to not believe in incompatible future states. After all, our observer will never observe such incompatible future states. Our observer's knowledge, represented by the filter $\mathfrak T_{O_t}\subset \mathfrak B$, also increases over time.  Our observer can then be forgiven for making a logical leap, and  believing in a maximal filter $\mathfrak T\subset \mathfrak B$ representing all things that are `really true' --- they then think of $\mathfrak T_{O_t}\subset \mathfrak T$ as the collection of really true statements that they can deduce from their current knowledge.    
 
Although belief in absolute truth is compatible with classical observations, let us consider some consequences of a choice of ultrafilter $\mathfrak T\subset\bar {\mathfrak B}$ representing absolute truth. To each spacetime region $U$, we then have an ultrafilter 
\[\mathfrak T_U=\mathfrak T\cap  \bar{\mathfrak B}_U\] representing `things that are really true and accessible within $U$'. Now suppose that our spacetime is globally hyperbolic, so foliated by Cauchy surfaces $S_t$.  With this notion of time, we can consider $\mathfrak T_t:=\mathfrak T_{S_t}$ as the `things that are really true at time $t$'. For future times $t'>t$, the causal past of $S_{t'}$ contains $S_t$, so $\mathfrak B_{S_t}\subset \mathfrak B_{S_{t'}}$, and therefore $\mathfrak T_t\subset\mathfrak T_{t'}$. This has the reassuring interpretation that `things that were true stay true'. Similarly, `things true in a spacetime region $U$ remain true in any region whose causal past contains $U$'. However,  assuming that $\mathfrak T_{t}$ is a proper subset of $\mathfrak T_{t'}$, we have that our model is non-deterministic: Things that are true at time $t$ are not sufficient to determine everything that is true at time $t'>t$.

Occam's razor might be used to argue for the existence of a unique classical reality, as it appears more complicated  to have `multiple valid classical realities'.   If, however,  one starts with the structure of the measure algebras $(\mathfrak B_U,\phi)$, a choice of ultrafilter $\mathfrak T$ is a lot of extra information, with no physical law available to help with the choices involved. We can use the axiom of choice to know that such an ultrafilter exists, but this in no way helps justify the belief in a unique canonical ultrafilter.  Intuitively, most people believe in $\mathfrak T_t$, representing what is `really true right now', but  fewer believe in predestination, which corresponds to a choice of extension of $\mathfrak T_t$ to $\mathfrak T_{t'}$ for all future times $t'>t$. Quantum mechanics has underlined that there are no sensible rules for determining $\mathfrak T_{t'}$ given $\mathfrak T_{t}$, so predestination is an unjustified belief. Similarly, if $U$ is some open region on the Cauchy surface $S_t$, there are no sensible rules for determining $\mathfrak T_{t}$ from $\mathfrak T_U$, so a belief in  such an extension $\mathfrak T_t\supset \mathfrak T_U$ is also an unjustified belief. The point of view of this paper is that, from the perspective of the observer $O$, any extension of the filter $\mathfrak T_O$ is an unjustified belief, akin to a belief in predestination.

In the classical setting, believing in an unique ultrafilter $\mathfrak T$ representing classical truth is harmless, if unnecessary. In the quantum setting, an analogous belief becomes problematic, because it requires a physical wavefunction collapse mechanism. Such a belief is also not compatible with the version of probability given in this paper.

\section{Approximations to classical information}
\label{approximate}

In Section \ref{precise}, we constructed a model  $(X,\{\mathfrak F_U\},\mu)$ for classical information. It is worth pausing to consider what aspects of this model are unrealistic. 

Realistically, we only expect classical projection-statements to exist, and these are just an approximation of totally classical projection-statements. So,  $(X,\{\mathfrak F_U\},\mu)$ is a mathematically convenient fantasy, emerging from the actual situation in the sense that many predictions we can make using $(X,\{\mathfrak F_U\},\mu)$ are approximately true. We don't expect classical projection-statements localised in $U$ to actually form a Boolean algebra, let alone a $\sigma$--closed Boolean algebra. What we actually expect is that small logical operations on classical projection-statements are themselves approximated by classical projection-statements. We also expect that the probabilities we assign to classical projection-statements only approximately obey the axioms of a probability measure. The goal of this section is to make these expectations more precise.

\subsection{Classical $\mathcal H$--valued measure algebras}

In Corollary \ref{totally classical BA}, each statement in the $(\mathcal H,\psi)$--valued measure algebra $(\mathfrak B_U,\phi)$ was an equivalence class of totally classical projection-statements, totally visible in the spacetime region $U$. This is not realistic, but given a $\mathcal H$--valued measure algebra $(\mathfrak B,\phi)$ we still can  ask the question `where in spacetime can the statement $b\in \mathfrak B$ be accessed?'

\begin{defn}\label{statement visibility}Define the visibility of a statement $b\in\mathfrak B$ in $U$ to be 
\[\text{Vis}_U(b):=\begin{cases}\sup_{P}\ln\norm{\phi(b)}-\ln\norm{P\psi-\phi(b)} &\text{ if }b\neq 0
\\ 0 &\text{ if }b=0\end{cases}\]

where the supremum is taken over projection-statements $P$ supported in $U$.
\end{defn}
The visibility of a statement $b$ in a region $U$ takes values in $[0,\infty]$, with higher numbers indicating more visibility. It is highly relevant where a statement is visible. Not only does this determine where the information is accessible, it is also essential for determining when logical combinations of classical projection-statements should be classical, using Lemmas \ref{replacement} and \ref{record replacement}. For a statement $b\in \mathfrak B$ to be classical, we want $b$ to be visible in many spatially separated regions in spacetime. We also need an extra condition so that logical operations in $\mathfrak B$ reflect operations with classical projection-statements.

\begin{defn}\label{classical statement def} A statement $b$ in  $(\mathfrak B,\phi)$ is classical if there exists a classical projection-statement $P$ such that,
\[\norm{P\psi-\phi(b)}\ll \phi(b)\] 
and such that, 
 for all $a\in\mathfrak B$, 
\[\norm{\phi(b\wedge a)-P\phi(a)}\ll \norm{\phi(b)}\]
\end{defn}
So, a classical statement $b\in\mathfrak B$ is one that is closely approximated by a classical projection-statement $P$, in the sense that $\phi(b)\approx P\psi$ and  $\phi(b\wedge a)\approx P\phi(a)$, and the approximation is close compared to the size of $\norm{\phi(b)}$.

\begin{defn}\label{new classical measure algebra def} A $\mathcal H$--valued measure algebra $(\mathfrak B,\phi)$ is classical, if for some $\epsilon\ll \norm{\phi(1)}$, the set of classical statements in $\mathfrak B$ is $\epsilon$--dense. So, each statement in $b$ is within $\epsilon$ of a classical statement. 
\end{defn}

 Lemma \ref{bounded operations} implies that, given a classical measure algebra $(\mathfrak B,\phi)$ in which classical statements are $\epsilon$--dense, all statements $b\in\mathfrak B$ with $\norm{\phi(b)}\gg \epsilon$ are classical.

\begin{example} Suppose we want to model a continuous measurement with a projection--valued measure on $\mathbb R^n$. By applying this projection-valued measure to some state in $\mathcal H$, we can obtain a measure $\mu$ on $\mathbb R^n$ and a $\mathcal H$--valued measure algebra $(\mathcal B,\phi)$  We would like that some outcomes of this measurement are classical statements that are visible in many separate regions of spacetime.  Which measurement outcomes, (represented by subsets $S\subset\mathbb R^n$), could we hope to be classical? We need the measure of $S$ to be sufficiently large, but this is not all that is needed, as we can expect $S$ to be less classical if it is hard to distinguish from its complement.
\end{example}

Given a classical $\mathcal H$--valued measure algebra $(\mathfrak B,\phi)$, if an observer is described by a classical statement $O\in \mathfrak B$, the ideal $\mathfrak I_O\subset\mathfrak B$ with maximal element $O$ is also a classical $\mathcal H$--valued measure algebra, so long as $ \norm{\phi(O)}$ is sufficiently large. From the perspective of this observer, this classical $\mathcal H$--valued measure algebra $(\mathfrak I_O,\phi)$ represents the statements in  $(\mathfrak B,\phi)$ that are possibilities. Associated to $O$, we also have a filter $\mathfrak T_O\subset\mathfrak B$, comprised of all statements $b\in \mathfrak B$ such that $b\wedge O=O$. These are the statements that the observer regards as true.

In section \ref{cma construction}, we will explain that classical $(\mathcal H,\psi)$--valued measure algebras can be constructed using spatially separated records of classical projection-statements, and show that the resulting measure is approximately independent of the records chosen; see Heuristics \ref{classical measure algebra} and \ref{algebra record replacement}. Thus, we can expect that probabilistic reasoning with classical projection-statements is approximately valid.

\subsection{Bell's theorem }

\label{bell section}

Bell's theorem provides an obstruction to a probabilistic model for quantum information. In this section, we clarify how Bell's theorem applies in our setting.

For any finite,  spatially separated collection, $I$, of  projection-statements, denote by $\mathfrak B_I$ the Boolean algebra generated by these projection-statements, with $P\wedge Q:=PQ$ and $\neg P:=1-P$. For this construction to work, we need $P$ and $Q$ to commute, so that  $PQ$ is a projection, and $P\wedge Q=Q\wedge P$. Let $\phi:\mathfrak B_I\longrightarrow\mathcal H$ be the  $(\mathcal H,\psi)$--valued measure sending $P$ to $P\psi$, and let $\mu:\mathfrak B\longrightarrow [0,1]$ be the corresponding probability measure defined by $\mu(P)=\norm{P\psi}^2$. Under the assumption that $P\psi\neq 0$ for $P\neq 0$, we have that $(\mathfrak B_I,\mu)$ is a measure algebra and $(\mathfrak B_I,\phi)$ is a $(\mathcal H,\psi)$--valued measure algebra.

If $I\subset J$ we have a natural inclusion of $\mathcal H$--valued measure algebras
\[\iota_{I\subset J}: \mathfrak B_I\longrightarrow \mathfrak B_J\]
and consequently $(\mathfrak B_I,\phi)$ is a coarse-graining of $(\mathfrak B_J,\phi)$ and $(\mathfrak B_J,\phi)$ is a fine-graining of $(\mathfrak B_I,\phi)$. 

Coming from a classical probabilistic world view,   we could hope  that there is some universal $\mathcal H$--valued measure algebra $(\mathfrak B,\phi)$ so that each of our measure algebras $(\mathfrak B_I,\phi)$ is canonically a coarse-graining  of $(\mathfrak B,\phi)$, compatible with  our natural inclusions  $\iota_{I\subset J}$  in the sense that the following diagram commutes.
\[\begin{tikzcd}(\mathfrak B_I,\phi)\rar{\iota_{I\subset J}}\ar{dr} & (\mathfrak B_J,\phi)\dar
\\ & (\mathfrak B,\phi)\end{tikzcd}\]
Given such a $(\mathfrak B,\phi)$, we could then construct a probabilistic model for quantum information, as in Lemma \ref{space from algebra}. In the language of category theory, we hope for a cocone, $(\mathfrak B,\phi)$,  of the diagram comprised of measure algebras $(\mathfrak B_I,\phi)$  and inclusions $\iota_{I\subset J}$.
 A less infinite version of this hope is that any finite sub-diagram has a cocone, so we could hope that any finite collection of our measure algebras has a common fine-graining compatible with the inclusions $\iota_{I\subset J}$.  In the general setting of quantum mechanics, this hope is crushed by Bell's theorem, even for real-valued measure algebras.  Given projection-statements $A_0$ and $A_1$, spatially separated from projection-statements $B_0$ and $B_1$, there does not in general exist a common fine-graining $(\mathfrak B,\phi)$ so that the following diagram commutes. Moreover, even if we  only require the maps to be approximately measure preserving, such a diagram still does not exist.
 \begin{equation}\label{bell cd}\begin{tikzcd} {\mathfrak B}_{A_0} \dar\ar{dr}& \ar{dl}\ar{dr} \mathfrak B_{B_0} &\dar\ar{dr} \mathfrak B_{A_1} &  {\mathfrak B}_{B_1}\ar{dll}\dar
 \\ {\mathfrak B}_{A_0,B_0}\ar[dotted]{dr} & {\mathfrak B}_{A_0,B_1} \dar[dotted] & {\mathfrak B}_{A_1,B_0}\ar[dotted]{dl}& {\mathfrak B}_{A_1,B_1}\ar[dotted]{dll}
 \\ & {\mathfrak B} \end{tikzcd}\end{equation}
Let us recall Bell's argument in this context. Assuming such a diagram existed, we get a corresponding diagram of probability spaces $X_{\mathfrak B_*}$ with arrows reversed, where $X_{\mathfrak B_*}$ indicates the set of homomorphisms $\mathfrak B_*\longrightarrow\{0,1\}$ as in Lemma \ref{space from algebra}. Given $\{\pm 1\}$--valued functions $a_i$ on $X_{\mathfrak B_{A_i}}$,  and  $b_i$ on $X_{\mathfrak B_{B_i}}$, we have corresponding functions on $X_{\mathfrak B}$, and the function $(a_0+a_1)b_0 +(a_0-a_1)b_1$ on $X_{\mathfrak B}$ is bounded by $2$, and hence has expectation value bounded by $2$. On the other hand, this is the sum of the expectation values of $a_0b_0$, $a_1b_0$, $a_0b_1$ and $-a_1b_1$, and these expectation values can be calculated separately on our given probability spaces $X_{\mathfrak B_{A_i,B_j}}$. The problem observed by Bell is the following: using entangled spin states and non-commuting projections, it is easy to construct situations where the sum of these expectations is $2\sqrt 2$, so no fine-graining exists in such situations. Moreover, even if the maps in diagram (\ref{bell cd}) were approximately measure preserving maps of boolean algebras, we would still get an bound on our expectation value by something close to $2$, so Bell's theorem also rules out this situation.  So, Bell's theorem tells us that we have no hope of a classical probabilistic model for quantum information.

Instead of looking for a classical probabilistic model for \emph{all} quantum information, we are seeking  approximate probabilistic models $(\mathfrak B_I,\phi)$ built only out of classical projection-statements $P\in I$. We will show that such models are classical $(\mathcal H,\psi)$--valued measure algebras, and that small collections of such classical measure algebras  have an approximate common fine-graining. The reason that this will not violate Bell's theorem is as follows:  The  expectation values of $a_ib_j$ can not be directly measured but can be approximated by taking several repeat experiments with two spatially separated experimenters:  Alice choosing from $A_0$ or $A_1$ to measure $a_0$ or $a_1$, and Bob choosing from $B_i$ to measure $b_i$. When Alice chooses to measure $a_0$, this involves a physical interaction that makes  $A_0$ a classical projection-statement\footnote{More accurately, Alice's choice of measurement determines an experimental setup described by a projection-statement $E_{A_i}$, and then $A_iE_{A_i}$ becomes a classical projection statement, but $A_0E_{A_1}$ and $A_1E_{A_0}$ will not be classical projection statements. Then each instance of $A_i$ in diagram (\ref{bell cd}) has to be replaced by the two projection statements $E_{A_i}$ and $A_iE_{A_i}$, and similarly each instance of $B_i$ must be replaced by $\{B_iE_{B_i}, E_{B_i}\}$. Then, there is no problem with this modified diagram (\ref{bell cd}) existing, because now, in our common fine graining $\mathfrak B_{\{A_iE_{A_i}, E_{A_i},B_iE_{B_i},E_{B_i}\}}$, we can only calculate expectations of of $a_i$ and $b_i$ conditioned on the relevant choices of experimental setups.   }, but in this case, there is no reason for $A_1$ to be a classical projection-statement, so we do not require our fine-graining to be compatible with $\mathfrak B_{A_1}$, and the reasoning of Bell's theorem does not apply.

\subsection{Construction of classical $\mathcal H$--valued measure algebras}
\label{cma construction}

Choose a space-like surface $S$ and consider the collection of records, $P$ on $S$, of classical projection-statements. We consider these as giving the `classical information available on $S$'. The key extra thing we have now is that each such record $P$ is approximated by many spatially separated records $P'$, so 
\[P\psi\approx P'\psi\ .\]

Use the notation $\mathfrak B_I$ for the Boolean algebra generated by a spatially separated collection, $I$, of records of classical projection-statements. This is as in Section \ref{bell section}, with the key difference that each of these projection-statements will also be visible in many spatially separated regions. If we have a different set $I'$ of records of the same collection of classical projection statements, the bijection between $I$ and $I'$ induces an isomorphism  of Boolean algebras 
\[f: \mathfrak B_I\longrightarrow \mathfrak B_{I'}\ .\] 

\begin{heuristic}\label{separated record replacement} Given a collection of classical projection-statements in the past of a space-like hypersurface $S$, and two collections, $I$ and $I'$ of spatially separated records of these classical projection statements on $S$, the isomorphism $f:\mathfrak B_I\longrightarrow \mathfrak B_I'$  approximately preserves the $\mathcal H$--valued measure algebra structure, so the following diagram approximately commutes.
\[\begin{tikzcd}\mathfrak B_I \rar{f}\dar{\phi} & \mathfrak B_{I'}\ar{dl}{\phi}
\\ \mathcal H \end{tikzcd}\]
\end{heuristic}
As a consequence,  the probabilities we assign to logical statements in records of classical projection-statements are approximately independent of what records we choose. 

To justify Heuristic \ref{separated record replacement}, we want that $\phi(f(P))\approx \phi(P)$ for $P\in \mathfrak B_I$. As $I$ and $I'$ are records of a single collection of classical projection-statements,  we have that this is true for $P\in I$, but it is by no means obvious that this still holds for logical statements $E$ in the projection-statements in $I$.
Suppose that the projection-statements in $I$ are spatially separated from the projection-statements in $I'$. Then, Lemma \ref{record replacement} implies that 
\begin{equation}\phi(E)=E\psi\approx f(E)\psi=\phi(f(E))\end{equation}
so $f:(\mathfrak B_I,\phi)\longrightarrow (\mathfrak B_{I'}, \phi)$ is an approximately measure preserving map, and we can conclude Heuristic \ref{separated record replacement} in this case.  Even if the projection-statements in $I$ are not spatially separated from the projection-statements in $I'$, Lemma \ref{record replacement} still implies that $E\psi\approx f(E)\psi$ so long as the projection-statements in $I$ are visible in some region spatially separated from all projections in $I$ and $I'$. Note however that the error in this approximation, $\norm{E\psi-f(E)\psi}$ is significantly smaller when $E$ can be written as a small logical expression in the projection-statements $P\in I$. So, we have the following caveat:
\begin{heuristic}\label{caveat}Even though the Boolean isomorphism $f:\mathfrak B_I\longrightarrow \mathfrak B_{I'}$ is an approximately measure preserving map, the approximation is better for small logical combinations of the projection-statements $P$ in $I$. In particular, we can expect the size of the error for $a\wedge b$ or $a\vee b$ to be bounded by the sum of the error  for $a$ and for $b$.
\[\norm{\phi(a\wedge b)-\phi(f(a\wedge b))}\leq \norm{\phi(a)-\phi(f(a))}+\norm{\phi(b)-\phi(f(b))}\]
\end{heuristic}

Heuristic \ref{separated record replacement} implies that, for assigning probabilities to logical combinations of classical projection-statements, it should not matter what spatially separated records we chose. At the expense of obtaining a worse approximation, Heuristic \ref{separated record replacement} can be upgraded to remove the condition that both collections of records are on the same space-like hypersurface. In particular, we can choose a sequence of spatially separated records $I_1=I$ through $I_n=I'$, such that the projections in  $I_n$ and  $I_{n+1}$ are spatially separated. So long as $n$ is not too large, we arrive at the following.

\begin{heuristic}\label{algebra record replacement} Given two collections $I$ and $I'$ of spatially separated records of a given collection of classical projection-statements, the boolean isomorphism $f:\mathfrak B_I\longrightarrow \mathfrak B_{I'}$ is  approximately measure preserving.
\end{heuristic}

As a corollary, if $I$ consists of spatially separated records of some collection of classical projection-statements, and $J$ includes records of these same projection-statements, (and possibly records of some other projection statements), we get a canonical inclusion of boolean algebras
\[\mathfrak B_I\longrightarrow \mathfrak B_J\]
that is approximately measure preserving. So, $(\mathfrak B_J,\phi)$ is an approximate fine-graining of $(\mathfrak B_I,\phi)$. Moreover, we expect that sufficiently small collections of such $(\mathfrak B_{I_i},\phi)$  have a common approximate fine-graining.   This is precisely what Bell's theorem rules out when we use arbitrary projection-statements in place of classical projection-statements.

\

With Heuristic  \ref{algebra record replacement}, we can  argue that $(\mathfrak B_I,\phi)$ should be  a classical $(\mathcal H,\psi)$--valued measure algebra, in analogy to Lemma \ref{separate representatives}. In particular, each record $P\in I$ should have many spatially separated records $P'$ in the causal future of $P$, because $P$ is a record of a classical projection statement. So we can expect to find some spatially separated collection of such records $I'$. With Heuristic \ref{algebra record replacement}, we can expect that $f:\mathfrak B_I\longrightarrow\mathfrak B_{I'}$ is approximately measure preserving, so for each $E\in \mathfrak B_I$, we expect that
\[E\psi=\phi(E)\approx\phi(f(E))=f(E)\psi\]
So long as $\norm{\phi(E)}$ is large enough compared to the error in the above approximation, we then get that $f(E)$ is a record of $E$, and can conclude that $E$ itself is a classical projection-statement. We therefore arrive at the following.
\begin{heuristic} \label{classical measure algebra} Suppose that $I$ is a small collection of spatially separated records of classical projection-statements. Then, we expect $(\mathfrak B_I,\phi)$ to be a classical $(\mathcal H,\psi)$--valued measure algebra. 
%In particular we expect that there is some $\epsilon:\mathfrak B_I\longrightarrow[0,1]$ so that
%\begin{itemize}
%\item $\epsilon(P)\leq\norm{\phi(P)}$
%\item $\epsilon(\neg P)=\epsilon(P)$
%\item $\epsilon(P\wedge Q)\leq \epsilon(P)+\epsilon(Q)$
%\item $P\in \mathfrak B_I$ is classical if $\epsilon(P)\ll \norm{\phi(P)}$
%\item $\epsilon(P)\ll\norm{\phi(P)}$ for $P\in I$.
%\end{itemize}

\end{heuristic}

 \section{Decoherence leading to classical projection-statements}\label{Decoherence}

The main thesis of this paper the following: our perception of classical reality within a quantum system is described by classical projection-statements. So far, we have not addressed the question of how classical projection-statements could possibly arise within a quantum system. It is certainly not true that nontrivial classical projection-statements exist for any choice of quantum system $(\mathcal H,\psi)$, even given a notion of projection-statements localised in regions of spacetime.  In this section, we give simplified examples demonstrating how classical projection-statements can arise.\footnote{A more complete discussion of the topic would go significantly further by demonstrating that classical projection-statements exist in realistic quantum models of our universe, and further demonstrating that our perceptions are sufficiently represented by these classical projection-statements. This would be a significantly more challenging project. } We also develop some heuristics limiting how visible we can expect projection-statements to be. These limits to visibility are important for our theory of natural probability. 

The decoherence program has gone a long way towards explaining aspects of measurement in quantum physics --- for surveys see \cite{survey,evidence,Zurekreview}. Decoherence means different things to different people. It can refer to:
\begin{enumerate}
\item loss of interference patterns exhibited a quantum subsystem, (important to avoid when designing a quantum computer),
\item  irretrievable loss of information from a quantum subsystem via entanglement with an inaccessible environment, or
\item    information from a quantum subsystem being recorded in fragments of the environment, or Quantum Darwinism, \cite{Zurekreview}.
\end{enumerate}
These aspects of decoherence provide satisfactory resolution to many apparently paradoxical aspects of quantum measurement, without resorting to wavefunction collapse.
 \begin{example}\label{2 slit}Consider a double slit experiment. Such an experiment sends photons  through two slits, producing an interference pattern on a screen, even when photons are sent through one at a time. This interference pattern is destroyed when we measure which slit the photons pass through. This experiment contradicts the idea that measurement is simply an update of information. This, apparently  paradoxical, situation is easy to explain without wavefunction collapse. The unmeasured photons reaching the screen can be represented by a linear combination $\frac 1{\sqrt 2}(\ket{\text{left}}+\ket{\text{right}})$ of wavefunctions representing a photon passing through the left or the right slit. We can include the measurement apparatus in our quantum system, then model our situation by a unitary transformation (or premeasurement) as follows
 \[\frac 1{\sqrt 2}(\ket{\text{left}}+\ket{\text{right}})\otimes\ket{\text{apparatus}}\mapsto \frac 1{\sqrt 2}(
 \ket{\text{left}}\otimes\ket{l}+\ket{\text{right}}\otimes\ket{r}) \]
After this unitary transformation, the interference pattern on the screen disappears, because of how the photon is entangled with the apparatus. \end{example}

\begin{example} \label{cat}Decoherence can readily explain a lack quantum effects between  `Schr\"odinger cat states', comprised of a linear combination of wavefunctions describing macroscopically different arrangements of matter, such as an unfortunate cat whose mortality depends on the outcome of a quantum measurement. If a quantum subsystem was described by such a linear combination $a\ket{\text{alive}}+b\ket{\text{dead}}$, interaction with light, air, and dust from the environment will rapidly entangle such a state with the environment, removing any interference between these two states of the cat. 
\end{example}

\begin{example}\label{basis problem}Decoherence can help with the preferred basis problem. When describing a quantum subsystem entangled with the environment, we can not describe the subsystem by a pure quantum state, and it is customary to use density matrices. Using density matrices in the double-slit experiment of Example \ref{2 slit}, the unmeasured photon is described by the pure state 
\[\frac 12(\ket{\text{left}}+\ket{\text{right}})(\bra{\text{left}}+\bra{\text{right}})\]
whereas the measured photon is described by the mixed state
\[\frac 12\ket{\text{left}} \bra{\text{left}}+\frac 12\ket{\text{right}}\bra{\text{right}}\ .\]
It is tempting to interpret this mixed state as a statistical mixture of the pure states, but this density matrix can be written as a convex combination of pure states in a number of inequivalent ways. This issue is referred to as the preferred basis problem, because different choices of Hilbert space basis for the measuring apparatus  gives different interpretations of the photon's density matrix. Decoherence helps to resolve this preferred basis problem if we   assert that different measurement outcomes correspond to macroscopically different `pointer' states, special states that rapidly entangle with the environment and are resistant to decoherence by entanglement with the environment. This idea is referred to as  einselection in \cite{Zurekexistential}. See Section  \ref{density matrices} for further discussion on interpreting density matrices as statistical ensembles.
\end{example}

\begin{remark} All versions of decoherence require a choice of division of $\mathcal H$ into a tensor product of quantum subsystems, and this division  should be regarded as extra information required to describe the outcome of decoherence. To explain the emergence of classical phenomena,  any decoherence-based explanation must either specify the division of $\mathcal H$ or justify why the division does not matter. For example, a simplistic assertion that  a density matrix is a statistical ensemble of states fails this test.

 In this paper, we do not choose a decomposition of $\mathcal H$ into subsystems, but  do assume extra information beyond standard quantum mechanics so that we can specify  where projection-statements are localised. This extra assumption is standard in quantum field theory, and is therefore well justified. In our setup,  a partition of a Cauchy surface corresponds to a  division of $\mathcal H$ into subsystems, but we are careful to ensure that our definitions do not require a chosen partition. The paper \cite{multiverse} uses such geometric information to make a more canonical choice of `environment' subsystem using causal diamonds. 
\end{remark}

 To convey the main idea of Quantum Darwinism, we  consider a drastically simplified example: If $\psi$ starts off approximately in the form 
 \[\psi\simeq a\otimes x_{1}\otimes x_{2}\otimes\dotsb\otimes x_{n}\ \ \ \ \text{ in }\mathcal H=\mathcal H_{0}\otimes \mathcal H_{1}\otimes\dotsb \otimes \mathcal H_{n} \]
 then after\footnote{In other sections, we have been using the Heisenberg picture of quantum mechanics, in which any time evolution acts on operators instead of the state $\psi$, but for this simple example, we are slipping into the Schr\"odinger picture, where the state evolves. So, we are assuming we are on Minkowski space, and have chosen some splitting of spacetime to get a unitary time evolution $U(t)$ on our Hilbert space. In the Heisenberg picture, each projection-statement $P$ has related time-shifted projection-statement $P_t= U(t)\circ P\circ U(t)^{-1}$ so that $U_{P_t}$ is $U_P$ shifted in time by $t$, but in this section, we evolve the state $\psi$ using $U(t)$.    } certain types of quantum interactions of $\mathcal H_{i}$ with $\mathcal H_{0}$,  $\psi$ is approximately in the form 
 \[\psi\simeq\sum_{i}\lambda_i a_{i}\otimes x_{1,i}\otimes x_{2,i}\otimes\dotsb\otimes x_{n,i}\ \ \ \ \text{ in }\mathcal H=\mathcal H_{0}\otimes \mathcal H_{1}\otimes\dotsb \otimes \mathcal H_{n} \]
 where $\{a_{i}\}$  form an orthonormal basis for $\mathcal H_{0}$ and the unit vectors $x_{k,i}$ and $x_{k,j}$  tend to point in different directions for $i\neq j$. This process of copying of information from $\mathcal H_{0}$ into the `environment' represented by $\mathcal H_{1}\otimes\dotsb\otimes \mathcal H_{n}$ is referred to as Quantum Darwinism by Zurek in several papers including \cite{QD,QDclassical}. An example of a Hamiltonian generating such a time evolution is $M\otimes H_{1}+M\otimes H_{2}+\dotsc +M\otimes H_{n}$ where $M:\mathcal H_{0}\longrightarrow \mathcal H_{0}$ has eigenvectors $\{a_{i}\}$ with distinct eigenvalues and $H_{k}$ is some Hamiltonian on $\mathcal H_{k}$. More generally, we could replace $M\otimes H_k$ with any interaction Hamiltonian commuting with $M$. This is  a simplification of any physically realistic situation, see \cite{JoosZeh, QDBM, QDS} for more realistic examples. 
  
   Note that after this interaction, the result of the projection $P_1:=\ket{a_{1}}\bra{a_{1}}$ on $\psi$ can be approximated by the result of any of the projections $X_{k}:=\ket{x_{k,1}}\bra{x_{k,1}}$. For a better approximation, we may take the product of a number of these $X_{i}$. Under our simplistic assumptions, the error $\norm {X_{1}\dotsc X_{n}\psi-P_1\psi}$ can be expected to decay roughly exponentially in the number $n$ of similar systems $\mathcal H_{i}$. Similarly, the projection $P_I:=\sum_{j\in I}\ket {a_j}\bra{a_j}$ onto some range in the spectrum of $M$ can be approximated by the projection $X_k$ on to the span of $\{x_{k,j}\text{ for }j\in I\}$, with a better approximation achieved by taking the product of these projections for all $k$.  We are again left with  the heuristic that the error  decays roughly exponentially,  proportional to the number $n$ of similar systems $\mathcal H_i$ involved. 
  
    If time evolution is given by a Hamiltonian commuting with a projection $P$  on $\mathcal H_0$, an intuitive description of the error  $\norm {X_{1}\dotsc X_{n}\psi-P\psi}$ is that it is the square root of the probability that no interactions occur that distinguish the image of  $P$ from its orthogonal complement. One thing to notice is that we can not expect the size of this error will be proportional to $\norm{P\psi}$, and it is unreasonable to expect that the error vanishes. 
 
 \begin{remark}
 Despite reversible unitary time evolution, decoherence is generally only expected to happen as time increases, due to the requirement that $\psi$ start off at least approximately unentangled\footnote{Although \cite{QDinhazy} implies that at least in some cases, decoherence will still occur if there is some entanglement.} in the $\mathcal H_{i}$ factors. So, if we expect decoherence to produce classical projection-statements, we are expecting some measure of entanglement to increase with the arrow of time. Note also that classical projection-statements  tend to have images that are less entangled. 
\end{remark}  

\begin{example} In the standard interpretation of measurement in quantum field theory, \cite{qftMeasurement}, an observer located in a spacetime region $U$ must use observables localised in $U$ to probe the system.  In particular, for this observer to be able to distinguish projection-statements $P$ and $\neg P$ there must be an observable $O$ localised in $U$ with different statistics on $P\psi$ and $\neg P\psi$. The observables localised in spatially separated regions commute, so can be combined to provide an observable that better distinguishes $P$ and $\neg P$, with the most improvement expected when the statistics of our spatially separated observables are independent.

 For example suppose that we have a spatially separated collection of $n$  projection-statements $P'_i$ with $\prob(P_i'\vert P)>\prob (P_i'\vert \neg P)+\delta$ for some $\delta>0$. We can use the number of observations of $P'_i$ to create a projection-statement $P'$, describing all outcomes where the number of observations is  above $n\delta/2+\sum_i\prob(P_i\vert \neg P)$. Then under the assumption that the statistics of $P'_i$ are independent, a rough estimate for the error is
\[\norm{(P-P')\psi}\leq  e^{-n \delta^2/2}\]
So, in this situation we expect the visibility of $P$ in a spacetime region $U$ to increase roughly linearly with the size of the spacetime region. 
\end{example}

How can we expect the error $\norm{(P-P')\psi}$ to depend on the regions $U_P$ and $U_{P'}$ where $P$ and $P'$ are localised? If we  believe that the error decays exponentially with the number of similar interactions, then how does the expected number of these interactions depend on the geometry of $U_P$ and $U_{P'}$? In non-extreme situations, say when $P$ describes a not-particularly-dense arrangement of matter,  the number of interacting particles can be expected to grow roughly in line with the volume of $U_P$. In extreme cases, when $P$ describes something dense enough to be opaque,  we expect a bound roughly proportional to the surface area of $U_P$.

 We arrive at the following heuristic.
%
%\begin{heuristic} The visibility of a projection-statement $P$ within a region $U'$ depends on the geometry of $U'$ and $U_P$, as well as physical properties of $P$ and $\psi$. In particular, we expect $\text{Vis}_U(P)-\ln P\psi$ to be larger when
%\end{heuristic}

\begin{heuristic} The degree to which a projection statement $P$ can be visible as a projection statement $P'$ within a region $U_{P'}$ depends on the geometry of $U_{P'}$ and $U_P$ as well as physical properties of $P$ and $\psi$. In particular, we expect the error $\norm{(P-P')\psi}$ to be smaller when $U_{P'}$ fills up more of the causal future of $U_P$ on some Cauchy surface. Supposing that $U_{P'}$ takes up a fixed proportion of the causal future of $U_{P}$ on some Cauchy surface, we expect smaller error $\norm{(P-P')\psi}$ when $U_{P}$ is larger. 
\end{heuristic}

Let us give an independent argument for a geometric restriction on  the visibility of projection-statements. Given a $d$ dimensional subspace of a $N$--dimensional Hilbert space, the projection of a random unit vector to this subspace has expected square norm $d/N$. In simple models of decoherence, such as \cite[section 4A]{Zurekexistential}, the error in decoherence is also roughly proportional to $1/N$, where the dimension of the environment Hilbert space is $N$. Similarly, if $P$ is approximated by $P'$ within some subsystem of dimension $N$, the best we can expect for the error $\norm{(P-P')\psi}$ is that it is roughly of order $1/\sqrt N$. In a lattice model of a quantum field theory, the dimension of the Hilbert space associated to a spatial region is proportional to the exponential of the volume of that region. In more sophisticated models including gravity, the dimension of this Hilbert space is expected to be proportional to the exponential of the surface area of that region; \cite{Bekenstein,Bousso_2000}. Accordingly, we expect that the error $\norm{(P-P')\psi}$ to decrease at most exponentially in the surface area of the region where $P'$ is localised.

Taking the above heuristic argument seriously, we arrive at the following optimistically precise  bound. This bound restricts how visible a classical projection-statement can be in spatially separated balls on a spacelike surface. 
\begin{heuristic}\label{visibility bound} For some universal constant $k$,  if $U$ and $U'$ are spatially separated balls with surface area $S$, and $P$ is a projection statement localised in $U$  we expect the visibility of $P$ in $U'$ to satisfy 
\[\text{Vis}_{U'}P\leq \ln\norm{P\psi}-\ln\norm{(1-P)\psi}+kS\] 
\end{heuristic}

In particular, we don't expect totally classical projection-statements to exist, outside some fanciful limit sending the number of similar interactions to infinity, because totally classical projection-statements need to be recorded perfectly in many spatially separated, arbitrarily small balls.  More importantly, we don't expect low probability classical projection-statements to have records localised in small balls.

 \section{Observers}\label{observer}

   It is worth spelling out what an `observer' means in this paper. Observers, such as people, cats, or classical measuring devices, are physical things, subject to the usual physical laws. For us, any properties of a `classical observer' must be described using classical projection-statements.   For example,  a projection-statement $O$ describing the state of mind of a person might  describe one firing pattern of neurons for our observer.\footnote{Similar ideas have been suggested by several other people, including  Zeh in \cite{Zehinterpret} and \cite{Zehconscious}. In \cite{neuron}, Tegmark estimates decoherence rates for neurons in order to show that interference effects between different firing patterns of neurons are unlikely to play an important role in a physical description of our minds.}
   
To model the classical information available to an observer, we can choose  a classical $\mathcal H$--valued measure algebra $(\mathfrak B,\phi)$ containing the observer's projection-statement $O$. According to this model, our observer knows about statements $b\in\mathfrak B$ such that $b\wedge O=O$. The set  $\mathfrak T_O\subset \mathfrak B$ of such statements forms a filter.
   Associated to $O$ we also have the ideal $\mathfrak I_O\subset \mathfrak B$ comprised of statements $b\in \mathfrak B$ such that $b\wedge O=b$.  Then, $(\mathfrak I_O,\phi)$ is a classical $\mathcal H$--valued measure algebra representing the statements in $(\mathfrak B,\phi)$ that our observer regards as possiblities.  Accordingly, this observer would assign probabilities $\norm{\phi(b)}^2/\norm{O\psi}^2$ to statements $b$ in $\mathfrak I_O$. Our observer can still reason about statements $b\in\mathfrak B$ that are not in $\mathfrak I_O$, but from our observer's perspective, the statement $b\wedge O\in\mathfrak I_O$ is logically equivalent to $b$. 
   
    As $(\mathfrak B,\phi)$ is only taken to be an approximately correct model, it is not quite accurate that our observer knows statements in $\mathfrak T_O$. Instead, we should say that the observer is sure of statements close to $\mathfrak T_O$, that is, classical statements $t\in\mathfrak B$ such that $\phi(t\wedge O)\approx \phi(O)$, or equivalently, statements with $\prob(t\rvert O)\approx 1$. Heuristic \ref{algebra record replacement} implies that this notion is approximately independent of our choice of classical $\mathcal H$--valued measure algebra containing $\mathcal O$.

   Continuing the simplified example from section \ref{Decoherence}, an observer's initial state might be described by $O$, where  \[O\psi\approx a\otimes x_{1}\otimes \dotsb \otimes x_{n}\otimes o \in \mathcal H_{0}\otimes\mathcal  H_{1}\otimes\dotsb\mathcal H_{n}\otimes \mathcal H_{o}\ .\] Following interaction of $\mathcal H_{0}$ with $\mathcal H_{i}$, we might have $O\psi\approx\lambda_i a_{i}\otimes x_{1, i}\otimes \dotsc x_{n,i}\otimes o$, as in section \ref{Decoherence}. Subsequent interaction of $\mathcal H_{o}$ with $\mathcal H_{k}$ might produce \[\sum_{i}O\psi\approx \lambda_i a_{i}\otimes x_{1,i}\otimes \dotsb \otimes x_{n,i} \otimes o_{i} \] 
   with the $o_{i}$ now approximately orthogonal, and new classical projection-statements $O_{i}\in\mathfrak I_O$ describing our observer's new state, with \[O_{i}\psi\approx a_{i}\otimes x_{1,i}\otimes \dotsb \otimes x_{n,i} \otimes \lambda_i o_{i}\ .\] This new projection-statement $O_i$ knows about $O$, in the sense that $O\in \mathfrak T_{O_i}$,  and the observer $O$ regarded $O_i$ as a possibility in the sense that $O_i\in\mathfrak I_O$.  The interpretation of the projection-statement $O_{i}$ is that the observer, initially described by $O$, has observed the classical projection-statement $P_{i}:=\ket{a_{i}}\bra{a_{i}}$. With this said, the reader should remember that we are not considering $O$ or $O_{i}$ as `really' true or false, despite the philosophical baggage  attached to such a description of the observer.

   An interaction Hamiltonian approximately producing the behavior above is $\sum_{k}M_{k}\otimes H_{o}$, where $M_{k}$ is an observable measuring $\mathcal H_{k}$ in our observer's location with distinct expectation values for different $x_{k,i}$, and $H_{o}$ is a Hamiltonian acting on $\mathcal H_{o}$. Under the assumption that  the number of interactions $n$, is extremely large, such an interaction Hamiltonian is approximated by $\sum_{i} P'_{i}m_{i}H_{o}$, where $P'_{i}$ is a projection-statement approximating $P_{i}$, as in section \ref{Decoherence}, and $m_{i}$ is the  expectation of $\sum_k M_{k}$ on $x_{1,i}\otimes\dotsb\otimes x_{n,i}$.
   
Although the situation above is drastically simplified, one important element is that our observer's interaction with the environment is local, and of a restricted type. Physical observers such as ourselves have limited options for interacting with and observing the outside world --- one reason to believe that we observe and act on information described by classical projection-statements is that \emph{we can}.\footnote{We do not ask observers to verify that a classical projection-statement is classical --- we are asserting that observers can detect classical projection-statements by acting on correlations in environmental information. In some sense, this is what animals do --- search for, and act on correlated information with useful predictive power.}

A second reason to believe that we observe classical projection-statements is that \emph{it is useful}. Because classical projection-statements are visible in many separate regions of spacetime,  they have widespread consequences --- a point made in \cite{QDclassical, Zurekreview}. Moreover, Lemma \ref{replacement}, with Heuristics \ref{algebra record replacement} and \ref{classical measure algebra} imply that we can expect the `probabilities'  we associate to logical combinations of projection-statements to approximately obey usual probability theory. Accordingly,  classical projection-statements have reliable and predictable consequences.

    \section{Towards a theory of natural probability}\label{NP section}

\

Let us explore the consequences of the following assumption: \emph{Our perception of reality  is a classical projection-statement.} 

In Section \ref{precise}, we constructed a probabilistic model $(X,\{\mathfrak F_U\},\mu)$ for classical information using totally classical projection-statements, however, from Heuristic \ref{visibility bound}, we don't actually expect totally classical projection-statements to exist without taking an unphysical limit. In Section \ref{approximate}, we explained how to obtain approximations to this classical situation. Starting with a  small, spatially separated collection $I$ of records of classical projection-statements, we constructed  a classical $\mathcal H$--valued measure algebra $(\mathfrak B_I,\phi)$. This $\mathcal H$--valued measure algebra is our classical model, and allows us to assign probabilities to logical combinations of these projection-statements. In Heuristic \ref{algebra record replacement}, we explained  that we expect these probabilities to be approximately independent of the records chosen, so we we have approximately canonical assignments of `probabilities' to logical combinations of classical projection-statements, and these `probabilities' approximately obey ordinary probability theory.

%
%An observer, described by a statement $O\in\mathfrak B_I$, is sure about statements $b\in \mathfrak B_I$ such that $\norm{\phi(O)-\Phi(b\wedge O)}/\norm{\phi(O)}\approx 0$. Under the assumption that any future state $O'$ of our observer must also be sure of $O$, we have that $\phi(O'\wedge O)\approx\phi(O)$. 

There is more structure in $(\mathfrak B_I,\phi)$ than is present in ordinary probability theory. In ordinary probability theory, the only relevant information is the probability $\norm{\phi(b)}^2$ of a statement  $b\in\mathfrak B_I$, but in natural probability, we also care about the visibility of $b$ in different regions of spacetime; see definitions \ref{statement visibility} and \ref{classical statement def}.  In particular, some statements $b\in\mathfrak B_I$ are classical, and visible in many spatially separated regions of spacetime. We show, in Heuristic \ref{classical measure algebra}, that logical combinations of classical projection-statements tend also to be classical projection-statements, so long as their probability is high enough, with Heuristic \ref{caveat} providing the caveat that the degree of classicality decreases with more complicated logical combinations. 

%
%\begin{defn} A costed measure algebra is measure algebra $(\mathfrak B,\mu)$ together with a cost map $\epsilon:\mathfrak B\longrightarrow[0,\infty]$ satisfying the conditions 
%\begin{itemize}
%\item $\epsilon(b)\leq\sqrt{\mu(b)}$\marginpar{maybe remove this one, refer to how this can be forced?}
%\item$\epsilon(\neg b)\leq\epsilon(b)+\epsilon(1)$
%\item $\epsilon(a\wedge b)\leq \epsilon(a)+\epsilon(b)$
%\item $\epsilon(a\vee b)\leq \epsilon(a)+\epsilon(b)$
%
%\end{itemize}
%\end{defn}
%????????????

In ordinary probability, one must accept the circular intuition that low probability events are unlikely. In our setup, this intuition is justified by our assumption that we can only perceive classical projection-statements:  it is significantly harder for low probability projection-statements to be classical, and, to detect a low probability classical projection-statement, we must sample more of space-time. The sharpest plausible formulation of this principle is Heuristic \ref{visibility bound}, which conjectures  a geometric bound on the visibility of a projection-statement $P$ within two spatially separated balls of surface area $S$. In particular, the visibility of $P$ is bounded by $\ln\norm{P\psi}-\ln\norm{\neg P\psi}+kS$ for some universal constant $k$.  So, the visibility of $P$ in one of the two balls must be less than this bound, and this bound is lower when $\prob(P):=\norm{P\psi}^2$ is smaller. We conclude the following.

\begin{heuristic}\label{classical cutoff} Each observer $O$ can only reliably know about classical projection-statements $P$ with $\prob(P)>p_O$, where $p_O$ depends on physical properties of the observer, including their size. Optimistically, there exists some universal constant $k$ so that an observer within a ball of surface area $S$ has $p_O>e^{-kS}$.
\end{heuristic}

There is no analogous statement within usual probability theory.

Observers of different size will have different cutoff probabilities. Let us imagine a microscopic computer called Tiny, and gigantic mega-computer called Bob. Tiny  will have a much larger cutoff probability, $p_{T}$, than Bob. Bob may know about a projection-statement $P$ with $\prob (P)<p_{T}$, so what is the problem with telling his friend, Tiny? An intriguing answer is that Tiny has limited memory, which can not fit the information required to reliably describe $P$. We arrive at the following surprising conclusion: 

\begin{heuristic}A classical projection-statement $P$ with $\prob(P)$ small requires a lot of information to reliably describe.
\end{heuristic} 

This is not so different from how we are meant to apply ordinary probability. To see a low probability event, all we need to do is flip a coin 1000 times, and record our answers. We do not, however, expect to see the easy-to-communicate event that all 1000 of those tosses were heads. If we partition our probability space into $4$ subsets, one of which has probability $2^{-1000}$, we do not expect that an experiment will end up with that result. If, however, we partition our probability space into $2^{1000}$ equal subsets, then an experiment distinguishing these outcomes is guaranteed to end up with an outcome that has probability $2^{-1000}$. Such a complicated experiment requires significantly more work, and communicating the outcome requires significantly more information. 

 The information storage capacity of an observer  depends on physical properties of the observer, including its size, with larger observers having more storage capacity. Measured in bits, an extreme estimate for the maximal information storage within a ball is the following: the information capacity is bounded by a universal constant times its surface area; see for example \cite{Bekenstein,Bousso_2000}. This  information capacity bound is similar to our proposed bound on visibility of projection statements, Heuristic \ref{visibility bound}.  So we could hope  that  there is some $\alpha>0$ such that, if  $P$ is a classical projection-statement taking $n$ bits to encode, then $\prob (P)>e^{-\alpha n}$. Put differently, a classical projection-statement $P$ requires at least $-\alpha\log(\prob (P))$ characters to reliably express in binary.

 To clarify, it is possible for Tiny and Bob to be using a system for encoding projection-statements so that some low-probability classical projection-statements  $P$ are easy to describe, but that would be a bad system, prone to errors. In particular, even if Bob observed $P$, and transmitted this information to Tiny, Tiny would not know $P$ in our sense:  there would not be a classical projection-statement $T$ describing Tiny such that $TP\psi\approx T\psi$, and predictions Tiny would make based on believing $P$ would not be reliable.  It is advantageous for animals to make reliable predictions, so we should have evolved a reliable system for encoding classical information.  For a binary version of such a system, we expect that $P$ takes at least $-\alpha\log(\prob (P))$ bits to reliably describe.  Using ordinary probability theory, Shannon, \cite{Shannon}, tells us that this is also an efficient system for storing and communicating information.

\

Let us summarise the differences between natural probability and ordinary probability theory. In ordinary probability theory, we have a probability space and some $\sigma$-algebra of events. We re-formulate this using  a measure algebra $(\mathfrak B,\mu)$. In natural probability,   we upgrade the measure algebra $(\mathfrak B,\mu)$ to a $\mathcal H$--valued measure algebra $(\mathfrak B,\phi)$, with $\mu(b)=\norm{\phi(b)}^2$.

In natural probability, we also have an extra notion of when a statement $b\in \mathfrak B$ is classical. Moreover, we expect small logical combinations of classical statements to also be classical, so long as they have sufficiently high probability. This departs from classical probability theory, where the only thing we know about an event is its probability.

 Within ordinary probability, we can specify a  subalgebra $\mathfrak B_U\subset \mathfrak B$ comprised of events accessible within a space-time region $U$. In natural probability,  we lose our sharp notion of such a  $\sigma$--algebra $\mathfrak B_U$.  We replace this with a more nuanced notion of the visibility $\text{Vis}_U b$ of a statement $b\in\mathfrak B$. It is not true that the statements with high visibility in $U$ form a $\sigma$-algebra, but small logical combinations of such statements  tend to also have high visibility in $U$, so long as they have sufficiently high probability.

 It is inconvenient to lose sharp, mathematically defined, $\sigma$--algebras $\mathfrak B_U$, but we also gain something. From Heuristic \ref{visibility bound}, we expect a geometric lower bound on the probability of classical statements visible within $U$ --- this is something entirely absent from ordinary probability theory. We can also make the assertion that an observer only perceives a classical statement, visible in their location. In ordinary probability theory, this is analogous to the deeply problematic, but intuitively appealing statement `you will never see a low probability event.'

\section{Interpretations of measurement and quantum mechanics}
\label{measurement}

Let us describe a measurement process from our perspective. We need a classical projection-statement $E$ describing the experimental setup, including the initial state $s$ of an isolated quantum system $\mathcal H_{s}$ to be measured, and the initial state of a measuring apparatus $a$, (so $E\psi\approx s\otimes a\otimes\dotsb $). This is often described as preparing the system $\mathcal H_{s}$. The measuring apparatus then interacts with our previously isolated quantum system in a unitary process often called pre-measurement or a non-selective measurement. After this interaction, 
\[E\psi\simeq\sum_{i}\lambda_i s_{i}\otimes a_{i}\otimes\dotsb\] 
where $a_{i}$ are orthonormal, and $s_{i}$ form an orthonormal basis for $\mathcal H_{s}$. The Copenhagen interpretation of quantum mechanics asserts that at some time  $E\psi$  `collapses' to  one of these  summands $s_{i}\otimes a_{i}\otimes\dotsb$ with probability $\abs{\lambda_i}^2 /\norm{E\psi}^{2}$.

The states $a_i$ of the measuring apparatus are special. To be distinguished by classical observers, these states of the measuring apparatus $a_{i}$ need to be `pointer states' distinguishable through interaction with the environment,  and for  sufficiently likely measurement outcomes,  there are classical projection-statements $M_{i}$ that describe the state of our measuring apparatus so that $M_{i}E\psi\approx s_{i}\otimes a_{i}\otimes \dotsb$. The conditional probability $\prob(M_{i}\vert E)$ we associate to a measurement outcome agrees with the usual probability, however we do not have a notion of only one outcome being really true, and do not regard $M_{i}$ as describing a physical process.  

\

More formally, a measurement is often described  in terms of a projection-valued measure, allowing observables with continuous spectrum.  We offer the following  model for such measurements.

\

\begin{defn}\label{measurement model} A model of a quantum measurement is 
\begin{itemize}
\item a  classical $\mathcal H$--valued measure algebra $(\mathfrak M, \phi)$, with maximal element $E$ describing the experimental setup,
\item a decomposition of $\mathcal H$ into $\mathcal H_s\otimes \mathcal H_e$, and 
\item a projection-valued measure $\pi$ on $\mathfrak M$, such that, for $M\in \mathfrak M$, $\pi(M)$ splits as a projection on $\mathcal H_s$ tensored with the identity on $\mathcal H_e$ and so that 
\[\pi(M)\phi(E)\approx \phi(M) \]
\end{itemize}
 \end{defn}

 The relative probability $\norm{\phi(M)}^2/\norm{\phi(E)}^2$ of the measurement outcome $M$ is the same as the customary probability  arising from the projection-valued measure $\pi$ acting on  $\mathcal H_s$. The above model adds extra information beyond a projection--valued measure $\pi$ on  a set of outcomes, $X$. In particular, we don't expect \emph{all} measurable subsets of $X$ to correspond to valid outcomes of this measurement, as not all of these will be represented by classical projection-statements. Instead, from Definition \ref{new classical measure algebra def}, we expect that each outcome $M\in\mathfrak M$ is close to a classical outcome $M'$, using the metric on $\mathfrak M$ induced from $\phi:\mathfrak M\longrightarrow \mathcal H$.
%
%\begin{heuristic} Let $O$ be a statement describing an observer,  contained in some classical $\mathcal H$--valued measure algebra $(\mathfrak B,\phi,\epsilon)$, which also contains a model $\mathfrak M$ for a quantum measurement. Assume further that $O$ is visible in some region spatially separated from the projections $\pi(M)$ and from a projection statement $E$ representing the experimental setup.  So long as this observer assigns a high enough probability to $E$, so that $E\wedge O$ is classical, then $(\mathfrak M\wedge O,\phi)$ is also a model for the same quantum measurement, with a projection-valued measure defined by $\pi(M\wedge O)=\pi(M)$ so 
%\[\pi(M)\phi(E\wedge O)\approx \phi(M\wedge O)\]
% \end{heuristic}
%
%
% \begin{proof}
% \[ \pi(M) \phi(O\wedge E)=\pi(M)O'\phi(E)=O'\pi(M)\phi( E) \approx O'\phi(M)\]
% 
% ???????Hmm. Seem to get projection-valued measure if $O\wedge M=0$ implies that $M=0$.
% \end{proof}
%

\

  From our perspective, several interpretations of quantum measurement have partial validity. We mention some such interpretations below:

  \subsection{``A measurement is an update of an observer's information."} Such an interpretation is appealing, because in ordinary probability, updating information causes a collapse in the probability distribution describing an observer's knowledge. The obvious problem with such an interpretation is that, unlike the classical case,  we can't treat quantum measurements as dynamically irrelevant, because quantum measurements don't commute --- for example a rotating sequence of quantum measurements can make a given measurement outcome imply its opposite with arbitrarily high probability. However, with more careful treatment we see the validity of such an interpretation. For us, an observer, described by a classical projection-statement $O$, might know the experimental setup $E$. If the observer then notices the outcome $M$ of our experiment, our observer's projection statement is updated to  $O'$  that knows, $O$, $E$, and $M$. One could say that our observer's knowledge was updated to include the result $M$ of the measurement.

For all practical purposes, we can regard the relative state $O\psi$ used by an observer as an epistemological state, describing the knowledge available to the observer $O$. When someone asserts that `the quantum state must be information about the potential consequences of our interventions in the world', \cite{Fuchs}, they are referring to this relative state $O\psi$.  In contrast, we can regard $\psi$ is a valid ontological state, describing Nature in entirety.  Despite these fancy words, neither of these states quite fit into philosophical frameworks developed by ancient Greeks. The identity projection-statement, corresponding to the ontological state $\psi$, should be regarded as a tautology, saying nothing. In contrast,  when a projection-statement $P$ is more specific, $P\psi$ is smaller, and can be regarded as describing less of the ontological state $\psi$. We can think of $O\psi$ as analogous to a perspective of $\psi$, with more detailed perspectives containing less of the whole. 
 In classical probability, this is analogous to thinking of the probability measure on the entire space as the ontological state, which is dual to the customary habit of thinking of some `really true' point in the probability space as being the ontological state, and thinking of the probability measure as an epistemological state. 

%  With that said, our definition of a model for quantum  measurement does not require any observers in the traditional sense. 
  
  \
  
   \subsection{``The measurement is an interaction of the isolated quantum system with classical reality.''}
  The obvious problem with such an interpretation is that it begs the question `what is classical reality?', and requires us to somehow choose a divide between the quantum and classical worlds. We interpret `classical reality' as the stuff described by classical projection statements. Then, our description of measurement above can indeed be interpreted an interaction of our isolated quantum system $\mathcal H_{s}$ with the measuring apparatus and the aspects of Nature described by classical projection-statements.

 \subsection{``Measurement collapse has something to do with the observer's consciousness.'' }  
  We do not think wave-function collapse is a physical process, however describing an observer's conscious state as a classical projection-statement $O_{i}$ allows us to say that the observer `knows that $E\psi$ has collapsed into $M_{i}E\psi$'. Thus, the observer's perception of wave-function collapse is produced by our choice of how to describe the observer's conscious state. We could also describe all probabilistic aspects of our quantum theory purely in terms of observer's states --- this is commonly known as the `many minds interpretation' \cite{MMI}, and is  called `sensible quantum mechanics' in \cite{sqm}.

\subsection{``Density matrices represent statistical ensembles of states, which can be  updated with the new information obtained by the measurement."}  \label{density matrices}

 Given a state $\psi\in \mathcal H_s\otimes \mathcal H_e$ we can forget information from the environment $\mathcal H_e$ by performing a partial trace of  $\ket\psi\bra\psi$ to obtain semi-positive operator $\rho_\psi$ on $\mathcal H_s$ called a density matrix. Because $\rho_\psi$ can be written as a convex combination of projections onto pure states, it is common to interpret $\rho_\psi$ as a statistical ensemble of states. This interpretation is ambiguous because $\rho_\psi$ can be written as a convex combination of pure states in many ways, corresponding to many different interpretations of $\rho_\psi$ as a statistical ensemble of states. 
 
  A  measurement of $\mathcal H_s$ is sometimes described as follows: \emph{ First, $\mathcal H_s$ interacts with $\mathcal H_e$ in a nonselective measurement resulting in a state in the form $\sum_i \lambda _i s_i\otimes y_i$, where $y_i$ are orthonormal. Following this $\rho_\psi=\sum_i \abs{\lambda_i}^2 \ket {s_i}\bra {s_i}$ is the  statistical ensemble representing the possible outcomes of the measurement, and we are free to update this statistical ensemble once we know the outcome of the measurement.} 
  
  Here, we emphasise again  that the interpretation of $\rho_\psi$ as a convex combination of pure states depends on the choice of orthonormal bases $y_i$ of $\mathcal H_e$, so the interpretation of a density matrix as an ensemble of states is ambiguous. As outlined in Example \ref{basis problem}, decoherence helps resolve this ambiguity. In what follows, we consider how accurate it is to think of a density matrix as a statistical ensemble of states.
  
  Throughout this section, we measure the distance between density matrices using the Hilbert space inner product on $\mathcal H\otimes\mathcal H^*$.  On trace-class operators such as density matrices, this inner product coincides with the trace inner product
  
  \[ \langle f,g\rangle:=Tr (f^*g)\]
  
We model classical information as a $(\mathcal H,\psi)$--valued measure algebra  $(\mathfrak B,\phi)$. To a non-zero statement $b\in\mathfrak B$, we can associate the density matrix
\[\rho(b):= \frac 1{\norm{\phi(b)}^2}\rho_{\phi(b)}\]
where $\rho_{\phi(b)}$ indicates the partial trace of $\ket{\phi(b)}\bra{\phi(b)}$.
 We need further assumptions to think of $\rho_\psi$ as representing a statistical ensemble.
 
 \begin{defn} A partition of  a statement $b\in\mathfrak B$, is a collection $\{b_i\}\subset\mathfrak B$ such that $\bigvee_i b_i =b$ and $b_i\wedge b_j=0$ for $i\neq j$. Say that a partition $b=\bigwedge_i b_i$ is separate from $\mathcal H_s$ if there exist commuting projections $1\otimes P_i$ on $\mathcal H_s\otimes\mathcal H_e$ such that $\phi(b_i)=(1\otimes P_i)\phi(b)$. Say that $\mathfrak B$ is separate from $\mathcal H_s$ if every partition of $1\in\mathfrak B$ is separate from $\mathcal H_s$.
\end{defn}

For example, if $(\mathfrak B,\phi)$ is a classical measure algebra, constructed as in section \ref{approximate} using spatially separated projection-statements on some spacelike surface, we could imagine that $\mathcal H_s$ represents a quantum subsystem located somewhere on this surface, with $b$ representing some description of an experimental setup isolating  $\mathcal H_s$, and the projection statements $1\otimes P_i$ spatially separated from $\mathcal H_s$. 

 How does being separate from $\mathcal H_s$ help? If $\phi(b)= 1\otimes P(\psi)$, then $\rho_\psi=\prob(b)\rho(b)+\prob(\neg b)\rho(\neg b)$. Accordingly,  under the assumption that the partition $b=\bigwedge b_i$ is separate from $\mathcal H_s$,  we can interpret $\rho(b)$ as the following statistical ensemble of density matrices. 
\begin{equation}\label{additive density}\rho(b)=\sum_i \prob(b_i)\rho(b_i)\end{equation}

Let us clarify what is meant by a statistical ensemble of density matrices or states on $\mathcal H_s$, and what we mean by statistical ensembles being close

\begin{defn} \label{ensemble def}A statistical ensemble of density matrices is a measurable map
\[\rho:X\longrightarrow (\text{density matrices on }\mathcal H_s)\]
where $(X,\mu)$ is a probability space. If the image of $\rho$ consists of pure states, we say that this is a statistical ensemble of pure states. 
\end{defn}
\begin{defn}
A fine-graining $(X',\mu',\rho')$ of a statistical ensemble of density matrices $(X,\mu,\rho)$  is a measure preserving map 
\[\pi:(X',\mu')\longrightarrow (X,\mu)\]
such that, for all measurable subsets $S\subset X$ with $\mu(S)>0$, the expectation of $\rho$ on $S$ is the same as the expectation of $\rho'$ on $\pi^{-1}(S)$. 
\[\frac 1{\mu(S)}\int_S\rho d\mu=\frac 1{\mu'(\pi^{-1}(S))}\int_{\pi^{-1}S}\rho'd\mu'\]
\end{defn}
\begin{defn}
Given a fine graining, we measure the distance between the statistical ensembles $\rho$ and $\rho'$ using the $L^2$ norm on $(X',\mu')$.
\[\norm{\rho-\rho'}_2:=\sqrt{\int_{X'}\norm{\rho(\pi(x))-\rho'(x)}^2d\mu'(x)}\]
If this distance is $0$, then we say that the fine-graining is an equivalence.

More generally, an approximate equivalence between two statistical ensembles $(X_i,\mu_i,\rho_i)$ is  a diagram
\[\begin{tikzcd} (X_1,\mu_1)\rar{\rho_1} & (\text{density matrices on }\mathcal H_s)
\\ (X,\mu)\uar{\pi_1} \rar{\pi_2} & (X_2,\mu_2)\uar{\rho_2}\end{tikzcd}\]
such that  $\pi_i :(X,\mu)\longrightarrow (X_i,\mu_i)$ is an approximately measure preserving map, and 
\[\int_{X}\norm{ \rho_1(\pi_1(x))-\rho_2(\pi_2(x)))}^2 d\mu\approx 0\] 
\end{defn}

\begin{lemma} \label{density ensemble} When $(\mathfrak B,\phi)$ is separate from $\mathcal H_s$, there is a canonical statistical ensemble of density matrices $(X,\mu,\rho)$ such that $(\mathfrak B,\frac{\norm{\phi}^2}{\norm{\phi(1)}^2})$ is the measure algebra of $(X,\mu)$, and for all $b\in\mathfrak B$, $\rho(b)$ is the expectation of $\rho$ over the set $S_b\subset X$ corresponding to $b$. 

The assignment of a statistical ensemble of density matrices is functorial in the following sense.  A fine graining $(\mathfrak B,\phi)\hookrightarrow (\mathfrak B',\phi)$ is associated with fine graining $(X',\mu',\rho')\longrightarrow (X,\mu,\rho)$, and  an ideal $\mathfrak I_b\subset\mathfrak B$, is associated with the  Bayesian update $(S_b,\frac{\mu}{\mu(b)},\rho)$ of the probability measure on the ensemble.
\end{lemma}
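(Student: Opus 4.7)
The plan is to construct $(X,\mu,\rho)$ in three stages: first recover the base probability space from the measure-algebra structure, then use the separate-from-$\mathcal H_s$ hypothesis to upgrade $\phi$ to a projection-valued measure on $X$ with values in $1\otimes B(\mathcal H_e)$, and finally take an operator-valued Radon--Nikodym derivative to obtain the density-matrix field $\rho$.

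First, I would complete $(\mathfrak B,\phi)$ using Lemma \ref{completion} and then apply Lemma \ref{space from algebra} to obtain $(X,\mathfrak F,\mu)$ with $\mu(S_b)=\norm{\phi(b)}^2/\norm{\phi(1)}^2$ for each $b\in \bar{\mathfrak B}$. Next, I would construct a projection-valued measure $\pi:\mathfrak F\longrightarrow \mathcal N$, where $\mathcal N:=1\otimes B(\mathcal H_e)$, satisfying $\pi(S_b)\phi(1)=\phi(b)$. For any finite sub-Boolean algebra $\mathfrak B_0\subset\bar{\mathfrak B}$, its atoms form a partition of $1$; the separate-from-$\mathcal H_s$ hypothesis then produces commuting projections $1\otimes Q_i\in \mathcal N$ reproducing $\phi$ on the atoms. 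Summing these over unions of atoms gives a Boolean homomorphism $b\mapsto 1\otimes P_b^{\mathfrak B_0}$ into the commutative von Neumann algebra generated by the $1\otimes Q_i$, with $\phi(b)=(1\otimes P_b^{\mathfrak B_0})\phi(1)$. The $P_b^{\mathfrak B_0}$ are determined uniquely by $\phi(b)$ modulo the annihilator of $\phi(1)$, so one can fix a canonical choice (e.g.\ the support projection in $\mathcal N$ of the normal state $A\mapsto \langle \phi(b),A\phi(b)\rangle$) that is independent of $\mathfrak B_0$. Taking the weak closure of the union of these algebras over all finite $\mathfrak B_0$ gives a commutative von Neumann subalgebra $\mathcal A\subset \mathcal N$; the spectral theorem for $\mathcal A$ together with the Stone duality already used in Lemma \ref{space from algebra} identifies $\mathcal A\cong L^\infty(X,\mu)$ compatibly with $S_b\leftrightarrow 1\otimes P_b$, giving the desired projection-valued measure $\pi$.

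Third, I would define the trace-class-operator-valued measure $\sigma(S):=\mathrm{Tr}_{\mathcal H_e}\bigl(\pi(S)\ket{\phi(1)}\bra{\phi(1)}\pi(S)\bigr)$ on $(X,\mathfrak F)$. Countable additivity and semi-positivity of $\sigma$ follow from the corresponding properties of $\pi$, and $\mathrm{Tr}\,\sigma(S)=\norm{\pi(S)\phi(1)}^2=\norm{\phi(1)}^2\mu(S)$, so $\sigma$ is absolutely continuous with respect to $\mu$. A Radon--Nikodym theorem for trace-class-valued measures (applied in a fixed orthonormal basis of $\mathcal H_s$, or equivalently via disintegration of measures) produces a measurable field of density matrices $\rho:X\longrightarrow B(\mathcal H_s)$ satisfying $\sigma(S)=\norm{\phi(1)}^2\int_S\rho\,d\mu$. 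Specialising to $S=S_b$ gives $\rho_{\phi(b)}=\sigma(S_b)=\norm{\phi(1)}^2\mu(S_b)\cdot\frac{1}{\mu(S_b)}\int_{S_b}\rho\,d\mu$, which is the claimed expectation property after dividing by $\norm{\phi(b)}^2$.

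For functoriality, a fine-graining $(\mathfrak B,\phi)\hookrightarrow(\mathfrak B',\phi)$ yields, via Lemma \ref{space from algebra}, a measure-preserving projection $(X',\mu')\longrightarrow(X,\mu)$; the corresponding projection-valued measures satisfy $\pi=\pi'|_{\mathfrak F}$, so $\rho$ is the conditional expectation of $\rho'$ along this projection, which is exactly the fine-graining condition for statistical ensembles. The ideal $\mathfrak I_b\subset\mathfrak B$ is a $(\mathcal H,\phi(b))$-valued measure algebra whose associated probability space is $(S_b,\mu|_{S_b}/\mu(b))$, with projection-valued measure the restriction of $\pi$, so its density-matrix field is $\rho|_{S_b}$, giving the Bayesian update $(S_b,\mu/\mu(b),\rho)$. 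The main obstacle is the second stage: one must verify that the canonical choice of $P_b$ really is consistent across partitions and that the weak closure step preserves both the Boolean structure and countable additivity matching the $\sigma$-complete algebra $\bar{\mathfrak B}$; everything else is essentially a bookkeeping application of the spectral theorem and operator-valued Radon--Nikodym.
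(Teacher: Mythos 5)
Your overall architecture (recover $(X,\mu)$ from the measure algebra, build an operator-valued measure, apply an operator-valued Radon--Nikodym theorem, then read off functoriality from the Boolean-algebra maps) matches the paper's in its first and third stages and in the functoriality argument. The difference is your second stage, and it is exactly where your proof has a genuine gap. The paper never constructs a globally coherent projection-valued measure $\pi:\mathfrak F\to 1\otimes B(\mathcal H_e)$. Instead it defines the semi-positive-operator-valued measure directly by $b\mapsto \prob(b)\rho(b)=\rho_{\phi(b)}/\norm{\phi(1)}^2$, i.e.\ by partial-tracing $\ket{\phi(b)}\bra{\phi(b)}$, and uses the separation hypothesis only \emph{one partition at a time}: for each partition $b=\bigvee_i b_i$ the existence of \emph{some} commuting projections $1\otimes P_i$ kills the cross terms under $\mathrm{Tr}_{\mathcal H_e}$ and gives $\rho_{\phi(b)}=\sum_i\rho_{\phi(b_i)}$ (equation (\ref{additive density})). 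Additivity plus the bound $\norm{\prob(b)\rho(b)}\leq\mu(b)$ (bounded variation) is all Radon--Nikodym needs. No compatibility between the projections chosen for different partitions is ever required.

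Your stage 2, by contrast, demands that compatibility, and you have not established it. Two concrete problems: (i) the separation hypothesis gives commuting projections $1\otimes Q_i$ on the atoms, but these need not be mutually orthogonal nor sum to the identity (they only satisfy $\sum_i(1\otimes Q_i)\phi(1)=\phi(1)$), so ``summing over unions of atoms'' does not produce projections, let alone a Boolean homomorphism; (ii) your proposed fix via support projections does satisfy $p_b\,\phi(1)=\phi(b)$ for each individual $b$ (since $p_b\leq 1\otimes Q_b$ forces $p_b\,\phi(\neg b)=0$), but you give no argument that $p_b$ and $p_c$ commute for distinct $b,c$, or that $p_{b\wedge c}=p_b p_c$ --- domination by commuting projections does not imply commutativity, and without this the ``weak closure of the union'' is not a commutative algebra and the spectral-theorem step collapses. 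You flag this yourself as the main obstacle; it is real, it is not routine bookkeeping, and the paper's proof shows it is also avoidable: once you replace $\pi(S)\ket{\phi(1)}\bra{\phi(1)}\pi(S)$ by the equal quantity $\ket{\phi(b)}\bra{\phi(b)}$ (for $S=S_b$), the intermediate projection-valued measure drops out of the argument entirely. Your stage 3 computation and the fine-graining/Bayesian-update discussion are fine and agree with the paper once the measure is in hand.
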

\begin{proof}
 Equation \ref{additive density} implies that the assignment $b\mapsto \prob(b)\rho(b)$ determines a countably additive measure on $\mathfrak B$, valued in semi-positive operators on $\mathcal H_s$. Moreover,  $\norm{\prob(b)\rho(b)}=\mu(b)\norm{\rho(b)}\leq \mu(b)$ so this measure also has bounded variation. So,  using the  Radon-Nikodym theorem we get a random variable, or measurable map  
\[\rho:X\longrightarrow (\text{semipositive operators on }\mathcal H)\]
such that $(X,\mu)$ is the probability space with measure algebra $(\mathfrak B, \mu)$ with $\mu(b)=\norm{\phi(b)}^1/\norm{\phi(1)}^2$, and $\rho(b)$ is the expectation of $\rho$ on the set $S_b\subset X$ corresponding to $b\in\mathfrak B$. Here, $X=\hom( \mathfrak B,\{0,1\})$, $S_b$ is the set of homomorphisms sending $b$ to $1$,  and the measure $\mu$ on $X$ is constructed in Lemma \ref{space from algebra}. As $\rho(b)$ has trace $1$ for all $b$, the same holds almost everywhere for $\rho$, and $\rho$ can be taken as a measurable map to the space of density matrices.

Let us show that this construction is functorial. As $X=\hom(\mathfrak B,\{0,1\})$, the assignment of $X$ to $\mathfrak B$ is a contravariant functor, and as $\mu(b)=\norm{\rho(b)}/\norm{\rho(1)}$, if $(\mathfrak B,\phi)\hookrightarrow(\mathfrak B',\phi)$ is a boolean algebra homomorphism preserving $\phi$, the corresponding map $(X',\mu)\longrightarrow (X,\mu)$ is a fine-graining. Moreover, as $\rho(b)$ is the expectation of $\rho$ on the set $S_b$ of homomorphisms sending $b$ to $1$, we have that this map is a fine-graining of statistical ensembles of density matrices, in the sense of Definition \ref{ensemble def}. So, this construction provides a functor from the category of $\mathcal H$--valued measure algebras separate to $H_s$, to the category of statistical ensembles of density matrices. 

We must also show that an ideal $\mathfrak I_b\subset \mathfrak B$ corresponds to a Bayesian update. We have that $S_b$ is in one-to-one correspondence with $\hom(\mathfrak I_b,\{0,1\})$ as each such homomorphism must send $b$ to $1$, and extends uniquely to a homomorphism from $\mathfrak B$ sending an arbitrary element $b'$ to the same thing as $b'\wedge b\in\mathfrak I_b$. Moreover, the ensemble $(S_b,\mu',\rho')$ associated with $\mathfrak I_b$ has $\mu'=\mu/\mu(b)$, and $\rho'=\rho$, agreeing with the usual Bayesian rule for updating a probability measure upon observing an event $b$.

\end{proof}

What is special about pure states compared to other density matrices? Pure states can not be written as a convex combination of other density matrices, so,  any fine-graining of a statistical ensemble of pure states is actually an equivalence. 
When the average von Neumann entropy of an ensemble of density matrices is low, the ensemble is approximated by an ensemble of pure states. Moreover, in this case, each fine graining is also an approximate equivalence, so we will argue that in this case, it is approximately true that $\rho_\psi$ is canonically a statistical ensemble of density matrices.

We can use either the von Neumann entropy $S(\rho):=-Tr(\rho\ln\rho)$ or  the linear entropy 
\[S_L(\rho)=1-Tr\rho^2=\langle \rho,(1-\rho)\rangle\] as a convenient measure  of how close a density matrix is to being pure. We have $S(\rho)\geq S_L(\rho)$ vanishes only on pure states, and is positive on other density matrices. When we write
\[\rho=\sum_i \lambda_i \ket{a_i}\bra{a_i}\]
with $a_i$ orthonormal, we have
\[S(\rho)= -\sum_i \lambda_i\ln \lambda_i \]
and 
\[S_L(\rho)=\sum_i\lambda_i(1-\lambda_i)= 1-\sum_i\lambda_i^2\ .\]
Supposing that  $\lambda_1\geq\lambda_i$ for $i>0$, we can use this to estimate how close $\rho$ must be to the pure state $\ket{a_1}\bra{a_1}$. 
\begin{equation}\label{lambda estimate} S(\rho)\geq S_L(\rho)\geq \lambda_1(1-\lambda_1)+\sum_{i>1}\lambda_i(1-\lambda_1)=1-\lambda_1\end{equation}
so, 
\begin{equation}\label{entropy estimate}\begin{split}\norm{\rho-\ket{a_1}\bra{a_1}}^2&=(1-\lambda_1)^2+\sum_{i>1}\lambda_i^2
\\ &=2(1-\lambda_1)-S_L(\rho)
\\ &\leq S_L(\rho)\end{split}\end{equation}
 and we conclude that $\rho$ is close to the pure state $\rho_{\text{pure}}:=\ket{a_1}\bra{a_1}$ when the entropy $S(\rho)$ is small.
\begin{equation}\label{pure estimate}\norm{\rho-\rho_{\text{pure}}}^2\leq S_L(\rho)\end{equation}

\begin{lemma}\label{partition density matrix} Suppose that we can write the density matrix $\rho$ as a convex combination of density matrices.  $\rho=\sum_i a_i\rho_i$. Then,
\[\sum a_i\norm{\rho-\rho_i}^2=S_L(\rho)-\sum_ia_iS_L(\rho_i)\leq S_L(\rho)\ ,\]
so the average square distance betwee $\rho$ and $\rho_i$ is the drop in average linear entropy. 
\end{lemma}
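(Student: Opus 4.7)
The plan is to compute $\sum_i a_i\|\rho-\rho_i\|^2$ directly by expanding the trace inner product, and then recognize the result in terms of linear entropies. Since linear entropy is defined by $S_L(\sigma) = 1 - \operatorname{Tr}(\sigma^2) = 1 - \|\sigma\|^2$, everything reduces to bookkeeping of squared norms.

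First I would expand
\[\|\rho-\rho_i\|^2 = \|\rho\|^2 - 2\langle \rho,\rho_i\rangle + \|\rho_i\|^2.\]
Taking the convex combination with weights $a_i$ (so $\sum_i a_i = 1$) and using linearity of the inner product together with $\sum_i a_i\rho_i = \rho$, the cross term becomes $-2\langle \rho,\sum_i a_i\rho_i\rangle = -2\|\rho\|^2$. This collapses the sum to
\[\sum_i a_i\|\rho-\rho_i\|^2 \;=\; -\|\rho\|^2 + \sum_i a_i\|\rho_i\|^2.\]

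Second, I substitute $\|\sigma\|^2 = 1 - S_L(\sigma)$ for both $\sigma=\rho$ and $\sigma=\rho_i$; the constants $-1$ and $+\sum_i a_i = +1$ cancel, leaving
\[\sum_i a_i\|\rho-\rho_i\|^2 \;=\; S_L(\rho) - \sum_i a_i S_L(\rho_i),\]
which is the claimed identity. The inequality $\leq S_L(\rho)$ is then immediate from $S_L(\rho_i)\geq 0$, which holds because $\rho_i$ is a density matrix (so $0\leq \rho_i\leq 1$ gives $\operatorname{Tr}(\rho_i^2)\leq \operatorname{Tr}(\rho_i)=1$).

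There is no real obstacle here: this is a one-line algebraic identity dressed up with the linear entropy notation. The only thing worth flagging is that the argument uses nothing about the operators beyond self-adjointness and unit trace, so the same identity holds for any convex decomposition of a trace-class self-adjoint operator, with the linear entropy simply being a convenient name for $1-\|\cdot\|^2$ in this context.
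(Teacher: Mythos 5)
Your proof is correct and is essentially the same computation as the paper's: both reduce to the identity $\sum_i a_i\norm{\rho-\rho_i}^2=-\norm{\rho}^2+\sum_i a_i\norm{\rho_i}^2$ via the expansion of the squared norm of a difference and the relation $\rho=\sum_i a_i\rho_i$, then substitute $S_L(\sigma)=1-\norm{\sigma}^2$. The paper merely organizes the algebra by starting from $\langle\rho,\rho\rangle$ and applying the polarization identity to each $\langle\rho,\rho_i\rangle$, which is the same bookkeeping in a different order.
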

\begin{proof} 
Taking the inner product of $\rho$ with the equation $\rho=\sum_ia_i\rho_i$, we get the following.
\[\begin{split}\langle\rho,\rho\rangle&=\sum_ia_i\langle\rho, \rho_i\rangle 
\\ &=\sum_i\frac {a_i}2(\norm\rho^2+\norm{\rho_i}^2-\norm{\rho-\rho_i}^2)
\end{split}\]
Using that $\sum_i a_i=1$,   we get 
\[\sum_ia_i\norm{\rho-\rho_i}^2=-\norm\rho^2 + \sum_ia_i\norm{\rho_i}^2\]
then, noting that $S_L(\rho)=1-\norm\rho^2$, we have the required expression.
\[\sum a_i\norm{\rho-\rho_i}^2=S_L(\rho)-\sum_ia_iS_L(\rho_i)\leq S_L(\rho) \]

\end{proof}

\begin{defn}   The average linear entropy of a partition $b=\bigvee_i b_i$ is
\[S_L\left(b=\bigvee_i b_i\right):= \sum_{i}\prob(b_i \vert b)S_L(\rho(b_i))\ .\]
The average linear entropy of an ideal $\mathfrak I_b\subset \mathfrak B$ is 
\[S_L(\mathfrak I_b):=\inf S_L\left (b=\bigvee_i b_i\right)\]
where the infimum is over all partitions of  $b$ that are separate from $\mathcal H_s$.

Similarly, given a statistical ensemble of density matrices $(X,\mu,\rho)$, the average linear entropy of this statistical ensemble is
\[S_L(X,\mu,\rho):=\int_{X}S_L(\rho(x)) d\mu(x)\]
\end{defn}

If $(\mathfrak B,\phi)$ is separate from $\mathcal H_s$, we have that \[S_L(\mathfrak B)=S_L(X,\mu,\rho)\ ,\] where $(X,\mu,\rho)$ is the statistical ensemble of density matrices constructed from $(\mathfrak B,\phi)$ in Lemma \ref{density ensemble}.

As a consequence of Lemma \ref{partition density matrix} we have the following. 
\begin{cor}\label{fine graining estimate} If $(\mathfrak B',\phi)$ is a fine-graining of $(\mathfrak B,\phi)$, separate from $\mathcal H_s$, then the distance between the corresponding statistical ensembles of states $(X,\mu,\rho)$ and $(X',\mu,\rho')$ is determined by the drop in average linear entropy.
\[\norm{\rho-\rho'}^2_2=S_L(\mathfrak B)-S_L(\mathfrak B')\]

\end{cor}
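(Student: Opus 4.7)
The plan is to realize the fine-graining condition as the statement that $\rho\circ\pi$ is a conditional expectation of the Hilbert--Schmidt-operator-valued random variable $\rho'$, and then apply Pythagoras in $L^2$. First, by Lemma \ref{density ensemble}, the fine-graining $(\mathfrak B,\phi)\hookrightarrow(\mathfrak B',\phi)$ corresponds to a measure-preserving projection $\pi:(X',\mu')\to(X,\mu)$. The defining property of a fine-graining of statistical ensembles then says that for every measurable $S\subset X$,
\[\int_{\pi^{-1}(S)}\rho\circ\pi\, d\mu'=\int_S\rho\, d\mu=\int_{\pi^{-1}(S)}\rho'\, d\mu'\]
which is exactly the statement that $\rho\circ\pi=E[\rho'\mid\pi^{-1}\mathfrak F]$ as a Hilbert--Schmidt-valued conditional expectation, where $\mathfrak F$ denotes the $\sigma$-algebra on $X$.

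Second, I use the usual orthogonality of conditional expectations in $L^2(X',\mu';\mathcal H_s\otimes\mathcal H_s^*)$. Since $\rho\circ\pi$ is $\pi^{-1}\mathfrak F$-measurable and agrees with $E[\rho'\mid\pi^{-1}\mathfrak F]$, the difference $\rho'-\rho\circ\pi$ is orthogonal to $\rho\circ\pi$, so
\[\|\rho'\|_2^2=\|\rho\circ\pi\|_2^2+\|\rho'-\rho\circ\pi\|_2^2\]
The last term equals $\|\rho-\rho'\|_2^2$ by the definition given before the statement. It remains to compute the first two norms. For any statistical ensemble $(Y,\nu,\sigma)$ of density matrices, the pointwise identity $\|\sigma(y)\|^2=\operatorname{Tr}(\sigma(y)^2)=1-S_L(\sigma(y))$ integrates to $\|\sigma\|_2^2=1-S_L(Y,\nu,\sigma)$. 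Applied to $\rho'$, this gives $\|\rho'\|_2^2=1-S_L(\mathfrak B')$, and applied to $\rho$---combined with $\|\rho\circ\pi\|_2=\|\rho\|_2$ because $\pi$ is measure-preserving---this gives $\|\rho\circ\pi\|_2^2=1-S_L(\mathfrak B)$. Subtracting yields the desired $\|\rho-\rho'\|_2^2=S_L(\mathfrak B)-S_L(\mathfrak B')$.

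The main obstacle is only the first step: justifying the vector-valued conditional expectation. This is routine because $\mathcal H_s\otimes\mathcal H_s^*$ is a separable Hilbert space and $\rho'$ takes values in the unit trace-norm ball, but a reader wishing to avoid vector-valued $L^2$ theory can instead apply Lemma \ref{partition density matrix} fiberwise: disintegrate $\mu'$ over $\mu$ to obtain conditional probability measures $\mu'_y$ on $\pi^{-1}(y)$, observe that $\rho(y)=\int\rho'\, d\mu'_y$ by the fine-graining property, apply the continuous analogue of Lemma \ref{partition density matrix} on each fiber to obtain
\[\int_{\pi^{-1}(y)}\|\rho(y)-\rho'(x)\|^2\, d\mu'_y(x)=S_L(\rho(y))-\int_{\pi^{-1}(y)}S_L(\rho'(x))\, d\mu'_y(x)\]
and then integrate over $y\in X$ against $\mu$ using Fubini, which reproduces the identity directly. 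Either route avoids any serious technical work.
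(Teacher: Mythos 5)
Your proposal is correct, and in fact more explicit than the paper, which offers no written proof of this corollary at all --- it is simply asserted ``as a consequence of'' Lemma \ref{partition density matrix}. The intended argument is clearly the partition-refinement computation: express each $\rho(b)$ as the convex combination $\sum_i \prob(b_i\vert b)\rho(b_i)$ over the atoms of the finer algebra, apply Lemma \ref{partition density matrix} to each, and sum against $\mu$; your second route reproduces exactly this, upgraded to the non-atomic case via disintegration. Your primary route is a genuinely different packaging of the same orthogonality: you recognize that the fine-graining condition says precisely that $\rho\circ\pi$ is the conditional expectation of $\rho'$ onto $\pi^{-1}\mathfrak F$, so Pythagoras in $L^2(X',\mu';\mathcal H_s\otimes\mathcal H_s^*)$ gives $\norm{\rho'-\rho\circ\pi}_2^2=\norm{\rho'}_2^2-\norm{\rho\circ\pi}_2^2$, and the pointwise identity $\norm{\sigma}^2=1-S_L(\sigma)$ converts this directly into the entropy drop. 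What this buys is a clean treatment of general (non-atomic) fine-grainings without invoking a disintegration theorem or a limiting argument over partitions --- the only technical input is vector-valued conditional expectation for a bounded function into a separable Hilbert space, which is unproblematic since $\norm{\rho'(x)}\leq 1$ pointwise. The one implicit dependency in both routes, shared with the paper, is the identification $S_L(\mathfrak B)=S_L(X,\mu,\rho)$ of the infimum over partitions with the integrated pointwise entropy; the paper asserts this just before the corollary, so you are entitled to use it.
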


So, any fine graining of a statistical ensemble of density matrices is an approximate equivalence, so long as the average linear entropy is low. 

%
%The linear entropy is concave (as seen in Lemma \ref{partition density matrix}), so that, if $b=b_1\vee b_2$  is a partition of $b$, separate from $\mathcal H_s$,   we have
%\[S_L(\rho(b))\geq S_L(b=b_1\vee b_2):=\frac {\prob (b_1)}{\prob(b)}S_L(\rho(b_1))+\frac{\prob(b_2)}{\prob(b)} S_L(\rho(b_2))\ ,\]
%and, Lemma \ref{partition density matrix} implies that so long as $\prob(b_i)\neq 0$, we have  equality in the above expression if and only if $\rho(b_1)=\rho(b_2)=\rho(b)$.  
%As a consequence, if $S_L(\rho(b))=0$, we have that $S_L(\rho(b_1))=0=S_L(\rho(b_2))$, and $\rho(b_1)=\rho(b_2)=\rho(b)$. 
%Thus, if there is a partition $\bigvee_i b_i=b$ with zero average linear entropy,  any fine-graining of this partition will still have zero linear entropy, so long as it is separate from $\mathcal H_s$. Moreover, we can consider both these partitions as giving the same statistical ensemble of pure states, because the probability of obtaining a given state $\rho(b_i)$ is identical in both cases. 
%
%It is not realistic to expect any partition to have zero average von Neumann entropy, but the above conclusion is also approximately true if the average von Neumann entropy is approximately $0$. Equation \ref{entropy estimate} implies that each $\rho(b)$ is approximated by a pure state $\ket{b}\bra{b}$ with
%\begin{equation}\label{pure estimate}\norm{\rho(b)-\ket{b}\bra{b}}\leq \frac{2S(\rho(b))}{\ln 2}\end{equation}
\begin{lemma}\label{pure approximation} Suppose that $(\mathfrak B,\phi)$ is separate from $\mathcal H_s$. Then, for any $\epsilon>0$  the statistical ensemble of density matrices $(X,\mu,\rho)$ from Lemma \ref{density ensemble} is approximated by a statistical ensemble of pure states $(X,\mu,\rho_{\text{pure}})$ with
\[\norm{\rho-\rho_{\text{pure}}}_2\leq \sqrt{S_L(\mathfrak B)}+\epsilon\]
\end{lemma}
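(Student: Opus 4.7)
The plan is to approximate $\rho$ by a simple-function ensemble of pure states in two steps: first use the infimum in the definition of $S_L(\mathfrak B)$ to find a finite partition whose associated coarse-grained ensemble $\rho_0$ is $L^2$-close to $\rho$, then replace each conditional density matrix by its top-eigenvector pure state. The two errors are controlled respectively by Corollary \ref{fine graining estimate} and the pointwise estimate (\ref{pure estimate}), and are combined with an $L^2$ triangle inequality.

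First I would set $\delta := (\epsilon/2)^2$ and, invoking $S_L(\mathfrak B) = \inf S_L(1 = \bigvee b_i)$, choose a finite partition $1 = \bigvee_{i=1}^N b_i$ in $\mathfrak B$, separate from $\mathcal H_s$, with $\sum_i \mu(b_i) S_L(\rho(b_i)) \le S_L(\mathfrak B) + \delta$. Let $\mathfrak B_0 \subset \mathfrak B$ be the finite Boolean subalgebra generated by $\{b_i\}$. Every partition of $1$ in $\mathfrak B_0$ is a coarsening of $\{b_i\}$, and coarsening cannot decrease averaged linear entropy by concavity of $S_L$, so $S_L(\mathfrak B_0) = \sum_i \mu(b_i) S_L(\rho(b_i))$. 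Since $(\mathfrak B_0,\phi) \hookrightarrow (\mathfrak B,\phi)$ is a fine-graining, Corollary \ref{fine graining estimate} yields $\norm{\rho - \rho_0}_2^2 = S_L(\mathfrak B_0) - S_L(\mathfrak B) \le \delta$, where $\rho_0 : X \to$ (density matrices) is the simple function equal to $\rho(b_i)$ on $S_{b_i}$.

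Next, for each $i$ I would pick a unit top eigenvector $a_i$ of $\rho(b_i)$ and set $\rho_{\text{pure},i} := \ket{a_i}\bra{a_i}$; by (\ref{pure estimate}), $\norm{\rho(b_i) - \rho_{\text{pure},i}}^2 \le S_L(\rho(b_i))$. I would then define the simple function $\rho_{\text{pure}} : X \to$ (pure states) to equal $\rho_{\text{pure},i}$ on $S_{b_i}$; measurability is immediate. Summing gives $\norm{\rho_0 - \rho_{\text{pure}}}_2^2 = \sum_i \mu(b_i) \norm{\rho(b_i) - \rho_{\text{pure},i}}^2 \le S_L(\mathfrak B_0) \le S_L(\mathfrak B) + \delta$, and the $L^2$ triangle inequality together with $\sqrt{a+b} \le \sqrt a + \sqrt b$ closes the argument:
\[
\norm{\rho - \rho_{\text{pure}}}_2 \le \sqrt{\delta} + \sqrt{S_L(\mathfrak B) + \delta} \le \sqrt{S_L(\mathfrak B)} + 2\sqrt{\delta} = \sqrt{S_L(\mathfrak B)} + \epsilon.
\]

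The actual insight --- and the reason the bound is essentially tight --- is that the infimum definition of $S_L(\mathfrak B)$ delivers a single partition that simultaneously controls \emph{both} sources of error: the quantity $S_L(\mathfrak B_0)$ that bounds $\norm{\rho_0 - \rho_{\text{pure}}}_2^2$ (how pure the conditional states are on average) differs from $S_L(\mathfrak B)$ by exactly the gap $\norm{\rho - \rho_0}_2^2$ (how much the coarse-graining lost). The only technical nuisance is confirming that this infimum can be approached by \emph{finite} partitions; for a countable near-optimal partition I would truncate to the first $N$ blocks and gather the tail into a single element whose contribution to the averaged linear entropy is at most its probability (since $S_L \le 1$), which can be made $<\delta/2$ by taking $N$ large.
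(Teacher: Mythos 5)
Your proof is correct and follows essentially the same route as the paper's: pick a near-optimal partition from the infimum defining $S_L(\mathfrak B)$, coarse-grain $\rho$ to the conditional density matrices $\rho(b_i)$, replace each by its top-eigenvector pure state using inequality (\ref{pure estimate}), and finish with the $L^2$ triangle inequality. The one (harmless, arguably cleaner) difference is that you bound the coarse-graining error $\norm{\rho-\rho_0}_2$ by the entropy gap via Corollary \ref{fine graining estimate}, whereas the paper instead refines the partition until generic $L^2$ simple-function approximation gives $\norm{\rho_s-\rho}_2\leq\epsilon/2$; you also make explicit the reduction to finite partitions, which the paper leaves implicit.
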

\begin{proof} Choose a partition $1=\bigvee_i b_i$ with $\sqrt{S_L\left(1=\bigvee_i b_i\right)}\leq \sqrt{S_L(\mathfrak B)}+\epsilon/2$.  There is a corresponding partition $X=\coprod_i S_{b_i}$. As $\rho$ is integrable and bounded, we can also choose our partition fine enough  that there exists some simple function $\rho_s$ which is constant on each $S_{b_i}$ such that
\[\norm{\rho_s-\rho}_2\leq  \epsilon/2\]
Moreover, we can take $\rho_s$ equal to $\rho(b_i)$ on $S_{b_i}$. Define the statistical ensemble $\rho_{pure}$ to be the pure state $\ket{b_i}\bra{b_i}$ satisfying inequality (\ref{pure estimate}) on $S_{b_i}$. We then have 

\[\begin{split}\norm{\rho-\rho_{\text{pure}}}_2\leq &\norm{\rho_s-\rho_{\text{pure}}}_2+ \epsilon/2
\\ & = \epsilon/2+\sqrt{\sum_i \mu(b_i)\norm{\rho(b_i)-\ket{b_i}\bra{b_i}}^2}
\\ & \leq \epsilon/2+\sqrt{\sum_i \mu(b_i) S_L(\rho(b_i))}
\\ & \leq \sqrt{S_L(\mathfrak B)}+\epsilon \end{split}\]

\end{proof}

With the above discussion in mind, when does a density matrix represent a statistical ensemble of pure states? Suppose that $\mathcal H_s$ describes an isolated quantum subsystem in some spacetime region $U$. We model the classical information available outside of $U$ as a classical $(\mathcal H,\psi)$--valued measure algebra $(\mathfrak B,\phi)$, generated by projection statements spatially separated from  $U$. So, $(\mathfrak B,\phi)$ is separated from $\mathcal H_s$, and using Lemma \ref{density ensemble} we get a statistical ensemble $(X,\mu,\rho)$ of density matrices. Moreover, this ensemble is  close to a statistical ensemble of pure states if the average linear entropy $S_L(\mathfrak B)$ is low; Lemma \ref{pure approximation}.  This statistical ensemble is not canonical, however using Heuristic \ref{algebra record replacement}, we expect any two choices of $(\mathfrak B,\phi)$ to have an approximate common fine-graining, leading to an approximate common fine graining of statistical ensembles of density matrices. Moreover, if the average linear entropy is low, Corollary \ref{fine graining estimate} implies that any fine-graining is an approximate equivalence. So, assuming that $S_L(\mathfrak B)$ is sufficiently small, we have an approximately canonical statistical ensemble of pure states, representing the classical knowledge of $\mathcal H_s$. Similarly, given an observer $O\in\mathfrak B$, if the ideal $\mathfrak I_O$ has $S_L(\mathfrak I_O)$ small, then the observer's knowledge of $\mathcal H_s$ is accurately modelled by a statistical ensemble of pure states. 

It is not necessarily true that there will exist a classical $(\mathcal H,\phi)$--valued measure algebra $(\mathfrak B,\phi)$ separate from $\mathcal H$ with low average linear entropy, so it is not necessarily true that there is a statistical ensemble of pure states approximating the classical knowledge of $\mathcal H_s$. Given a model $(\mathfrak M,\phi)$ for a quantum measurement with discrete outcomes $M_i\in\mathfrak M$ as described above Definition \ref{measurement model}, we expect the linear entropy of $\rho(M_i)$ to be approximately $0$. So, in this case, $S_L(\mathfrak M)\approx 0$. If $E\in \mathfrak M$ is the maximal element describing the experimental setup, it is then accurate to think of $\rho(E)$ as a statistical ensemble in the usual way.

\subsection{``Reality branches into separate possibilities upon measurement, as in some many-worlds interpretations.'' }
  At first glance, such an interpretation is appealing because it appears to circumvent wave-function collapse as a separate physical process. However, if we must specify exactly when and how reality branches, we are left describing a separate physical process, which we do not want to do. Accordingly, modern many-worlds interpretations are more nuanced, \cite{MWI,WallaceGeneral,WallaceBook,Vaidman},  as was Everett's original relative state formulation, \cite{Everett}. From our perspective, any belief in separate branches is an idealization of our fuzzier setting, where classical projection-statements need not be distinct, and any `branching' is local,  gradual, and not sharply-defined. An important point is that each classical projection-statement is local (as in Everett's original version \cite{Everett}), and so must correspond to a superposition of  branches that differ elsewhere in spacetime. Our natural probability is only useful if we make such a superposition of branches.

   One sophisticated  many-worlds interpretation is advocated by Griffiths, Omn\`es,  and Gell-Mann and Hartle using consistent or decohered histories \cite{GMH, StrongDecoherence, Griffiths, Omnes}. The general idea is to choose a sequence of times, and at each time, a set of mutually orthogonal projection operators summing to the identity --- the different projection operators correspond to different things that can happen at that time. A history is then a choice one of these projections at each time, or more generally, a sum of such choices. There are various definitions of when histories are consistent or decohere, so that probabilities may be consistently assigned to them, and given a decoherent set of histories,  the temptation is to say that we only experience one of these histories. A key difficulty with this interpretation is that it depends on the choice of projection operators, so we are left to search for canonical choices that will include the kind of classical history we experience.\footnote{Alternately, one could apply unfamiliar logical foundations as in \cite{Isham}.} For example, in \cite{StrongDecoherence}, Gell-Mann and Hartle reduce their choices by demanding a strong version of decoherence, and striving for a maximally fine `quasiclassical' set of decohering histories. One problem with finding canonical choices of projection operators is that any realistic mechanism of decoherence is \emph{approximate}, however authors generally require an \emph{exact} decoherence condition to apply usual probability theory. This problem can be circumvented by tweaking projections so that exact decoherence holds, however  such tweaking will thwart our aspirations for canonical projections. 
   
   How does our approach compare to decohered histories? We demand a canonical set of projection operators\footnote{So if we were to define a set of histories, we would restrict ourselves to choosing projections that are projection-statements.} --- and call them projection-statements. For this, we pay the price that our decoherence is necessarily \emph{approximate} instead of exact. This price means that we can not use usual probability theory, but   our theory of natural probability relies on decoherence not being exact.
 The histories we are interested in are small logical combinations of classical projection-statements, corresponding to extremely coarse histories. We make no attempt to find a canonical, maximally fine set of (approximately) decohered histories --- our natural probability is not based on the idea of choosing from a configuration space of possibilities, and from our perspective, searching for a  canonical, maximally fine space of alternate histories is searching for a possibly non-existent idealization. 
    
    \subsection{Relationship with quantum Darwinism}
    
Let us sketch the relationship between natural probability and Zurek's Quantum Darwinism, \cite{QD}. We follow Zurek when we demand that  information about a classical projection-statement is redundantly stored in several different places. Unlike Zurek, we do not use a fixed division of the quantum Hilbert space into systems, instead using projection-statements localised in particular regions in spacetime. This is similar, but we are interested how information from a spacetime region $U$ is recorded in regions within its causal future --- even if we assign a quantum subsystem to each spacetime region, these systems should only be regarded as separate if they are spatially separated. Zurek uses quantum mutual information as a convenient measure of how much information a fragment of the environment knows about a separate subsystem. We instead use the visibility of projection statements as a measure of how well they can be accessed within a given region.  

The major difference between natural probability and Zurek's Quantum Darwinism is our treatment of probability. In section IV of \cite{QD}, Zurek argues using symmetry that the usual probabilities may be assigned to measurement outcomes. He basis this on the idea that equal probabilities should be assigned to outcomes related by a symmetry. For us, Zurek's  symmetry argument fails,\footnote{There are similar symmetry arguments for obtaining the usual Born probability rule, such as an argument using decision theory \cite{Deutsch,WallaceBorn}, or the use of self locating uncertainty in  \cite{Sebens_2018}. Like in Gleason's theorem, \cite{Gle},  such arguments generally find that the quantum situation is better than classical probability, because the quantum situation is more constrained. } because equally probable projection-statements might still have differences in how well they are recorded, so there is no symmetry exchanging them that preserves all relevant structure.  This important difference is lost in an ideal limit such as in Section \ref{precise}. 

%    For us, information about a classical projection-statement $P$ will be stored as $P'$  with an error $\norm{(P-P')\psi}$. For $P'$ to meaningfully record information about $P$, we need our error $\norm{(P-P')\psi}$  to be small compared to $\norm{P\psi}$. We argued that our error $\norm{(P-P')\psi}$ depends on physical properties of $P$ unrelated to $\norm{P\psi}$, so for information about $P$ to be redundantly stored, we need $\norm{P\psi}$ not too small. This is the basis of our Natural probability: \emph{ $P$ can only be a classical projection statement if $\norm{P\psi}$ is not too small, depending on physical properties of $P$.} In section \ref{cma construction}, we argue that the numbers $\norm{P\psi}^{2}$ act approximately like probabilities, so it is convenient to use $\norm {P\psi}^{2}$ instead of $\norm{P\psi}$, especially when comparing the predictions of Natural probability with those of classical probability.   
    
   \section{Further questions}
   
   \subsubsection{Including gravity is difficult}
   \label{gravity}
To define classical projection-statements, we needed a  fixed spacetime so that we could say when a projection-statement was localised in a given spacetime region. We relied heavily on the assumption that spatially separated projection-statements commute, so needed to at least use the causal structure of spacetime. The problem with this is that Einstein's equations imply that the causal structure of spacetime depends on the distribution of mass. One could imagine moving a large mass around, depending on the outcome of a quantum measurement, thus affecting gravity and causal structure of spacetime. As such it is reasonable to expect that the causal structure of spacetime itself is an emergent classical approximation.

As such, we need to modify the definition of a classical projection-statement to make sense in a quantum theory including gravity. One could imagine a quantum theory of gravity on a fixed spacetime, where the spacetime metric is an observable. In such a theory, it would still make sense to say that a projection-statement is localised in a region of spacetime, but we must be more careful when asserting that two projection-statements are spatially separated. In particular, we could assume that there are gravitational projection-statements $G$ specifying properties of the metric in a spacetime region. We could also imagine that our Hilbert space splits as $\mathcal H\otimes\mathcal G$, so we have gravitational projection statements from $\mathcal G$ commuting with other types of projection-statements from $\mathcal H$.  Then if $G$ is a gravitational projection-statement roughly specifying the metric in a spacetime region $U$, we can talk about a collection of projection-statements $P_i$ within $U$ that are spatially separated using the metric information from $G$, so these projections commute when applied to $G\psi$. We can then repeat the arguments given earlier in this paper, with $G\psi$ playing the role of $\psi$. 

This gives a reasonable theory of classical projection-statements relative to $G$, but we should also include a suitable notion of $G$ being classical. One possibility is to require that $G$ be approximately recoverable from the structure of the classical projection-statements relative to $G$, just as the spacetime metric can be approximately recovered by indirect measurements. Whenever $P_i G\psi\approx P_iP_jG\psi$ there is a corresponding linear subspace of $\mathcal H$ on which $P_i=P_iP_j$. We could demand that there is some such collection of implications between classical projection-statements relative to $G$ and a corresponding linear subspace of  $\mathcal H$, such that orthogonal projection $\pi$ to this subspace satisfies $\pi\psi\approx G\psi$.

So, it is conceivable that there is a version of natural probability compatible with quantum gravity. With that said, the difficulties of quantum gravity are such that we leave this as a formidable exercise for the ambitious reader.

\subsubsection{Realistic derivation of classical projection-statements}

In Section \ref{Decoherence}, we used a very simple model to argue that, in some quantum systems,  classical projection-statements are created by decoherence-like processes. The obvious question is: \emph{Does this also happen in more realistic models of Nature?} There are many realistic treatments of decoherence,  \cite{survey,JoosZeh}, and Quantum Darwinism \cite{QDBM,QDinhazy,QDS,QDclassical}, but in this paper, we have not demonstrated that classical projection-statements are produced in the most accurate and generally accepted quantum models of Nature. 

Given a realistic quantum model of Nature,  it would  be interesting to rigorously demonstrate the emergence of almost deterministic classical mechanics. Some aspects are very well established --- for example, we can use the Wigner-Weyl transform to connect functions on phase space with wave functions, and the time evolution in this phase space formulation of quantum mechanics involves the Moyal bracket, which is a deformation of the usual Poisson bracket determining classical time evolution on phase space; \cite{Moyal_1949}. So, gaussian wavepackets will evolve approximately along classical trajectories in phase space. This in itself does not explain the emergence of classical mechanics; see for example \cite[Section III]{Zurekreview}. There has also been considerable work showing that separated wavepackets decohere; see, for example \cite[Section V]{Zurekreview}. For the emergence of classical mechanics, we would like to see classical projection-statements corresponding to suitable subsets of phase space of a mechanical system, together with a time evolution of these projection-statements corresponding with usual classical mechanics. 

Given a realistic quantum model of Nature, it would also be very interesting to demonstrate a rigorous and specific version of Heuristic \ref{classical cutoff}, analogous to the von Neumann entropy bound occurring in several realistic models; \cite{Bekenstein,Bousso_2000}.

\subsubsection{Is winning the lottery dangerous?}

Our theory of natural probability is based on the assertion that our perceptions are described by classical projection-statements. In this view, we are largely classical creatures, described by classical projection-statements. The justification for this assertion is that information we perceive appears to be independently accessible in many places at once. It does, however, beg the question: \emph{Why do we only perceive classical projection statements?} 

Heuristic \ref{classical cutoff}  asserts that, for each observer there is some cutoff probability $p_O$ so that, for $O$ to be a classical projection-statement describing the observer, $\prob(O)>p_O$. In effect, Heuristic \ref{classical cutoff} implies that if $\prob(O)<p_O$, the classical predictions from the statement $O$ would not be reliable, because they would be swamped by quantum noise. Such a mechanism for explaining observed frequencies has been proposed by Hanson, \cite{Hanson_2003}, who used the language of large worlds `mangling' small worlds.  As animals, we are continually relying on classical predictions --- not only to find food and move safely through our surroundings, but also so that our body functions as usual. This raises an apparently absurd question: is it \emph{dangerous} to be described by a low probability projection statement $O$? In classical probability, it may be foolish to act on low probability information, because you are most likely wrong about that information, but within natural probability, there is the additional possibility that this may also make independent predictions less reliable, including, in extreme cases, the prediction that your body will continue to function as expected.

\bibliographystyle{plain}
\bibliography{qref.bib}

\end{document}